\providecommand*{\Dashv}{%
  \mathrel{%
    \mathpalette\@Dashv\vDash
  }%
}
\newcommand*{\@Dashv}[2]{%
  \reflectbox{$\m@th#1#2$}%
}
\newtheorem{theorem}{Theorem}
\newtheorem{proposition}{Proposition}
\newtheorem{lemma}{Lemma}
\newtheorem{corollary}{Corollary}
\theoremstyle{definition}
\newtheorem{definition}{Definition}
\newtheorem{remark}{Remark}
\newtheorem{example}{Example}
\newcommand{\algLfin}{\textbf{\L}\mathbf{V}}
\newcommand{\domLfin}{{\L}V}
\newcommand{\ninv}{\mathord{\sim}}  
\newcommand{\logic}[1]{\mathrm{ #1}}
\newcommand{\alg}[1]{{\ensuremath{\boldsymbol{ #1}}}}
\newcommand{\Gsimleq}{\mathrm{G}\mystrut^{\text{\tiny $\leq$}}_{\sim}}
\newcommand{\Gnsimleq}{\mathrm{G}\mystrut^{\text{\tiny $\leq$}}_{n\sim}}
\newcommand{\Gasimleq}{\mathrm{G}\mystrut^{\text{\tiny $\leq$}}_{3\sim}}
\newcommand{\Gbsimleq}{\mathrm{G}\mystrut^{\text{\tiny $\leq$}}_{4\sim}}
\newcommand{\Gcsimleq}{\mathrm{G}\mystrut^{\text{\tiny $\leq$}}_{5\sim}}
\newcommand\mystrut{\rule{0pt}{7pt}}
\newcommand\mystrutsec{\rule{0pt}{12pt}}
\newcommand\mystrutsubsec{\rule{0pt}{10pt}}
\begin{document}

\title{
Degree-preserving G\"odel logics with an involution: intermediate logics and (ideal) paraconsistency\thanks{This paper is a humble tribute to our friend and colleague Arnon Avron and his outstanding contributions on nonclassical logics, proof theory and foundations of mathematics. \newline \mbox{} \newline {\bf This paper has been published in Ofer Arieli \& Anna Zamansky (eds.), {\em Arnon Avron on Semantics and Proof Theory of Non-Classical Logics}. Cham: Springer Verlag. pp. 107-139 (2021) }
}}
\author{Marcelo E. Coniglio$^1$, Francesc Esteva$^2$, Joan Gispert$^3$ and Lluis Godo$^2$ \\ \\
\small $^1$ Dept. of Philosophy - IFCH and  \\ \small  Centre for Logic, Epistemology and the History of Science, \\ \small University of Campinas, Brazil \\
\small {\tt coniglio@unicamp.br} \\
\small $^2$ Artificial Intelligence Research Institute (IIIA) - CSIC, Barcelona, Spain \\
\small {\tt \{esteva,godo\}@iiia.csic.es} \\
\small $^3$ Departament de Matem\`{a}tiques i Inform\`{a}tica, Universitat de Barcelona, Spain \\
\small {\tt jgispertb@ub.edu}
}
\date{}
\maketitle

\begin{abstract} In this paper we study intermediate logics between the logic $\Gsimleq$, the degree preserving companion of G\"odel fuzzy logic with involution $\rm{G}_\sim$ and classical propositional logic CPL, as well as the intermediate logics of their finite-valued counterparts $\Gnsimleq$. Although $\Gsimleq$ and  $\Gnsimleq$ are explosive w.r.t.\ G\"odel negation $\neg$, they are paraconsistent w.r.t.\ the involutive negation $\ninv$. We introduce the notion of saturated paraconsistency, a weaker notion than ideal paraconsistency, and we fully characterize the ideal and the saturated paraconsistent logics between $\Gnsimleq$ and CPL. We also identify a large family of saturated paraconsistent logics in the family of intermediate logics for degree-preserving finite-valued {\L}ukasiewicz logics.
\end{abstract}

\section{Introduction} \label{sect1}

Contradictions frequently arise in scientific theories, as well as in philosophical argumentation. In computer science, techniques for dealing with contradictory information need to be developed in areas such as logic programming, belief revision, the semantic web and artificial intelligence in general. Since classical logic --as well as many other non-classical logics-- trivialize in the presence of inconsistencies, it can be useful to consider logical systems tolerant to  contradictions in order to formalize such situations.

A  logic $L$ is said to be  {\em paraconsistent} with respect to a negation connective $\neg$ when it contains a $\neg$-contradictory but not trivial theory. Assuming that $L$ is (at least) Tarskian, this is equivalent to say that the $\neg$-explosion rule $$\frac{\, \varphi \hspace{4mm} \neg \varphi \,}{\psi}$$ is not valid in $L$. The main challenge for paraconsistent logicians is  defining logic systems in which not only a contradiction does not necessarily trivialize, but also allowing that useful conclusions can be derived from such inconsistent information.

The first systematic study of paraconsistency from the point of view of formal logic is due to da Costa, which introduces in 1963 a hierarchy of paraconsistent systems called $C_n$. This is why da Costa is considered one of the founders of paraconsistency. Under his perspective, propositions in a paraconsistent setting are `dubious' in the sense that, in general, a sentence and its negation can be hold simultaneously without trivialization. That is, it is possible to consider contradictory but nontrivial theories. Moreover,  it is possible to express (in every  system $C_n$) the fact that a given sentence $\varphi$ has a classical behavior w.r.t. the explosion law.
This approach to paraconsistency, in which the explosion law is recovered in a controlled way, was generalized by Carnielli and Marcos in~\cite{car:01} by means of the notion of {\em Logics of Formal Inconsistency} ({\bf LFI}s, in short). An {\bf LFI} is a paraconsistent logic (w.r.t. a negation $\neg$) having, in addition, an unary connective $\circ$ (a {\em consistency operator}), primitive or defined, such that any theory of the form $\{\varphi, \neg\varphi, {\circ}\varphi\}$ is trivial, despite $\{\varphi, \neg\varphi\}$ not being necessarily so. Of course, the main novelty whith respect to da Costa's systems $C_n$ is that the  consistency operator (which corresponds to the well-behavior operator) can now be a primitive connective, which allows to consider a more general and expressive theory of paraconsistency.
{\bf LFI}s have been extensively studied since then (for general references, consult  \cite{car:01,CCM,CC16}). Avron, together with his collaborators, has significantly contributed  to the development of {\bf LFI}s by introducing several new systems, besides the ones proposed in~\cite{car:01,CCM,CC16}, and by providing simple, effective and modular semantics based on non-deterministic matrices (Nmatrices) as well as elegant Gentzen-style proof methods for {\bf LFI}s, see for instance~\cite{ArieliAvron,ArieliAZ10,ArieliAZ11,ArieliAZ11a,AAZbook,Avr:05,Avr:07,Avr:09,Avron17,Avron19,AvronAZ10,avr:lev:01,avr:zam:11}.

According to da Costa, one of the main properties that a paraconsistent logic should have is being as close as possible to classical logic. That is,  a paraconsistent logic should retain as much as possible the classical inferences, and still allowing to have non-trivial, contradictory theories. A natural way to formalize this {\em desideratum} is by means of the notion of maximality of a logic w.r.t. another one. A (Tarskian and structural) logic $L_1$ is said to be {\em maximal} w.r.t. another logic $L_2$ if  both are
defined over the same signature, the consequence relation of $L_1$ is contained in that of $L_2$ (i.e., $L_2$ is an extension of $L_1$) and,  if $\varphi$  is a theorem of $L_2$ which is not derivable in $L_1$,  then the extension of $L_1$ obtained by adding $\varphi$ (and  all of its instances under uniform substitutions) as a theorem coincides with $L_2$. Hence, a `good' paraconsistent logic $L$ should be maximal w.r.t. classical logic CPL (presented over the same signature as $L$). As observed in~\cite{CoEsGiGo}, the notion of maximality can be vacuously satisfied when both logics ($L_1$ and $L_2$) have the same theorems.

In~\cite{ArieliAZ10}, Arieli, Avron and Zamansky propose an interesting notion of maximality w.r.t. paraconsistency: a paraconsistent logic is {\em maximally paraconsistent} if no proper extension of it is paraconsistent. Thus, they prove that
several well-known 3-valued logics such as Sette's P1 and da Costa and D'Ottaviano's  $\mathsf{J}_3$ are maximally paraconsistent. Note that both P1 and  $\mathsf{J}_3$ are also maximal w.r.t. CPL.

These strong features satisfied by logics such as  P1 and  $\mathsf{J}_3$ lead
Arieli, Avron and Zamansky  to introduce  in \cite{ArieliAZ11a} the notion of ideal paraconsistent logics.  Briefly,  a logic $L$ is called {\em ideal  paraconsistent} when it is maximally paraconsistent and maximal w.r.t.  to classical logic CPL (the formal definition of ideal paraconsistency will we recalled in Section~\ref{ideal}). One interesting problem is to find ideal paraconsistent logics, and in this sense~\cite{ArieliAZ11a} provides a vast variety of examples of ideal paraconsistent finite-valued logics, aside from P1 and $\mathsf{J}_3$.

As mentioned above, one of da Costa's requirements for defining reasonable paraconsistent logics is  maximality w.r.t. CPL. Many paraconsistent logicians (probably including Avron and his collaborators) would agree with the relevance of this feature. However, this position is by no means uncontroversial.
In \cite{WansOdin}, Wansing and Odintsov extensively  criticized that requirement. According to these authors, maximality w.r.t.\ classical logic is not a good choice. On the one hand,  the phenomenon of paraconsistency should be interpreted from an  informational perspective instead of considering epistemological or ontological terms. Indeed, the authors claim that ``logic should avoid as many ontological commitments as possible''.\footnote{\cite{WansOdin}, p. 179.}  To this end, they argue that, by definition, logic ``is committed to the existence of languages but not necessarily to the existence of language users''.\footnote{Ibid., p. 180.} This means that, despite the models for logics cannot avoid  linguistic entities, valid inferences should not refer to notions such as `knowledge', `belief states' of any other epistemic or doxastic subjects. Thus, it would be preferable to motivate a system of paraconsistent logic in terms of {\em information},  without appealing to epistemological or ontological commitments such as language users, epistemic subjects possessing mental states, etc.. For instance,  by considering that formulas in an inference process are  pieces of information, the fact that in a paraconsistent logic $\{A, \neg A\}$ does not entail $B$ can be read as `it is just not the case that $\{A, \neg A\}$ provides the information that $B$'. The following are some excerpts from  \cite{WansOdin}:

\begin{quotation}
``classical logic is not at all a natural reference logic for reasoning
about information and information structures. On the other hand, it is reasoning about
information that suggests paraconsistent reasoning.'''\footnote{Ibid., p. 181.}
\end{quotation}

\begin{quotation}
``one may wonder why exactly a {\em nonclassical} paraconsistent logic, if correct, should have a distinguished
status in virtue of being faithful to classical logic ``as much as possible''.''\footnote{Ibid., p. 181.}
\end{quotation}

\begin{quotation}
``Paraconsistency does deviate from logical orthodoxy, but it is not at all clear that classical logic indeed is the logical orthodoxy from which paraconsistent logics ought to deviate only minimally.'''\footnote{Ibid., p. 183.}
\end{quotation}

Although it could be argued against this emphatic perspective, it also seems that being maximal w.r.t. CPL should not be a necessary requirement for being an `ideal' (meaning `optimal') paraconsistent logic.\footnote{It is worth noting that, more recently, the authors have changed the terminology ``ideal paraconsistent logic'' in \cite{ArieliAZ11a} to ``fully maximal and normal paraconsistent logic'' e.g. in \cite{AAZbook}. According to them, they choose the latter ``to use a more neutral terminology'' (see~\cite[Footnote~9, p.~57]{AAZbook}).} This is why we propose in this paper  the notion of {\em saturated paraconsistent} logic, which is just a weakening of the concept of ideal paraconsistent logic, by dropping the requirement of maximality w.r.t.\ CPL. As we shall see along this paper, there are several interesting examples of saturated paraconsistent logics.

While paraconsistency deals with excessive or dubious information,  fuzzy logics were designed for reasoning with imprecise information;
in particular, for reasoning with propositions containing vague predicates.
Given that  both paradigms are able to deal with information -- unreliable, in the case of paraconsistent logics, and imprecise, in the case of fuzzy logics -- it seems reasonable to consider logics which combine both features, namely,  paraconsistent fuzzy logic. The first steps along this way were taken in~\cite{EEFGN}, where a consistency operator $\circ$ was defined in terms of the other connectives (for instance, by using the Monteiro-Baaz $\Delta$-operator) in several fuzzy logics.
In~\cite{CoEsGo:14} this approach was generalized to fuzzy  {\bf LFI}s extending the logic MTL of pre-linear (integral, commutative, bounded) residuated lattices, in which the consistency operator is primitive.

We have studied in different papers paraconsistent logics arising from the family of mathematical fuzzy logics, see e.g. \cite{EEFGN,CoEsGo:14,CoEsGo,CoEsGiGo}.  In particular, in \cite{EEFGN} the authors observe that even though all truth-preserving fuzzy logics $L$ are explosive, their degree-preserving companions $L^{\text{\tiny $\leq$}}$ (as introduced in \cite{BEFGGTV}) are paraconsistent in many cases. This provides a large family of paraconsistent fuzzy logics. In \cite{CoEsGo} the authors studied the lattice of logics between the $n$-valued {\L}ukasiewicz logics \L$_n$ and their degree-preserving companions \L$_n^{\text{\tiny $\leq$}}$. Although there are  many paraconsistent logics for each $n$, no one of them is ideal. However, in \cite{CoEsGiGo} the authors of this paper consider {a wide class of}  logics between  \L$_n^{\text{\tiny $\leq$}}$ and CPL, and in that case they indeed find and axiomatically characterize a family of ideal paraconsistent logics.

In this paper we study paraconsistent logics arising from G\"odel fuzzy logic expanded with an involutive negation G$_\sim$, introduced in  \cite{EsGoHaNa}, as well as from its finite-valued extensions G$_{n\sim}$.
It is well-known that G\"odel logic G coincides with its degree-preserving companion (since
G has the deduction-detachment theorem), but this is not the case for G$_\sim$. In fact, G$_\sim$ and $\Gsimleq$ are different logics, and moreover, while $\Gsimleq$ is explosive w.r.t.\ G\"odel negation $\neg$, it is paraconsistent w.r.t.\ the involutive negation $\sim$.\footnote{In fact, $\Gsimleq$ is then a {\em paradefinite}
logic (w.r.t.\ $\ninv$)  in the sense of Arieli and Avron
\cite{ArieliAvron}, as it is both paraconsistent and paracomplete,
since the law of excluded middle $\varphi \lor \ninv \varphi$ fails,
as in all fuzzy logics.
Logics with a negation which is both paraconsistent and paracomplete
were already considered in the literature under different names: {\em
non-alethic} logics (da Costa) and {\em paranormal} logics (Beziau).
} We also study the logics between $\Gnsimleq$ (the finite valued G\"odel logic with an involutive negation) and CPL, and we find that the ideal paraconsistent logics of this family are only the above mentioned 3-valued logic $\mathsf{J}_3$ and its 4-valued version $\mathsf{J}_4$,
introduced in \cite{CoEsGiGo}.
Moreover, we fully characterize the ideal and the saturated paraconsistent logics between $\Gnsimleq$ and CPL.

The paper is structured as follows. After this introduction, some basic  definitions and known results to be used along the paper will be presented. In Section~\ref{sect-G-tilde} we show that the logics between $\Gsimleq$ and CPL are defined by matrices over a $\rm{G}_\sim$-algebra with lattice filters, and in particular we study the logics defined by matrices over $[0,1]_\sim$ with order filters. In Section~\ref{sect-Gn-tilde} we study the case of finite-valued G\"odel logics with involution $\rm{G}_{n\sim}$, and we observe that $\rm{G}_{3\sim}$ and $\rm{G}_{4\sim}$ coincide with \L$_3$ and \L$_4$ (the 3 and 4-valued {\L}ukasiewicz logics) already studied in \cite{CoEsGiGo}. We prove that, in the general case, each finite $\rm{G}_{n\sim}$-algebra is a direct product of subalgebras of $\bf GV_{n\sim}$, the G\"odel chain of $n$ elements with the unique involution $\ninv$ one can define on it.
This result allows us to characterize the logics between $\Gnsimleq$ and CPL. In Section~\ref{ideal} the definition of saturated paraconsistent logic is formally introduced, and it is proved that between $\Gnsimleq$ and CPL there are only three saturated paraconsistent logics: two of them ($\mathsf{J}_3$ and $\mathsf{J}_4$) are already known and are in fact ideal paraconsistent, and there is only one that is saturated but not ideal paraconsistent, that we call $\mathsf{J}_3 \times \mathsf{J}_4$. Finally, in Section~\ref{sect-luka} we return to the study of finite-valued {\L}ukasiewicz logic and prove that in this framework there is a large family of saturated paraconsistent logics that are not ideal paraconsistent. Some concluding remarks are discussed in the final section.

\section{Preliminaries}

\subsection{Truth-preserving G\"odel logics} \label{truthpreserving}

This section is devoted to needed preliminaries on the G\"odel fuzzy logic G, its axiomatic extensions, as well as their expansions with an involutive negation. We present their syntax and semantics, their main logical properties and the notation we use throughout the article.

The language of G\"odel propositional logic is built as usual from a countable set of propositional variables $V$, the constant $\bot$
and the binary connectives $\land$ and $\to$. Disjunction and negation are respectively  defined as $\varphi \lor \psi := ((\varphi \to \psi) \to \psi) \land  ((\psi \to \varphi) \to \varphi)$ and $\neg \varphi := \varphi \to \bot$, equivalence is defined as $\varphi \leftrightarrow \psi := (\varphi \to \psi) \wedge (\psi \to \varphi)$, and the constant $\top$ is taken as $\bot \to \bot$.

The following are the {\em axioms\/} of $\rm{G}$:\footnote{This axiomatization comes from adding axiom (A7) to the axioms of H\'ajek's BL logic \cite{Hajek98}. Later it was shown that axioms (A2) and (A3) were in fact redundant, see \cite{BeCiHa}  for a detailed exposition and the references therein.} \\

\begin{tabular}{l l}
(A1) &$(\varphi \rightarrow \psi) \rightarrow ((\psi \rightarrow
\chi) \rightarrow (\varphi \rightarrow \chi))$
\\
(A2) &$(\varphi \wedge \psi) \rightarrow \varphi$
\\
(A3) &$(\varphi \wedge \psi) \rightarrow (\psi \wedge \varphi)$
\\
(A4a) &$(\varphi \rightarrow (\psi \rightarrow \chi ))
\to ((\varphi \wedge \psi) \rightarrow \chi)$
\\
(A4b) &$((\varphi \wedge \psi) \rightarrow \chi) \rightarrow
(\varphi\rightarrow (\psi \rightarrow \chi))$
\\
(A5) & $((\varphi \rightarrow \psi) \rightarrow \chi) \rightarrow
 (((\psi \rightarrow \varphi ) \rightarrow
\chi) \rightarrow \chi)$
\\
(A6) & $\bot \rightarrow \varphi$ 
\\
(A7) & $\varphi \to (\varphi \wedge \varphi)$
\\ \\
\end{tabular}
\ \\
The {\em deduction rule\/} of $\rm{G}$ is modus ponens.

As a many-valued logic, G\"odel logic is the axiomatic extension of H\'ajek's Basic Fuzzy Logic BL \cite{Hajek98} (which is the logic of continuous t-norms and their residua)  by means of the contraction axiom (A7).

Since the unique idempotent continuous t-norm is the minimum, this yields that G\"odel logic is strongly complete with respect to its standard fuzzy semantics that interprets formulas over the structure $[0, 1]_\mathrm{G} = ([0, 1], \min, \Rightarrow_\mathrm{G}, 0, 1)$,\footnote{Called {\em standard} G\"odel algebra.} i.e. semantics defined by truth-evaluations of formulas $e$ on $[0, 1]$, where $1$ is the only designated truth-value, such that $e(\varphi \land \psi) = \min(e(\varphi), e(\psi))$,  $e(\varphi \to \psi) = e(\varphi) \Rightarrow_\mathrm{G} e(\psi)$ and $e(\bot) = 0$, where $\Rightarrow_\mathrm{G}$ is the binary operation on $[0, 1]$ defined as
$$ x \Rightarrow_\mathrm{G} y = \left \{
\begin{array}{ll}
1, & \mbox{if } x \leq y \\
y, & \mbox{otherwise}.
\end{array}
\right .
$$
As a consequence, $e(\varphi \lor \psi) = \max(e(\varphi), e(\psi))$ and $e(\neg \varphi) = \neg_\mathrm{G} e(\varphi) = e(\varphi) \Rightarrow_\mathrm{G} 0$. By definition, $\Gamma \models_\mathrm{G} \varphi$ \ iff, for every evaluation $e$ over $[0, 1]_\mathrm{G}$, if $e(\gamma) =1$ for every $\gamma \in \Gamma$  then $e(\varphi)=1$.

G\"odel logic can also be seen as the axiomatic extension of intuitionistic propositional logic  by the prelinearity axiom $$(\varphi \to \psi) \lor (\psi \to \varphi).$$  Its algebraic semantics is therefore given by the variety of prelinear Heyting algebras, also known as G\"odel algebras. A G\"odel algebra is thus a (bounded, integral, commutative) residuated lattice ${\bf A} = (A, \land, \lor, *, \Rightarrow, 0, 1)$ such that the monoidal operation $*$ coincides with the lattice meet $\land$,  
and such that the pre-linearity equation $$(x \Rightarrow y) \lor (y \Rightarrow x) = 1 $$
is satisfied, where $x \lor y  = ((x \Rightarrow y) \Rightarrow y) * ((y \Rightarrow x) \Rightarrow x)$). G\"odel algebras are locally finite, i.e. given a G\"odel algebra $\bf A$ and a finite set $F \subseteq A$, the G\"odel subalgebra generated by $F$ is finite as well.

 It is also well-known that the axiomatic extensions of G\"odel logic correspond to its finite-valued counterparts. If we replace the unit interval $[0, 1]$ by the truth-value set
$GV_n = \{0, 1/(n-1), \ldots, (n-2)/(n-1), 1\}$ in the standard G\"odel algebra $[0, 1]_G$  then the structure ${\bf GV_n} = (GV_n, \min, \Rightarrow_\mathrm{G}, 0, 1)$ becomes the ``standard'' algebra for the $n$-valued G\"odel logic G$_n$, that is the axiomatic extension of G with the axiom \\

 \begin{tabular}{l l}
(A$_{\mathrm{G}_n}$) & $ (\varphi_1 \to \varphi_2) \lor \ldots \lor (\varphi_n \to \varphi_{n+1})$ \\
\end{tabular}
\mbox{} \\

\noindent By definition, $\Gamma \models_{\mathrm{G}_n} \varphi$ \ iff, for every evaluation $e$ over $GV_n$, if $e(\gamma) =1$ for every $\gamma \in \Gamma$ then $e(\varphi)=1$.  In fact the logics G$_n$ are all the axiomatic extensions of G, and for each $n$, G$_n$ is an axiomatic extension of G$_{n+1}$, where G$_2$ coincides with CPL. Thus the set of axiomatic extensions of $\rm{G}$ form a chain of logics (and of the corresponding varieties of algebras) of strictly increasing strength:  $${\rm G} < \ldots < {\rm G}_{n+1} < {\rm G}_n < \ldots < {\rm G}_3 < {\rm G}_2 = {\rm CPL}$$
where $L < L'$ denotes that $L'$ is an axiomatic extension of $L$.

Since the negation in G\"odel logics is a pseudo-complementation and not an involution, in \cite{EsGoHaNa} the authors investigate the residuated fuzzy logics arising from  continuous t-norms without non trivial zero divisors and extended with an involutive negation. In particular, they consider the extension of G\"odel logic $\rm{G}$ with an involutive negation $\sim$, denoted as $\rm{G}_\sim$, and axiomatize it.

The {intended semantics\/}
of the $\ninv$ connective on the real unit interval $[0, 1]$ is
an arbitrary 
order-reversing involution $n:[0,1] \to [0,1]$, i.e.\ satisfying $n(n(x)) = x$ and $n(x) \le n(y)$ whenever $x \ge y$.

It turns out that in $\rm{G}_\sim$, with both
negations, $\neg$ and $\ninv$, the projection Monteiro-Baaz connective $\Delta$ is
definable as 
\begin{center}
$\Delta \varphi := \neg \ninv  \varphi$,
\end{center}
and whose semantics on $[0, 1]$ is given by the mapping $\delta: [0,1] \to [0, 1]$ defined as $\delta(1) = 1$ and $\delta(x) = 0$ for $x < 1$.

Axioms of $\rm{G}_{\ninv}$ are those of $\rm{G}$ plus:\footnote{These are the original axioms from  \cite{EsGoHaNa}, see again \cite{BeCiHa} and the references therein for a shorter axiomatization.} \\

\begin{tabular}{l l l}
$(\ninv 1)$ &$\ninv \ninv \varphi \leftrightarrow \varphi$ &
{\sl (Involution)} \\
$(\ninv 2)$ &$\neg \varphi \to \ninv \varphi$& \\
$(\ninv 3)$ &$\Delta(\varphi \to \psi) \to \Delta(\ninv \psi \to \ninv
\varphi)$
& {\sl (Order Reversing)} \\
$(\Delta 1)$ & $\Delta \varphi \vee \neg \Delta \varphi$ &\\
$(\Delta 2)$ & $\Delta (\varphi \vee \psi) \to (\Delta \varphi \vee \Delta
\psi)$&
\\
$(\Delta 5)$ & $\Delta (\varphi \to \psi) \to (\Delta \varphi \to \Delta
\psi)$&
\\ \\
\end{tabular}
\ \\
and inference rules of $\rm{G}_{\ninv}$ are  {\em modus ponens\/} and {\em necessitation\/} for $\Delta$:
$$\frac{\varphi, \varphi \to \psi}{\psi} \hspace*{1cm} \frac{\varphi}{\Delta \varphi}$$

$\rm{G}_{\ninv}$ is an algebraizable logic, whose equivalent algebraic semantics is the quasivariety of $\rm{G}_{\ninv}$-algebras, defined in the natural way, and generated by the class of its linearly ordered members. Among them, the so-called {\em standard} G$_{\ninv}$-algebra, denoted $[0, 1]_{\rm{G}_\sim}$, is the algebra on the real
interval $[0, 1]$ with G\"{o}del truth functions extended by the involutive negation $\ninv x = 1 - x$. This G$_{\ninv}$-chain generates the whole quasi-variety of $\rm{G}_{\ninv}$-algebras. In fact, we have a strong {\em standard} completeness result for G$_{\ninv}$, see \cite{EsGoHaNa,EsGoMa}: for any set $\Gamma \cup \{\varphi\}$ of $\rm{G}_{\ninv}$-formulas, $\Gamma \vdash_{\rm{G}_{\ninv}} \varphi$ iff $\Gamma \models_{\rm{G}_{\ninv}} \varphi$, where the latter means: for every evaluation $e$ over $[0, 1]_{\rm{G}_\sim}$, if $e(\gamma) =1$ for every $\gamma \in \Gamma$ then $e(\varphi)=1$.

Finally,  let us mention that, while G enjoys the usual deduction-detachment theorem (i.e. $\Gamma \cup\{\varphi\} \vdash_{{\rm G}} \psi$ iff $\Gamma \vdash_{{\rm G}} \varphi \to \psi$),
this is not the case for ${\rm G}_{\ninv}$, that has only the following form of $\Delta$-deduction theorem:
$\Gamma \cup\{\varphi\} \vdash_{{\rm G}_{\ninv}} \psi$ iff $\Gamma \vdash_{{\rm G}_{\ninv}} \Delta\varphi \to \psi$. See also the handbook chapter \cite{EsGoMa} for further properties of ${\rm G}_{\ninv}$.

On the other hand, as in the case of G\"odel logic, one can also consider the logics ${\rm G}_{n\sim}$ for each $n \geq 2$, the finite-valued counterparts of ${\rm G}_\sim$. Namely, ${\rm G}_{n\sim}$ can be obtained as the axiomatic extension of ${\rm G}_\sim$ with the axiom  (A$_{{\rm G}_n}$),\footnote{Equivalently, as the expansion of ${\rm G}_n$ with $\sim$ along with the axioms $(\ninv 1)$-$(\ninv 3)$, $(\Delta 1)$-$(\Delta 3)$, and the necessitation rule for $\Delta$.} and can be shown to be complete  with respect to its intended algebraic semantics, the variety of algebras generated by the linearly ordered algebra  $\bf GV_{n\sim}$ obtained in turn by expanding $\bf GV_n$ with the involutive negation ${\sim} x = 1-x$, the only involutive order-reversing mapping that can be defined on $GV_n$. Thus, for any set $\Gamma \cup \{\varphi\}$ of ${\rm G}_{n\sim}$-formulas, $\Gamma \vdash_{{\rm G}_{n\sim}} \varphi$ iff $\Gamma \models_{{\rm G}_{n\sim}} \varphi$, where the latter means: for every evaluation $e$ over the expansion of $\bf GV_n$ by ${\sim}$, if $e(\gamma) =1$ for every $\gamma \in \Gamma$  then $e(\varphi)=1$. Clearly, ${\rm G}_{2\sim} = {\rm CPL}$. The graph of axiomatic extensions of ${\rm G}_{n\sim}$ is depicted in Fig.~\ref{G4}, where edges denote extensions. It can be observed that, if $n$ is even  then ${\rm G}_{n\sim}$ is an extension of ${\rm G}_{m\sim}$ for any $m > n$, while if $n$ is odd, ${\rm G}_{n\sim}$ is an extension of ${\rm G}_{m\sim}$ only for those $m > n$ being odd as well.  Also, note that, in the figure, ${\rm G}^-_\sim$ denotes the extension of ${\rm G}_\sim$ with the axiom \\

\begin{tabular}{ll}
(NFP) & $\ninv\Delta(\varphi \leftrightarrow \ninv \varphi)$
\end{tabular}
 \mbox{ } \\

 \noindent that captures the condition that the involutive negation does not have a fixpoint, a condition satisfied by all the logics ${\rm G}_{n\sim}$ with $n$ even.

\begin{figure}[h] \label{G_sim}
\begin{center}
\includegraphics[width=7cm]{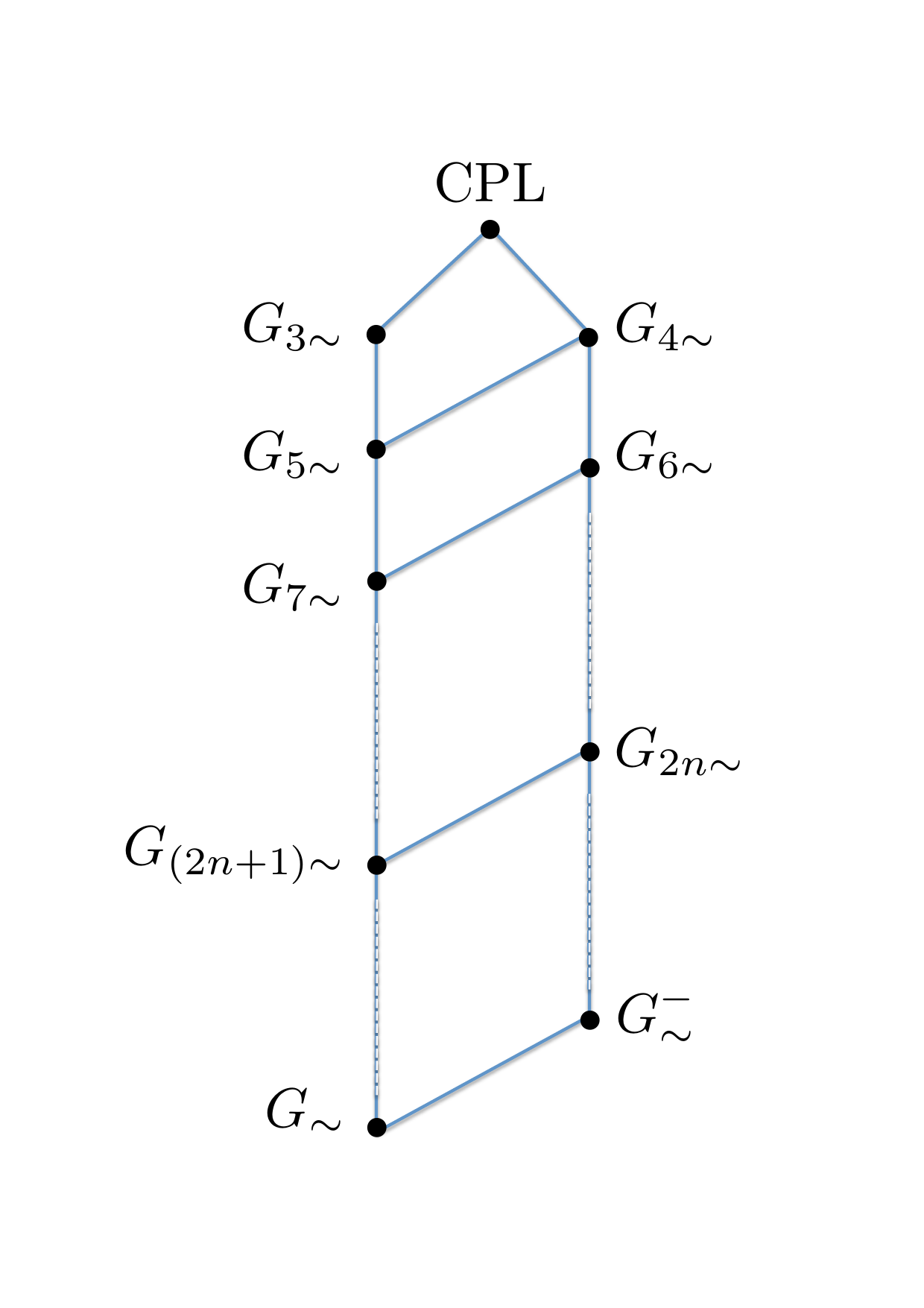}
\caption{Graph of axiomatic extensions of  $\rm{G}_{\sim}$. }
\label{G4}
\end{center}
\end{figure}

\subsection{Degree-preserving G\"odel logics with involution}

Main logics studied in Mathematical Fuzzy Logic are (full) truth-preserving fuzzy logics, like the G\"odel logics introduced in the previous section.  But we can also find in the literature companion logics that preserve degrees of truth, see e.g.~\cite{Font-GTV,BEFGGTV}. {It has been argued  in~\cite{Font} that this approach is more coherent with the commitment of many-valued logics to truth-degree semantics because all values play an equally important role in the corresponding notion of consequence.}
Namely, given a fuzzy logic $\logic{L}$,\footnote{For practical purposes, we can assume in this paper that $\logic{L}$ is an axiomatic extension of H\'ajek's BL logic.  } one can introduce a variant of $\logic{L}$ that is usually denoted  $\logic{L}^{\mbox{\tiny $\leq$ }}$,  whose associated  deducibility relation has the following semantics: for every set of formulas $\Gamma\cup\{\varphi\}$, \\

\noindent \begin{tabular}{lll}
$\Gamma\vdash_{\logic{L}^{\mbox{\tiny $\leq$ }}}\varphi$ &iff & for every $\logic{L}$-chain $\alg{A}$, every $a \in A$, and every $\alg{A}$-evaluation $e$, \\
& & if $a \leq e(\psi)$ for every $\psi \in \Gamma$, then $a \leq e(\varphi)$.
\end{tabular}
\mbox{} \\

\noindent For this reason $\logic{L}^{\mbox{\tiny $\leq$ }}$ is known as a fuzzy logic {\em preserving degrees of truth}, or the {\em degree-preserving companion} of $\logic{L}$. 
It is not difficult to show that $\logic{L}$ and $\logic{L}^{\mbox{\tiny $\leq$ }}$ have the same theorems and also that for every finite set of formulas $\Gamma \cup \{\varphi\}$:
$$ \Gamma \vdash_{\logic{L}^\leq} \varphi \mbox{ iff } \ \vdash_{\logic{L}} \Gamma^\wedge \to \varphi$$
where $\Gamma^\wedge$ means $\gamma_1 \wedge \ldots \wedge \gamma_k$ for $\Gamma = \{\gamma_1,\ldots, \gamma_k\}$ (when $\Gamma$ is empty then $\Gamma^\wedge$ is~$\top$).

{
\begin{remark} It is worth noting that the idea of degree-preserving consequence relations is already present in the context of (classical) modal logic. As it is well-known, under the usual Kripke relational semantics one can consider in modal logic two notions of consequence relation: a {\em local}  and a {\em global} one.\footnote{Given a class of Kripke models,
a formula $\varphi$ follows {\em locally} from a set $\Gamma$ of formulas if, for any Kripke model $M$ in the class and every world $w$ in $M$, $\varphi$ is true in $\langle M,w\rangle$ whenever every formula in $\Gamma$ is true in $\langle M,w\rangle$ as well.  On the other hand,  $\varphi$ follows {\em globally} from $\Gamma$ in the class if, for every Kripke model $M$,  $\varphi$ is true in $\langle M,w\rangle$ for every $w$ whenever every formula in $\Gamma$ is true in $\langle M,w\rangle$ for every $w$.} But modal logics have also been given algebraic semantics by means of the so-called modal algebras. Given such a modal algebra, one can define associated truth-preserving and degree-preserving consequence relations, in an analogous way as done above for a given linearly-ordered L-algebra. It is immediate to see that, for modal logics, local Kripke semantics corresponds to degree-preserving algebraic semantics, while global semantics corresponds to truth-preserving semantics, see e.g. \cite[Defs. 1.35 and 1.37]{modalbook}.
\end{remark}
}

As regards to axiomatization, the logic $\logic{L}^{\mbox{\tiny $\leq$ }}$ admits a Hilbert-style axiomatization having  the same axioms as $\logic{L}$ and the following deduction rules \cite{BEFGGTV}:
\begin{description}
\item[(Adj-$\wedge$)] from $\varphi$ and $\psi$ derive $\varphi\wedge\psi$
\item[(MP-$r$)] if $\vdash_{\logic{L}}\varphi\to\psi$, then from $\varphi$ and $\varphi \to \psi$, derive $\psi$
\end{description}
Note that (MP-$r$) is a restricted form of the Modus Ponens rule, it is only applicable when $\varphi\to\psi$ is a  theorem of $\logic{L}$. 

Since G\"odel logic G enjoys the deduction-detachment theorem, a key observation is that ${ \rm G}^{\mbox{\tiny $\leq$}} = {\rm G}$. However, the case is different for the expansion of G with an involutive negation, since G$_\sim$ does not satisfy the usual deduction-detachment theorem, and hence G$_\sim$ and $\Gsimleq$ are different logics. Moreover, while  $\Gsimleq$ keeps being $\neg$-explosive, it is $\sim$-paraconsistent. Indeed, there are $\varphi, \psi$ such that $\varphi \land \ninv \varphi \not\vdash_{\Gsimleq} 
\psi$. Take, for instance $\varphi$ and $\psi$ as being two different propositional variables, and $e$ a truth-evaluation over $[0, 1]_{\rm{G}_\sim}$ such that $e(\varphi)=\frac{1}{2}$ and $e(\psi) < \frac{1}{2}$.

As for the axiomatization of $\Gsimleq$, we need to consider an extra rule regarding $\Delta$. As shown in \cite{EEFGN}, a complete Hilbert-style axiomatization for  $\Gsimleq$ can be obtained by the axioms of G$_\sim$, the previous rules {\bf (Adj-$\wedge$)} and {\bf (MP-$r$)},\footnote{For $\logic{L} = {\rm G}_\sim$.} together with the following restricted form of the usual necessitation rule for $\Delta$:
\begin{description}
\item[($\Delta$Nec-$r$)] if $\vdash_{{\rm G}_\sim} \varphi$, then from $\varphi$ derive $\Delta \varphi$
\end{description}

Finally, let us consider the logics  $\Gnsimleq$, the degree-preserving companions of the finite-valued logics $\Gsimleq$, defined in the obvious way as above for $L = {\rm G}_{n \sim}$.
Similarly to $\Gsimleq$,  $\Gnsimleq$ also admits the following Hilbert-style axiomatization: $\Gnsimleq$ has as axioms those of ${\rm G}_{n \sim}$, and as rules, the rule {\bf (Adj-$\wedge$)} and  the following restricted rules:

\begin{description}
\item[(MP-$r$)] if $\vdash_{{\rm G}_{n \sim}} \varphi\to\psi$, then from $\varphi$ and $\varphi \to \psi$, derive $\psi$
\item[($\Delta$Nec-$r$)] if $\vdash_{{\rm G}_{n \sim}}\varphi$, then from $\varphi$ derive $\Delta \varphi$
\end{description}

\section{Logics defined by matrices over $[0,1]_{G \sim}$ by means of order filters} \label{sect-G-tilde}

By  a {\em logical matrix} we understand a pair $\langle {\bf A},F \rangle$ where {\bf A} is an algebra and $F$ is a subset of the domain $A$ of {\bf A}.  The logic $L(M)$ 
defined by the matrix $M = \langle {\bf A},F \rangle$ is obtained by stipulating, for any set of formulas $\Gamma \cup \{\varphi\}$, \\

\begin{tabular}{ lll }$ \Gamma \vdash_{L(M)} \varphi$ & if & for every evaluation $e$ on $\bf A$, \\ & & if $e(\gamma) \in F$ for every $\gamma \in \Gamma$, then $e(\varphi) \in F$.
\end{tabular} \ \\

\noindent On the other hand, the logic $L({\cal M})$ determined by a class of matrices $\cal M$ is defined as the intersection of the logics defined by all the matrices in the family. A logic is said to be a {\em matrix logic} if it is of the form $L({\cal M})$ for some class of matrices $\cal M$.\\

\noindent {\bf Notation:} {\em In the rest of the paper, without danger of confusion and for the sake of a lighter notation, we will often  identify a matrix $M$ or a set of matrices $\cal M$ with their corresponding logics $L(M)$ and $L({\cal M})$}.   \\

As proved in \cite{BEFGGTV} for logics of residuated lattices, one can show that $\Gsimleq$,  the degree-preserving companion of G$_\sim$,  is not algebraizable in the sense of Block and Pigozzi and thus it has no algebraic semantics. But it has a semantics via matrices. Indeed,
$\Gsimleq$ is the logic defined by the set of matrices $${\cal M}_{{\rm G}_\sim} = \{\langle {\bf A}, F\rangle : {\bf A} \mbox{ is a } {\rm G}_{\sim}\mbox{-algebra and } F \mbox{ is a lattice filter of } {\bf A}\}.$$
Using similar arguments as in the proof of \cite[Theorem 2.12]{BEFGGTV}, in fact we can also prove that $\Gsimleq$ is complete
with respect to a subset of ${\cal M}_{{\rm G}_\sim}$, namely the set of matrices over the standard G$_\sim$-algebra
$${\cal M}_{[0, 1]} = \{\langle [0,1]_{{\rm G}_\sim}, F\rangle :  F \mbox{ is an order filter of } [0,1]\}.$$

Next, we study the relationships among all the logics defined by matrices from ${\cal M}_{[0, 1]}$, i.e.\ matrices over the algebra $[0,1]_{\rm{G}_\sim}$ by order filters, identifying which ones are paraconsistent.
Actually, the order filters on $[0,1]_{\rm{G}_\sim}$ are the following sets: $F_{[a} = \{x\in [0,1] : x\geq a\}$ for all $a \in (0, 1]$, and $F_{(a} = \{x\in [0,1] : x > a\}$ for all $a \in [0, 1)$. Abusing the notation, we will denote the corresponding logics as
$${\rm G}^{[a}_\sim = \langle [0,1]_{\rm{G}_\sim}, F_{[a}\rangle \mbox{ and } {\rm G}^{(a}_\sim=\langle [0,1]_{\rm{G}_\sim}, F_{(a} \rangle .$$
The consequence relations corresponding to these logics will be respectively denoted by $\vdash_{[a}$ and $\vdash_{(a}$, while $\vdash^f_{[a}$  and $\vdash^f_{(a}$ will denote the finitary companions of $\vdash_{[a}$ and $\vdash_{(a}$, respectively.\footnote{Recall that the finitary companion of a logic $(L, \vdash)$ is given by $(L, \vdash^f)$ where, for every $\Gamma \cup \{\varphi\} \subseteq L$, $\Gamma \vdash^f \varphi$ iff there exists a finite $\Gamma_0 \subseteq \Gamma$ such that  $\Gamma_0 \vdash \varphi$.}
We will write $\vdash_{1}$ and $\vdash^f_{1}$ instead of $\vdash_{[1}$ and $\vdash^f_{[1}$, respectively. 

Some of the logics $\vdash_{[a}$ and $\vdash_{(a}$ are in fact finitary. This is shown in the next lemma. 

\begin{lemma}  The logic $\vdash_1$ is finitary. Moreover, the logics $\vdash_{(1/2}, \vdash_{[1/2}$ and $\vdash_{(0}$ are equivalent, as deductive systems, to $\vdash_1$ and hence they are finitary as well. Therefore all these logics coincide with their finitary companions $\vdash^f_1, \vdash^f_{(1/2}, \vdash^f_{[1/2}$ and $\vdash^f_{(0}$ respectively.
\end{lemma}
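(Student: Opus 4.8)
The plan is to treat $\vdash_1$ as the base case and reduce each of the other three logics to it by means of a \emph{definable crisp projection}: a term of ${\rm G}_\ninv$ whose truth function on $[0,1]_{{\rm G}_\ninv}$ sends every value in the corresponding order filter to $1$ and every value outside it to $0$. Once such projections are available, finitariness of $\vdash_1$ propagates to the others, and the translations witness the claimed equivalence as deductive systems.

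First I would observe that $\vdash_1$ is exactly the consequence relation determined by the matrix $\langle [0,1]_{{\rm G}_\ninv}, \{1\}\rangle$ (note $F_{[1}=\{1\}$), which is nothing but $\models_{{\rm G}_\ninv}$. By the strong standard completeness of ${\rm G}_\ninv$ recalled in Section~\ref{truthpreserving}, $\models_{{\rm G}_\ninv}$ coincides with the syntactic relation $\vdash_{{\rm G}_\ninv}$, and the latter is presented by a Hilbert calculus whose rules (modus ponens and $\Delta$-necessitation) are finitary. Hence $\vdash_1$ is finitary. This step is where the genuinely non-trivial input enters, since finitariness of a semantically defined consequence is not evident a priori.

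Next comes the key idea. For each remaining filter I would exhibit a projection term built only from the connectives of ${\rm G}_\ninv$, namely
$$\sigma_{(1/2}\varphi := \neg\Delta(\varphi\to\ninv\varphi), \qquad \sigma_{[1/2}\varphi := \Delta(\ninv\varphi\to\varphi), \qquad \sigma_{(0}\varphi := \ninv\Delta\ninv\varphi,$$
and verify by direct computation (using $\delta(1)=1$, $\delta(x)=0$ for $x<1$, and $\ninv x = 1-x$) that for every evaluation $e$ one has $e(\sigma_{(1/2}\varphi)=1$ iff $e(\varphi)>1/2$, $e(\sigma_{[1/2}\varphi)=1$ iff $e(\varphi)\geq 1/2$, and $e(\sigma_{(0}\varphi)=1$ iff $e(\varphi)>0$, each term taking value $0$ otherwise. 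Writing $\sigma_F$ for the projection attached to the filter $F\in\{F_{(1/2},F_{[1/2},F_{(0}\}$, the crucial point is that $e(\psi)\in F$ iff $e(\sigma_F\psi)=1$. Since both sides quantify over the \emph{same} evaluations $e$, this yields
$$\Gamma\vdash_F\varphi \quad\Longleftrightarrow\quad \sigma_F[\Gamma]\vdash_1\sigma_F\varphi,$$
where $\sigma_F[\Gamma]=\{\sigma_F\gamma:\gamma\in\Gamma\}$. This single-formula translation witnesses that $\vdash_F$ and $\vdash_1$ are equivalent as deductive systems, and finitariness transfers: from $\Gamma\vdash_F\varphi$ we get $\sigma_F[\Gamma]\vdash_1\sigma_F\varphi$, hence by finitariness of $\vdash_1$ some finite $\sigma_F[\Gamma_0]\vdash_1\sigma_F\varphi$ with $\Gamma_0\subseteq\Gamma$ finite, whence $\Gamma_0\vdash_F\varphi$. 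Thus $\vdash_{(1/2},\vdash_{[1/2},\vdash_{(0}$ are all finitary, and a finitary logic coincides with its finitary companion, which gives the final assertion.

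I expect the main obstacle to be concentrated in two places. The first is the justification that $\vdash_1$ is finitary, which rests on the completeness theorem for ${\rm G}_\ninv$ rather than on anything elementary. The second is the correct design and verification of the projection terms $\sigma_F$: one must check that each term is genuinely $\{0,1\}$-valued and cuts the unit interval exactly at the filter's threshold, handling strictness correctly at the boundary point $1/2$ (for $F_{(1/2}$ versus $F_{[1/2}$) and at $0$ (for $F_{(0}$), so that the equivalence $\Gamma\vdash_F\varphi \Leftrightarrow \sigma_F[\Gamma]\vdash_1\sigma_F\varphi$ holds exactly. Everything else is routine bookkeeping with the $\min$, $\Rightarrow_{\mathrm{G}}$ and $\ninv$ truth functions.
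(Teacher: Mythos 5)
Your first step and your translation terms are sound, and they match the paper's strategy: the paper also derives finitariness of $\vdash_1$ from the finitary axiomatization of ${\rm G}_{\ninv}$ plus strong standard completeness, and it also uses filter-characterizing formulas (its choices are $\varphi^{*_1} := (\ninv\varphi\to\varphi)\land\neg\Delta(\varphi\leftrightarrow\ninv\varphi)$, $\varphi^{*_2} := \ninv\varphi\to\varphi$, $\varphi^{*_3} := \neg\neg\varphi$; yours are crisp variants that work equally well). Your derivation of the biconditional $\Gamma\vdash_F\varphi \Leftrightarrow \sigma_F[\Gamma]\vdash_1\sigma_F\varphi$ and the transfer of finitariness through it are correct.

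However, there is a genuine gap in the ``equivalent as deductive systems'' claim, which is part of the lemma's statement. You assert that the single faithful translation $\sigma_F$ ``witnesses that $\vdash_F$ and $\vdash_1$ are equivalent as deductive systems,'' but a one-directional faithful interpretation does not, by itself, constitute equivalence in the Blok--Pigozzi sense that the paper invokes (and cites). Equivalence requires, in addition, a translation in the converse direction and the verification that the two translations are mutually inverse up to inter-derivability in the respective systems. The paper supplies exactly these missing pieces: its condition (ii) interprets $\vdash_1$ back into each $\vdash_F$ via $\Delta$, namely $\Gamma\vdash_1\varphi$ iff $\Gamma^{\Delta}\vdash_F\Delta\varphi$ where $\Gamma^\Delta = \{\Delta\psi : \psi\in\Gamma\}$, and its condition (iii) checks the composition laws $\psi\dashv\vdash_F\Delta(\psi^{*})$ and $\varphi\dashv\vdash_1(\Delta\varphi)^{*}$. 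Without something of this kind the equivalence claim fails in general: double-negation translations faithfully embed classical logic into intuitionistic logic, yet those two systems are not equivalent. The fix in your setting is easy --- take $\Delta$ as the converse translation and verify the two inter-derivability conditions for your $\sigma_F$ --- but as written your argument proves only the finitariness assertions of the lemma, not the equivalence assertion; it happens that finitariness needs only the one-way interpretation, which is why your conclusion about the finitary companions still stands.
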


\begin{proof}
\begin{itemize}

\item Since ${\rm G}_{\ninv}$ has a finitary axiomatization (see previous Section \ref{truthpreserving}) and it is strongly standard complete w.r.t. to $\vdash_{1}$, then $\vdash_{1}$ is finitary and  coincides with $\vdash^f_{1}$. 

\item We prove that, in fact, $\vdash_{(1/2}, \vdash_{[1/2}$ and $\vdash_{(0}$ are all of them equivalent to $\vdash_1$, in the sense of Blok and Pigozzi \cite{blok:pig:01}. Indeed, for each formula $\varphi$, define the following transformations: 

- $\varphi^{*_1} := (\ninv \varphi \to \varphi) \land \neg \Delta (\varphi \leftrightarrow \ninv \varphi)$, $\varphi^{*_2} := \ninv \varphi \to \varphi$, $\varphi^{*_3} := \neg\neg \varphi$.

Further, if  $\Gamma$ is a set of formulas, define $\Gamma^* := \{ \psi^* \mid \psi \in \Gamma \}$ for $* \in \{*_1, *_2, *_3  \}$. 
It is easy to check that  for any ${\rm G}_{\ninv}$-evaluation $e$, we have: 

- $e(\varphi) > 1/2$ iff $e(\varphi^{*_1}) =1$, 

- $e(\varphi) \geq 1/2$ iff $e(\varphi^{*_2}) =1$, 

- $e(\varphi) > 0$ iff $e(\varphi^{*_3}) =1$.

Then one can check that the following three conditions are satisfied: 

\begin{itemize}
\item[(i)]  The logics $\vdash_{(1/2}, \vdash_{[1/2}$ and $\vdash_{(0}$ can be faithfully interpreted in $\vdash_1$ as the following equivalences hold: 

\mbox{ } $\Gamma \vdash_{(1/2} \varphi$ iff $\Gamma^{*_1} \vdash_{1} \varphi^{*_1}$,  $\Gamma \vdash_{[1/2} \varphi$ iff $\Gamma^{*_2} \vdash_{1} \varphi^{*_2}$, and \\ \mbox{ } $\Gamma \vdash_{(0} \varphi$ iff $\Gamma^{*_3} \vdash_{1} \varphi^{*_3}$.

\item[(ii)] We can also interpret $\vdash_1$ in any of the other consequence relations by using the $\Delta$ operator, indeed, we have: 

\mbox{ } $\Gamma\vdash_1\varphi$ iff $\Gamma^{\Delta}\vdash_{(1/2}\Delta(\varphi)$ iff $\Gamma^{\Delta}\vdash_{[1/2}\Delta(\varphi)$ iff $\Gamma^{\Delta}\vdash_{>0}\Delta(\varphi)$

where $\Gamma^{\Delta} = \{\Delta(\psi)  : \psi\in\Gamma\}$. 

\item[(iii)]  Finally, the following inter-derivabilities show that the $\Delta$ acts as a proper converse transformation of each $*_i$ in the corresponding logic: 

\mbox{ } $ \psi\dashv\vdash _{(1/2}\Delta(\psi^{*_1})$ and   $ \varphi\dashv\vdash_1 (\Delta \varphi)^{*_1}$,

\mbox{ } $ \psi\dashv\vdash _{[1/2}\Delta(\psi^{*_2})$ and   $ \varphi\dashv\vdash_1 (\Delta \varphi)^{*_2}$, 

\mbox{ } $ \psi\dashv\vdash _{(0}\Delta(\psi^{*_3})$ and   $ \varphi\dashv\vdash_1 (\Delta \varphi)^{*_3}$
\end{itemize}

As a consequence, the logics $\vdash_1, \vdash_{(1/2}, \vdash_{[1/2}$ and $\vdash_{(0}$ are equivalent, and since $\vdash_1$ is finitary, so are the other logics as well. 
\end{itemize}
\end{proof}
Therefore, as a consequence of previous lemma, the only cases left open are whether the logics $\vdash_{[a}$ and $\vdash_{(a}$ are finitary for $a \in (0, 1/2) \cup (1/2, 1)$.

Next proposition shows the relationships among the remaining logics defined by matrices over the algebra $[0,1]_{\rm{G}_\sim}$ by order filters.

\begin{proposition}
The logics ${\rm G}^{[a}_\sim = \langle [0,1]_{\rm{G}_\sim}, F_{[a}\rangle$ for $a \in (0, 1]$,  ${\rm G}^{(a}_\sim=\langle [0,1]_{\rm{G}_\sim}, F_{(a} \rangle$ for $a \in [0, 1)$, and their finitary companions, satisfy the following properties:\footnote{In the following we use $p$ and $n$ to denote positive and negative values in $[0, 1]$ with respect to the negation $\ninv x = 1-x$; in other words, $p > 1/2$ and $n < 1/2$.}
\begin{enumerate}
\item[P1.] $\vdash_{[p} \; = \; \vdash_{[p'}$ and $\vdash_{(p} \; = \; \vdash_{(p'}$, for all $p, p' \in (1/2, 1)$.

Moreover, $\vdash_{(p} \; \subseteq \; \vdash_{[p}$ and $\vdash^f_{[p} \; = \; \vdash^f_{(p}$ for all $p \in (1/2, 1)$.

\item[P2.] $\vdash_{[n} \; = \; \vdash_{[n'}$ and $\vdash_{(n} \; = \; \vdash_{(n'}$, for all $n, n' \in (0, 1/2)$.

Moreover, $\vdash_{(n} \; \subseteq \; \vdash_{[n}$ and $\vdash^f_{[n} \; = \; \vdash^f_{(n}$ for all $n \in (0, 1/2)$.

\item[P3.] $\vdash_{[p} ~\varsubsetneq ~\vdash_1$,  for any $p \in (1/2,1)$.

\item[P4.] $\vdash_1$ and ~$\vdash_{[1/2}$ are not comparable. 
\item[P5.] $\vdash_{[p}$ and ~$\vdash_{[1/2}$, as well as $\vdash^f_{[p}$ and ~$\vdash_{[1/2}$, are not comparable, for any $p \in (1/2,1)$. 
\item[P6.] $\vdash^f_{[p}  ~\varsubsetneq  ~\vdash_{(1/2}$,  for any $p \in (1/2,1)$.
\item[P7.] $\vdash_{[p}$ and ~$\vdash_{[n}$ are not comparable, for any $p \in (1/2,1)$ and any $n \in (0,1/2)$. The same holds for $\vdash^f_{[p}$ and ~$\vdash^f_{[n}$. 
\item[P8.]  $\vdash_{[n} ~\varsubsetneq ~\vdash_{[1/2}$,  for any $n \in (0,1/2)$. 
\item[P9.] $\vdash_{(0}$, $\vdash_{[1/2}$ and $\vdash_{(1/2}$ are not pair-wise comparable. 
\item[P10.]  $\vdash^f_{[n}  \; \varsubsetneq  \; \vdash_{(0}$, for any $n \in (0, 1/2)$.

\end{enumerate}
\end{proposition}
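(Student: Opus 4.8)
The plan is to reduce all ten items to two recurring tools: the automorphism group of the standard algebra, and a short list of separating formulas evaluated on $[0,1]_{\rm{G}_\sim}$. First I would record the structural fact that every automorphism $h$ of $[0,1]_{\rm{G}_\sim}$ is exactly an order-automorphism of $[0,1]$ commuting with $x\mapsto 1-x$; equivalently, a continuous strictly increasing bijection of $[0,1]$ symmetric about $1/2$, so that $h(0)=0$, $h(1/2)=1/2$, $h(1)=1$, and $h$ can send any point of $(0,1/2)$, resp.\ of $(1/2,1)$, to any other point of the same interval (transitivity). Since $h\circ e$ is again a $\rm{G}_\sim$-evaluation whenever $e$ is, the assignment $e\mapsto h\circ e$ is a bijection on evaluations under which membership in an order filter is transported by $h$.

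From this, the equalities in P1 and P2 ($\vdash_{[p}=\vdash_{[p'}$, $\vdash_{(p}=\vdash_{(p'}$, and likewise for $n$) are immediate: pick $h$ with $h(p)=p'$ and transport evaluations. For the ``moreover'' parts I would argue as follows. For a \emph{finite} premise set the minimum $m=\min_i e(\gamma_i)$ is attained, so choosing $h$ that maps this attained minimum suitably above or onto the threshold converts a closed-filter hypothesis into an open-filter one and conversely; this gives $\vdash^f_{[p}=\vdash^f_{(p}$. The first genuinely delicate point is the infinitary inclusion $\vdash_{(p}\subseteq\vdash_{[p}$: given $e$ with $e(\gamma)\ge p$ for all $\gamma$ (the infimum need not be attained), I would take a sequence $h_k$ of automorphisms with $h_k(p)\downarrow p$ and $h_k\to\mathrm{id}$, deduce $h_k(e(\gamma))>p$ hence $h_k(e(\varphi))>p$, i.e.\ $e(\varphi)>h_k^{-1}(p)$, from $\Gamma\vdash_{(p}\varphi$ for each $k$, and let $k\to\infty$ so that $h_k^{-1}(p)\to p$ and therefore $e(\varphi)\ge p$. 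The same template, with the fixed point $1/2$ in place of $p$, handles P2.

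For the inclusions in P3, P8, P10 and the inclusion half of P6 I would argue by transporting witnesses with a \emph{single} automorphism: to get $\vdash_{[a}\subseteq\vdash_{[b}$ (or an inclusion into $\vdash_{(0}$, $\vdash_{(1/2}$ or $\vdash_1$) I take an evaluation refuting the right-hand conclusion and compose it with an $h$ fixing the relevant fixed point ($1$, $1/2$ or $0$), so that the premise values stay in the filter while the conclusion value is pushed across the threshold, using transitivity on the interval where it lies; in P6 and P10 finiteness of the premises guarantees the needed minimum is attained, so no limit is required there. The strictness statements and all the incomparabilities (P4, P5, P7, P9) are then settled, in both directions where needed, by explicit formulas: the involutive excluded middle $\varphi\vee\ninv\varphi$ (a theorem at every threshold $\le 1/2$ other than the open one, but at no threshold $>1/2$ and not at $\vdash_{(1/2}$); the G\"odel excluded middle $\varphi\vee\neg\varphi$ (a theorem of $\vdash_{(0}$ but of no logic with a positive threshold); the $\ninv$-explosion $\{\varphi,\ninv\varphi\}\vdash\psi$ (valid exactly when the threshold is $>1/2$, i.e.\ for $\vdash_{[p}$ and $\vdash_{(1/2}$, since there $\varphi\wedge\ninv\varphi$ cannot reach the filter, and refuted at every threshold $\le 1/2$); the inference $\{\varphi,\ninv\varphi\}\vdash\Delta(\varphi\leftrightarrow\ninv\varphi)$ (valid at the closed threshold $1/2$, where the premises force $e(\varphi)=1/2$, but refuted at lower closed thresholds and at $\vdash_{(0}$); and $\varphi\vdash\Delta\varphi$ (valid only in $\vdash_1$).

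The subtlest separation is the strictness in P6. Here I would exhibit $\psi_0:=(\ninv\varphi\to\varphi)\vee\ninv\varphi$, which evaluates to $1$ whenever $e(\varphi)\ge 1/2$ and to $1-e(\varphi)$ whenever $e(\varphi)<1/2$; hence $e(\psi_0)>1/2$ for every $e$ while $\inf_e e(\psi_0)=1/2$ is not attained. Thus $\psi_0$ is a theorem of $\vdash_{(1/2}$ but fails every closed threshold $p>1/2$, witnessing $\vdash^f_{[p}\varsubsetneq\vdash_{(1/2}$. I expect the two real obstacles to be precisely the places where a single automorphism is insufficient: the infinitary inclusion $\vdash_{(p}\subseteq\vdash_{[p}$ in P1/P2 (requiring the convergent family $h_k$ and a limit argument) and the properness in P6 (requiring the non-obvious witness $\psi_0$, whose value approaches but never meets the fixed point $1/2$). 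Everything else reduces to routine evaluation of the listed formulas together with the transport-of-evaluations lemma.
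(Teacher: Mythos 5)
Your proposal is correct and follows essentially the same strategy as the paper's proof: equalities by automorphism transport, attained-minimum arguments for the finitary claims, a limit argument for the infinitary inclusion $\vdash_{(p}\,\subseteq\,\vdash_{[p}$ (the paper phrases it with an increasing sequence of thresholds and the already-proved equalities, you phrase it with automorphisms $h_k\to\mathrm{id}$, which is the same argument), and explicit separating formulas for strictness and incomparability, where your witnesses ($\varphi\vee\ninv\varphi$, $\varphi\vee\neg\varphi$, $\ninv$-explosion, $(\ninv\varphi\to\varphi)\vee\ninv\varphi$) differ from but are equally valid alternatives to the paper's $\Delta$-based ones (e.g.\ $\ninv\Delta(\varphi\to\ninv\varphi)\vdash_{(1/2}\varphi$ and $\neg\neg\varphi\vdash_{(0}\varphi$). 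One small wrinkle: the direction $\vdash^f_{(p}\,\subseteq\,\vdash^f_{[p}$ does not follow from a single automorphism pushing the attained minimum above the threshold as you state (when the minimum equals $p$, that only yields $e(\psi)>h^{-1}(p)$ with $h^{-1}(p)<p$); it is instead an instance of your own limit argument, i.e.\ of $\vdash_{(p}\,\subseteq\,\vdash_{[p}$ restricted to finite premise sets, so the proof as a whole is complete.
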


\begin{proof}  \mbox{}

\begin{enumerate}
\item[P1.]  We divide the proof in four steps:

(i) That $\vdash_{[p} \; = \; \vdash_{[p'}$ and $\vdash_{(p} \; = \; \vdash_{(p'}$ is an easy consequence of the fact that for every $p, p' \in (1/2,1)$  it is possible to define an automorphism $f$ of $[0,1]_{G\sim}$ such that $f(p) = p'$. Let us then show that  $\vdash^f_{[p} \; =\;  \vdash^f_{(p}$ for every $p \in (1/2, 1)$.

(ii) Assume $\{\varphi_i : i \in I\} \vdash^f_{[p} \psi$, with $I$ finite, for some $p \in (1/2, 1)$. Let $q$ such that $1/2 < q < p$, and let $e$ be an evaluation such that $e(\varphi_i) > q $ for all $i \in I$. 
Let $p' = \min_{i \in I} e(\varphi_i)$. Obviously $p' > q$. Then, by (i), we also have $\{\varphi_i : i \in I\} \vdash^f_{[p'} \psi$, and therefore we have $e(\psi) \geq p' > q$, and hence $\{\varphi_i :  i \in I\} \vdash^f_{(q} \psi$. Therefore, we have $\vdash^f_{[p}  \; \subseteq  \; \vdash^f_{(q}$ for all $1/2 < q < p$.

(iii) Recall from (i) that $\Gamma \vdash_{(p} \varphi$ iff $\Gamma \vdash_{(p'} \varphi$ for all $1/2 < p' < 1$. Let $p_1, p_2, \ldots, p_n, \ldots $ be an increasing sequence of values $p_i \in (1/2, p)$ such that $\lim_n p_n = p$. 
Suppose $\Gamma \vdash_{(p} \varphi$, and further assume $e(\psi) \geq p$ for all $\psi \in \Gamma$. Clearly, for each $p_i$, $e(\psi) > p_i$ for all $\psi \in \Gamma$. Since $\Gamma \vdash_{(p_i} \varphi$ for each $p_i$, we have that $e(\varphi) > p_i$ for each $p_i$.  Hence $e(\varphi) \geq p$. 

(iv) Assume $\{\varphi_i :  i \in I\} \vdash^f_{(q} \psi$, with $I$ finite, for some $q \in (1/2, 1)$. Let $p$ be such that $ q < p < 1$, and let $e$ be an evaluation such that $e(\varphi_i) \geq p $ for all $i \in I$. 
Let $q' = \min_{i \in I} e(\varphi_i)$. Obviously $q' \geq p$. Then, by (i), we also have $\{\varphi_i :  i \in I\} \vdash^f_{(q'} \psi$, and therefore we have $e(\psi) \geq q' \geq p$, and hence $\{\varphi_i :  i \in I\} \vdash^f_{p} \psi$. Therefore, we have $\vdash^f_{(q}  \; \subseteq  \; \vdash^f_{[p}$ for all $1/2 < q < p$.

\item[P2.] The proofs are analogous to those of P1.

\item[P3.] Assume $\{\varphi_i :  i \in I\} \vdash_{[p} \psi$ for a given $p  \in (1/2,1)$, and let $e$ be an evaluation such that $e(\varphi_i) = 1$ for all $i \in I$. Since it is also true that $e(\varphi_i) \geq p'$ for all $p'  \in (1/2,1)$, by P1 it follows that $\{\varphi_i :  i \in I\} \vdash_{[p'} \psi$ for all $p'  \in (1/2,1)$, and hence $e(\psi) \geq p'$ for all $p'  \in (1/2,1)$, and thus $e(\psi) = 1$. Therefore $\{\varphi_i :  i \in I\} \vdash_1 \psi$.

The strict inclusion can be easily noticed since, e.g. it holds that  $\varphi \vdash_1 \Delta \varphi$ but $\varphi \nvdash_{[p} \Delta \varphi$ for any $p < 1$.

\item[P4.] It clearly holds that, on the one hand, $\Delta(\varphi \leftrightarrow \ninv \varphi) \vdash_{[1/2} \varphi$ but $\Delta(\varphi \leftrightarrow \ninv \varphi) \nvdash_{1} \varphi$, while on the other hand,
$\varphi \vdash_1 \Delta \varphi$ but $\varphi \nvdash_{[1/2} \Delta \varphi$

\item[P5.]  It follows from noticing that $\Delta(\varphi \leftrightarrow \ninv \varphi) \land \varphi \vdash_{[p} \bot$ and $\Delta(\varphi \leftrightarrow \ninv \varphi) \land \varphi \nvdash_{[1/2} \bot$, while
$\Delta(\varphi \leftrightarrow \ninv \varphi) \vdash_{[1/2} \varphi$ and $\Delta(\varphi \leftrightarrow \ninv \varphi) \nvdash_{[p} \varphi$.

\item[P6.]
Assume that, for a given $p  \in (1/2,1)$, $\{\varphi_i :  i \in I\} \vdash_{[p} \psi$, with $I$ finite, and let $e$ be an evaluation such that $e(\varphi_i) > 1/2 $ for all $i \in I$. 
Let $p' = \min_{i \in I} e(\varphi_i)$. Obviously $p' > 1/2$. Then, from P1 we also have $\{\varphi_i :  i \in I\} \vdash_{[p'} \psi$, and therefore we have $e(\psi) \geq p' > 1/2$, and hence $\{\varphi_i :  i \in I\} \vdash_{(1/2} \psi$. Therefore, we have $\vdash_{[p}  \; \subseteq  \; \vdash_{(1/2}$.

That the inclusion is strict follows from observing that $\ninv \Delta (\varphi \to \ninv \varphi) \vdash_{(1/2} \varphi$, but $ \ninv \Delta (\varphi \to \ninv \varphi) \nvdash_{[p} \varphi$.

\item[P7.]  It follows from observing (i) $\Delta(\varphi \leftrightarrow \ninv \varphi) \vdash_{[n} \varphi$ and $\Delta(\varphi \leftrightarrow \ninv \varphi) \nvdash_{[p} \varphi$, and  (ii)
$ \varphi \vdash_{[p} \ninv \Delta (\varphi \to  \ninv \varphi)$ and $\varphi \nvdash_{[n} \ninv \Delta (\varphi \to \ninv \varphi)$.

\item[P8.]  That $\vdash_{[n} ~\subseteq ~\vdash_{[1/2}$  is proved in a similar way to P3. The strict inclusion is a consequence of the following facts:

(i) $\varphi \land \ninv \varphi \vdash_{[1/2} \Delta(\varphi \leftrightarrow \ninv \varphi)$

(ii) $\varphi \land \ninv \varphi \not\vdash_{[n} \Delta(\varphi \leftrightarrow \ninv \varphi)$

Notice that $e(\varphi \land \ninv \varphi) \geq 1/2$ iff $e(\varphi) = 1/2$ iff $e(\varphi \leftrightarrow \ninv \varphi) = 1$, while if $e(\varphi) = n$, then $e(\varphi \land \ninv \varphi) = e(\varphi \leftrightarrow \ninv \varphi) = n$, but $e(\Delta(\varphi \leftrightarrow \ninv \varphi)) = 0$.

\item[P9.]  That $\vdash_{[1/2}$ and $\vdash_{(1/2}$ are not comparable results from noticing e.g. (i) $\Delta(\varphi \leftrightarrow \ninv \varphi) \vdash_{[1/2} \varphi$ but $\Delta(\varphi \leftrightarrow \ninv \varphi) \nvdash_{(1/2} \varphi$, and (ii)
$\varphi \vdash_{(1/2} \ninv \Delta(\varphi \to \ninv \varphi)$ but $\varphi \nvdash_{[1/2} \ninv \Delta(\varphi \to \ninv \varphi)$.

On the other hand, it is easy to check that $\bot$ follows from $\varphi \land \Delta(\varphi \to \ninv \varphi) \land \neg \Delta (\ninv \varphi \to \varphi)$ in $\vdash_{(1/2}$ and $\vdash_{[1/2}$, but not in $\vdash_{(0}$. Conversely, 
$\neg\neg \varphi \land \neg \Delta \varphi  
\vdash_{(0} \varphi \land \ninv \varphi$, but this is neither the case for $\vdash_{(1/2}$ nor for $\vdash_{[1/2}$.

\item[P10.]  Assume $\{\varphi_i :  i \in I\} \vdash_{[n} \psi$ for a given $n  \in (0, 1/2)$ and a finite set $I$, and let $e$ be an evaluation such that $e(\varphi_i) > 0 $ for all $i \in I$. 
Let $n' = \min_{i \in I} e(\varphi_i)$. Obviously $n' > 0$ and $e(\varphi_i) \geq n'$, for all $i \in I$. Then, from P1, we also have that $\{\varphi_i :  i \in I\} \vdash_{[n'} \psi$, and hence we have $e(\psi) \geq n' > 0$.  This means $\{\varphi_i :  i \in I\} \vdash_{(0} \psi$. Therefore, we have $\vdash^f_{[n}  \; \subseteq  \; \vdash_{(0}$.

On the other hand, $\neg \neg\varphi \vdash_{(0}  \varphi$ but  $\neg \neg\varphi \nvdash_{[n}  \varphi$, hence we have proved that $\vdash^f_{[n}  \; \varsubsetneq  \; \vdash_{(0}$.
\end{enumerate}
\end{proof}

A graphical representation of the different logics (consequence relations) involved in the above proposition can be seen in Figure \ref{Gpn}. 
\begin{figure}[h] 
\begin{center}
\includegraphics[width=10cm]{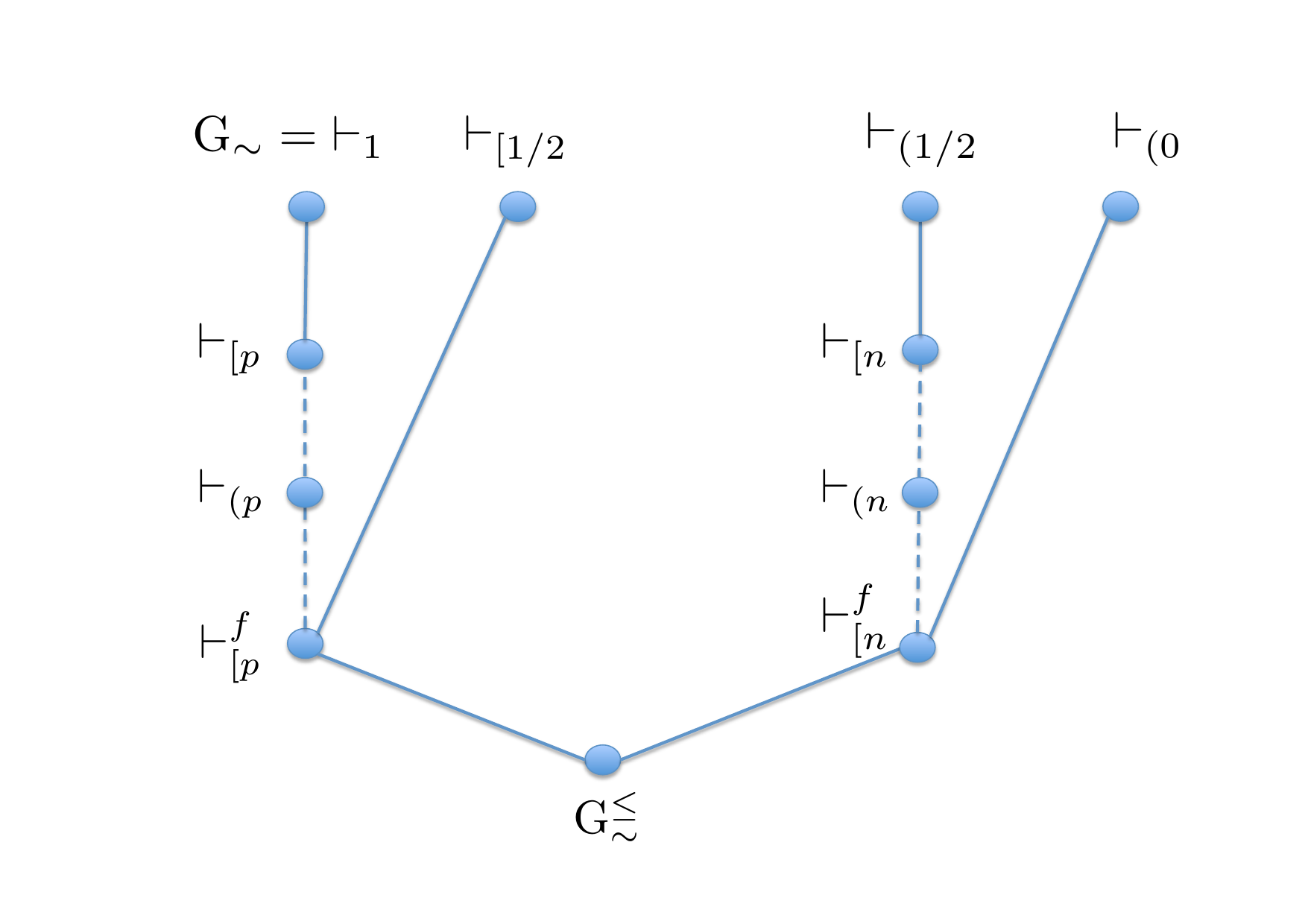}
\caption{Graph of logics over $[0, 1]_{\rm{G}_{\ninv}}$ defined by order filters, where $1/2 < p < 1$ and $0 < n < 1/2$, where edges stand for inclusions (upward sense). The dashed edges denote that it is an open problem whether the connected logics are different.}
\label{Gpn}
\end{center}
\end{figure}

It is clear that a matrix  logic ${\rm G}^{[a}_\sim = \langle [0,1]_{\rm{G}_\sim}, F_{[a}\rangle$ (resp. ${\rm G}^{(a}_\sim=\langle [0,1]_{\rm{G}_\sim}, F_{(a} \rangle$) is paraconsistent only in the case that $a \leq 1/2$ (resp. $a < 1/2$). As a consequence of the above classification, it turns out that there are only three different paraconsistent logics among them.

\begin{corollary} Among the families of logics 
$\{{\rm G}^{[a}_\sim \}_{a \in (0, 1]}$ and $\{{\rm G}^{(a}_\sim\}_{a \in [0, 1)}$: 
\begin{itemize}
\item there are only three different paraconsistent logics:  ${\rm G}^{[a}_\sim$ for any $a \in (0, 1/2)$,  ${\rm G}^{[1/2}_\sim$, and ${\rm G}^{(0}_\sim$.
\item there are only three different explosive logics: ${\rm G}^{[a}_\sim$ for any $a \in (1/2, 1)$,  ${\rm G}^{(1/2}_\sim$, and ${\rm G}^{[1}_\sim$.
\end{itemize}
\end{corollary}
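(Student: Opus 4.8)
The plan is to read the answer off the paraconsistency criterion stated just before the corollary and then collapse the continuum of order filters into finitely many consequence relations using the identifications already established in the Lemma and in the Proposition (P1--P10). First I would record, from that criterion, that among the two families the \emph{paraconsistent} matrices are exactly those in $\{{\rm G}^{[a}_\sim : a \in (0,1/2]\}$ together with $\{{\rm G}^{(a}_\sim : a \in [0,1/2)\}$, while the \emph{explosive} ones are precisely the complementary matrices $\{{\rm G}^{[a}_\sim : a \in (1/2,1]\}$ and $\{{\rm G}^{(a}_\sim : a \in [1/2,1)\}$. Thus it suffices to count the distinct logics inside each of these two pools.

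Next I would use P1 and P2 to squash each open interval's worth of filters into a single logic: by P2 all $\vdash_{[a}$ with $a \in (0,1/2)$ coincide (call the common logic $\vdash_{[n}$) and all $\vdash_{(a}$ with $a \in (0,1/2)$ coincide ($\vdash_{(n}$), and by P1 the analogous statements hold for $a \in (1/2,1)$, giving $\vdash_{[p}$ and $\vdash_{(p}$. The endpoints remain as isolated cases: $a=1/2$ yields $\vdash_{[1/2}$ (paraconsistent) and $\vdash_{(1/2}$ (explosive), $a=0$ yields $\vdash_{(0}$, and $a=1$ yields $\vdash_1$. This leaves at most four candidate logics in each pool, namely $\{\vdash_{[p},\vdash_{(p},\vdash_{(1/2},\vdash_1\}$ for the explosive ones and $\{\vdash_{[n},\vdash_{(n},\vdash_{[1/2},\vdash_{(0}\}$ for the paraconsistent ones.

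The decisive step, which I expect to be the main obstacle, is to show $\vdash_{(p}=\vdash_{[p}$ and $\vdash_{(n}=\vdash_{[n}$, so that each list drops to three. P1 and P2 already give the inclusions $\vdash_{(p}\subseteq\vdash_{[p}$ and $\vdash_{(n}\subseteq\vdash_{[n}$, as well as the equalities of the finitary companions, so what remains is the reverse inclusions $\vdash_{[p}\subseteq\vdash_{(p}$ and $\vdash_{[n}\subseteq\vdash_{(n}$. I would attack $\vdash_{[p}\subseteq\vdash_{(p}$ contrapositively: assuming $\Gamma\vdash_{[p}\varphi$ together with a boundary evaluation $e$ such that $e(\gamma)>p$ for all $\gamma\in\Gamma$ but $e(\varphi)=p$, I would try to manufacture an evaluation $e'$ with $e'(\gamma)\geq p$ for all $\gamma$ and $e'(\varphi)<p$, contradicting $\Gamma\vdash_{[p}\varphi$. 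The structural fact to exploit is that every one-variable ${\rm G}_\sim$-term function takes, throughout the whole interval $(1/2,1)$, one of the values $0,1,x,1-x$ with no interior breakpoints, so that $\varphi$ reduces locally at $e$ to a single variable which can be lowered by a small perturbation. The genuine difficulty is that $\Gamma$ may be infinite and the premise values may accumulate at $p$ from above, so that no single perturbation keeps every premise $\geq p$; controlling these infinitely many constraints (which is essentially the still-open question of whether $\vdash_{[p}$ is finitary) is the heart of the matter, and the dual argument handles $\vdash_{[n}\subseteq\vdash_{(n}$.

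Finally I would verify that the three surviving logics in each pool are pairwise distinct, which is immediate from the Proposition and the Lemma. For the explosive pool, $\vdash_{[p}\subsetneq\vdash_1$ by P3; $\vdash^f_{[p}\subsetneq\vdash_{(1/2}$ by P6 together with the finitarity of $\vdash_{(1/2}$ (Lemma) give $\vdash_{[p}\neq\vdash_{(1/2}$; and $\vdash_1\neq\vdash_{(1/2}$ because $\varphi\vdash_1\Delta\varphi$ while $\varphi\nvdash_{(1/2}\Delta\varphi$ (take $e(\varphi)$ strictly between $1/2$ and $1$, so $\Delta\varphi$ evaluates to $0$). For the paraconsistent pool, $\vdash_{[n}\subsetneq\vdash_{[1/2}$ by P8; $\vdash^f_{[n}\subsetneq\vdash_{(0}$ by P10 together with the finitarity of $\vdash_{(0}$ give $\vdash_{[n}\neq\vdash_{(0}$; and $\vdash_{[1/2}$ and $\vdash_{(0}$ are incomparable by P9. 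Combined with the merging step, this yields exactly three paraconsistent logics ($\vdash_{[n}$, $\vdash_{[1/2}$, $\vdash_{(0}$) and exactly three explosive logics ($\vdash_{[p}$, $\vdash_{(1/2}$, $\vdash_1$), as claimed.
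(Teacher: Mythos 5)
Your proposal follows exactly the route the paper itself takes: the paper gives no separate proof of this corollary (it is read off ``as a consequence of the above classification''), and your first, second and fourth paragraphs --- the paraconsistency/explosiveness criterion for the filters, the collapse of each open interval of thresholds via P1 and P2, and the pairwise separations of the three survivors via P3, P6, P8, P9, P10 together with the finitarity statements of the Lemma --- are precisely that implicit argument, carried out correctly.

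The interesting point is your ``decisive step''. Your diagnosis there is accurate: the equalities $\vdash_{(p}\,=\,\vdash_{[p}$ and $\vdash_{(n}\,=\,\vdash_{[n}$ do \emph{not} follow from P1/P2 (which only give the inclusions $\vdash_{(p}\subseteq\vdash_{[p}$, $\vdash_{(n}\subseteq\vdash_{[n}$ and equality of the finitary companions), your perturbation argument will indeed break down when infinitely many premise values accumulate at the threshold, and the question is exactly the open finitarity problem the paper records after the Lemma and via the dashed edges of Figure~\ref{Gpn}. But you should not try to close this step, because it is not a gap in your reasoning --- it is an imprecision in the statement itself. The interior open-filter logics are provably distinct from the other members of their pools: for instance $\varphi\land\ninv\varphi\vdash_{[1/2}\Delta(\varphi\leftrightarrow\ninv\varphi)$ fails in $\vdash_{(n}$ (take $e(\varphi)$ strictly between $n$ and $1/2$), and $\neg\neg\varphi\vdash_{(0}\varphi$ fails in $\vdash_{(n}$ as well, so $\vdash_{(n}$ differs from both $\vdash_{[1/2}$ and $\vdash_{(0}$; similarly $\vdash_{(p}$ differs from $\vdash_1$ and $\vdash_{(1/2}$. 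Hence the literal count of ``exactly three'' in each pool \emph{presupposes} $\vdash_{(n}=\vdash_{[n}$ and $\vdash_{(p}=\vdash_{[p}$, which the paper establishes only at the level of finitary companions and otherwise leaves open. The corollary must therefore be read modulo that identification (equivalently, as a count of the finitary companions $\vdash^f_{[n}=\vdash^f_{(n}$ and $\vdash^f_{[p}=\vdash^f_{(p}$), and under that reading the steps you did complete already constitute a full proof. In short: same approach as the paper, executed more scrupulously; the obstacle you isolated is real, but it lies in the statement, not in your argument, and the paper's own (implicit) proof does not address it either.
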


In analogy to \cite[Theorem 2]{CoEsGo}, it is easy to show that every intermediate logic $L$ between $\Gsimleq$ and CPL  is in fact the logic $L({\cal M}')$ defined by a subfamily of matrices ${\cal M}' \subseteq {\cal M}_{{\rm G}_\sim}$.
However, note that the set of G$_{\sim}$-algebras and their lattice filters is very large. Then, an exhaustive analysis of the set of intermediate logics between $\Gsimleq$ and CPL actually seems to be a difficult task. Because of this, in the next section we will restrict ourselves to the case of finite-valued G\"odel logics with an involutive negation G$_{n \sim}$.

\begin{remark} In the last corollary we have shown that ${\rm G}^{(0}_\sim$,  the matrix logic defined by the standard G$_{\ninv}$-algebra $[0,1]_{\rm{G}_\sim}$ and the filter $(0,1]$  of designated values, is a paraconsistent logic. In~\cite{Avron16} {Avron introduces} a paraconsistent extension of the logic {T} of Anderson and Belnap called { FT}. This logic, which intends to be ``a paraconsistent counterpart of \L ukasiewicz Logic $\L_\infty$'' (\cite[pp.~75]{Avron16}), is firstly defined axiomatically over {a propositional language with} connectives $\land$, $\lor$, ${\sim}$, $\to_\mathrm{FT}$,\footnote{In~\cite{Avron16} the symbols $\neg$ and $\to$ were used instead of ${\sim}$ and $\to_\mathrm{FT}$. We adopt this notation in order to keep the notation of the present paper uniform.} and then it {is} proved that {FT} is semantically characterized by the logic matrix defined by the ordered algebra ${\bf M_{[0,1]}} = ( [0, 1], \land, \lor,  {\sim}, \to_\mathrm{FT},0,1)$ and the filter $(0,1]$  of designated values ({this is why Avron considers  {FT} as a logic that preserves {\em non-falsity}}). Here 
$\land$, $\lor$ and $\sim$ are defined as in ${[0,1]_{\rm{G}_\sim}}$, while $\to_\mathrm{FT}$ is defined as follows: 
$$x \to_\mathrm{FT} y = \left \{
\begin{array}{l l}
\max(1-x, y), & \mbox{if } x \leq y \\
0, & {\rm otherwise.} 
\end{array} \right .
$$ 
Now, observe that the implication $\to_\mathrm{FT}$ of  ${\bf M_{[0,1]}}$ is definable in  $[0,1]_{\rm{G}_\sim}$ as $x \to_\mathrm{FT} y = \Delta(x \to_{\rm{G}} y) \land (\ninv x \lor y)$. As a consequence of this, the logic {FT} is interpretable in ${\rm G}^{(0}_\sim$ by means of a mapping $*: Fm_\mathrm{FT} \to  Fm_{{\rm{G}}_\sim}$ defined recursively as follows: $p^*=p$ if $p$ is a propositional variable;  $(\ninv \varphi)^* = \ninv \varphi^*$;  $(\varphi\land \psi)^* = \varphi^* \land \psi^*$;  $(\varphi\lor \psi)^* = \varphi^* \lor \psi^*$;  and $(\varphi \to_\mathrm{FT} \psi)^* = \Delta(\varphi^* \to_{\rm{G}} \psi^*) \land (\ninv \varphi^* \lor \psi^*)$. Then, for every $\Gamma \cup \{\varphi\} \subseteq  Fm_\mathrm{FT}$, we have:
$$\Gamma \vdash_\mathrm{FT} \varphi \ \mbox{ iff } \ \Gamma^* \vdash_{(0} \varphi^*,$$
where $\Gamma^*$ denotes the set $\{\psi^* \ : \ \psi \in \Gamma\}$ and $ \vdash_{(0}$ denotes the consequence relation of ${\rm G}^{(0}_\sim$.
Moreover, in~\cite[Example~3.4]{Avron16} Avron considers, for every $n > 1$,  the finite subalgebra $\bf M_{[0,1]}^n$ of $\bf M_{[0,1]}$ with domain $GV_n = \{0, 1/(n-1), \ldots, (n-2)/(n-1), 1\}$. Let $\rm {FT}^n$ be the logic characterized by the logic matrix defined by the algebra $\bf M_{[0,1]}^n$ and the filter $F_{\frac{1}{n-1}} = \{ a \in GV_n \ : \ a > 0\} = GV_n\cap  (0,1]$  of designated values. Then, the interpretation $*$ above also shows that $\rm {FT}^n$ is interpretable in $\langle {\bf GV_{n\sim}},F_{\frac{1}{n-1}}\rangle$, since we also have:
$$\Gamma \vdash_{\rm{FT^n}} \varphi \ \mbox{ iff } \ \Gamma^* \vdash_{\{\frac{1}{n-1}\}} \varphi^*,$$ 
where $\vdash_{\{\frac{1}{n-1}\}}$ is the consequence relation of the matrix logic  $\langle {\bf GV_{n\sim}},F_{\frac{1}{n-1}}\rangle$. This notation will we also used in Subsection~\ref{Gn}. 

As a matter of fact, it can be observed that the $\Delta$ operator of  $\bf [0,1]_{\rm{G}_\sim}$ is definable in $\bf M_{[0,1]}$ as $\Delta x = 1 \to_\mathrm{FT} x$ and so the G\"odel implication $\to_{\rm{G}}$ of  $\bf [0,1]_{\rm{G}_\sim}$  is also definable in $\bf M_{[0,1]}$ as   $x \to_{\rm{G}} y = {\sim} \Delta {\sim}(x \to_\mathrm{FT} y) \lor y$. Observe, however, that the logic FT has neither bottom nor top,\footnote{Indeed, every formula $\varphi \in  Fm_{\rm FT}$ gets the value $1/2$ in any evaluation $e$ over $\bf M_{[0,1]}$ such that $e$ assigns the value $1/2$ to any propositional variable occurring in $\varphi$.} hence there is no formula in FT which can express the $\Delta$ operator. Let FT$_0$ be the logic defined by the same matrix $\langle {\bf M_{[0,1]}},(0,1]\rangle$ of FT, but now over an expanded language $Fm_{\rm FT_0}$ containing a constant $\bot$ and adding the requirement that $e(\bot) = 0$ for every evaluation $e$. Let us denote by $\vdash_{\mathrm{FT}_0}$   its corresponding consequence relation. Consider the mapping $\#: Fm_{\rm{G}_\sim} \to Fm_{\rm FT_0}$ defined recursively as follows: $p^{\#}=p$ if $p$ is a propositional variable; $\bot^{\#}=\bot$;  $(\ninv \varphi)^{\#} = \ninv\varphi^{\#}$; $(\varphi\land \psi)^{\#} = \varphi^{\#} \land \psi^{\#}$;  and  $(\varphi \to_G \psi)^{\#} = {\sim} \Delta {\sim}(\varphi^{\#} \to_\mathrm{FT} \psi^{\#}) \lor \psi^\#$ (where $\Delta \alpha = \ninv \bot \to_\mathrm{FT} \alpha$ for every $\alpha$). Then, $\Gamma \vdash_{(0} \varphi  \mbox{ iff } \Gamma^\# \vdash_{\mathrm{FT}_0} \varphi^\#$, showing that  the logic  ${\rm G}^{(0}_\sim$ is interpretable in FT$_0$. 
Since $\varphi$ is equivalent to $\varphi^{*\#}$ in FT$_0$ for every $\varphi \in Fm_{\rm FT_0}$,  and $\psi$ is equivalent to $\psi^{\#*}$ in ${\rm G}^{(0}_\sim$ for every $\psi \in Fm_{\rm{G}_\sim}$ (here, $*$ is extended to $Fm_{\rm FT_0}$ by putting $\bot^*=\bot$), the logics  ${\rm G}^{(0}_\sim$ and FT$_0$ are the same up to language. 
The same relationship holds between  $\langle {\bf GV_{n\sim}},F_{\frac{1}{n-1}}\rangle$ and the logic $\rm FT_0^n$ obtained from  $\rm FT^n$ by adding $\bot$.

\end{remark}

\section{Logics between $\mathrm{G}\mystrutsec^{\text{\footnotesize $\leq$}}_{n\sim}$ and CPL}
\label{sect-Gn-tilde}

In this section we will study the intermediate logics between $\Gnsimleq$ and CPL, for a natural $n > 2$.
The cases $n=3$ and $n= 4$ are easy to analyze since G$_{3 \sim}$ and G$_{4 \sim}$ coincide respectively with the 3-valued and 4-valued {\L}ukasiewicz logics \L$_{3}$ and \L$_{4}$.

\begin{proposition} \label{3i4}
${\rm G}_{3 \sim}$ and   ${\rm G}_{4 \sim}$ are logically equivalent to \L$_{3}$ and \L$_{4}$ respectively.
\end{proposition}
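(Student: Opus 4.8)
The plan is to reduce the statement to a purely algebraic fact about \emph{term-equivalence} of the defining matrices. Both ${\rm G}_{n\sim}$ and \L$_n$ are the truth-preserving matrix logics over the $n$-element chain $GV_n = \{0,1/(n-1),\dots,1\}$ with the single designated value $\{1\}$: the first over $\bf GV_{n\sim} = (GV_n,\min,\max,\to_{\mathrm{G}},\ninv,0,1)$ (with $\Delta = \neg\ninv$ definable), the second over the MV-chain $\bf \L_n = (GV_n,\ominus,\oplus,\odot,\to_{\mathrm{L}},\ninv,0,1)$, where in both cases $\ninv x = 1-x$. Since the two matrices share the same universe and the same designated set $\{1\}$, it suffices to show that every basic operation of each algebra is a term operation of the other (for $n=3,4$); the identity map on $GV_n$ together with this dictionary of connectives then provides mutually inverse faithful translations, so the two consequence relations coincide up to renaming of connectives, which is exactly what ``logically equivalent'' means here.

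First I would dispatch the \textbf{easy direction}, that every connective of $\bf GV_{n\sim}$ is a term of $\bf \L_n$, and in fact this holds for all $n$. Indeed $\ninv$ is the Łukasiewicz negation; $\min$ and $\max$ are the MV lattice operations; the Monteiro--Baaz projection is $\Delta x = \underbrace{x\odot\cdots\odot x}_{n-1}$ (which on $GV_n$ returns $1$ iff $x=1$); and the Gödel implication is recovered as
$$x \to_{\mathrm{G}} y = \Delta(x \to_{\mathrm{L}} y) \lor y,$$
since $\Delta(x\to_{\mathrm{L}}y)=1$ exactly when $x\le y$. Hence the clone of $\bf GV_{n\sim}$ is always contained in that of $\bf \L_n$.

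The \textbf{main obstacle} is the converse direction: expressing the Łukasiewicz strong conjunction $x\odot y = \max(0,x+y-1)$ (equivalently $\to_{\mathrm{L}}$, via $x\to_{\mathrm{L}}y=\ninv(x\odot\ninv y)$) from the Gödel connectives, the involution and $\Delta$. This is genuinely the non-Gödel part, as $\odot$ drops strictly below $\min$ (e.g.\ $2/3\odot 2/3 = 1/3$), so it cannot be built from $\min,\max,\to_{\mathrm{G}}$ alone; the role of $\ninv$ is precisely to manufacture the intermediate values. For $n=3$ I would give the clean identity
$$x\odot y = x \land y \land \Delta(x\lor y),$$
which one checks pointwise on $\{0,1/2,1\}$. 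For $n=4$ the term is more elaborate but still explicit: using the threshold operators $H(z):=\Delta(\ninv z \to_{\mathrm{G}} z)$ (the indicator of $z\ge 2/3$) and $N(z):=\ninv\Delta\ninv z$ (the indicator of $z>0$), one forms the nested level-set indicators $J_{2/3}$ and $J_{1/3}$ of $\{x\odot y\ge 2/3\}$ and $\{x\odot y\ge 1/3\}$ out of $\Delta,H,N$, and then writes $\odot$ as a maximum of two gated terms, the high part $\min(x,y,J_{2/3})$ and a low part that uses $\max(\min(x,\ninv x),\min(y,\ninv y))$ to supply the value $1/3$ on the single anomalous cell $(2/3,2/3)$, gated by $J_{1/3}$ and $\ninv J_{2/3}$. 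Verifying this reduces to a finite case check on the $16$ pairs of $GV_4$.

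Putting the two inclusions together shows that for $n=3$ and $n=4$ the clones of $\bf GV_{n\sim}$ and $\bf \L_n$ coincide, so the matrices are term-equivalent and the logics ${\rm G}_{n\sim}$ and \L$_n$ are logically equivalent. I would close by remarking that the restriction to $n\le 4$ is essential: for $n\ge 5$ there are too many interior truth-values for the single involution $\ninv$ and the Gödel apparatus to reconstruct all of Łukasiewicz's $\odot$, which is why the coincidence fails and the analysis of the intermediate logics in the next section genuinely diverges from the \L$_n$ case.
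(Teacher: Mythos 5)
Your proposal is correct and takes essentially the same route as the paper: both establish term-equivalence of $\bf GV_{n\sim}$ with the corresponding MV-chain by exhibiting explicit defining terms and checking them pointwise on the finite chain, with the easy direction ($x \to_{\mathrm{G}} y = \Delta(x\to_{\text{\L}} y)\lor y$, $\Delta$ definable in any finite MV-chain) handled identically. The only difference is cosmetic: the paper defines the {\L}ukasiewicz implication directly via compact terms (e.g.\ $x \to_{3\text{\L}} y = (x \to y) \lor (\ninv x \lor y)$ for $n=3$, and a single longer term for $n=4$), whereas you define the strong conjunction $\odot$ and recover the implication as $\ninv(x \odot \ninv y)$.
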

\begin{proof} The proof is algebraic, we prove that the standard algebras $\bf GV_{3 \sim}$ and $\bf GV_{4 \sim}$ are termwise equivalent to the standard {\L}ukasiewicz algebras of \L$_{3}$ and \L$_{4}$ respectively.
First, in the algebra $\bf GV_{3 \sim}$ it is possible to define the binary connective $x \to_{3\textrm{\L}} y = (x \to y) \lor (\ninv x \lor y)$, that coincides with the 3-valued {\L}ukasiewicz implication, i.e.\ we have $x \to_{3\textrm{\L}} y = \min(1, 1-x+y)$ for every $x, y \in GV_3$. Thus in $\bf GV_{3 \sim}$ we can define all the {\L}ukasiewicz connectives, in other words, $(GV_{3 \sim}, \to_{3\textrm{\L}}, \ninv, 0, 1)$ is in fact the

Second, also in the algebra $\bf GV_{4 \sim}$ we can define the binary connective
$$x  \to_{4\textrm{\L}} y = \ninv x \lor [\Delta(\ninv x \to x) \land (\ninv \Delta x) \land (\neg \neg y) \land x)] \lor (x \to y)$$
which coincides again with the 4-valued {\L}ukasiewicz implication, i.e. $x  \to_{4\textrm{\L}} y = \min(1, 1-x+y)$ for every $x, y \in GV_4$.

On the other hand, in any finite MV-algebra $\bf {\L}V_n$
we can always define G\"odel implication as
$x \to_{\rm{G}} y = \Delta(x \to_{\textrm{\L}} y) \lor y$ and G\"odel negation as $\neg_{\rm{G}} x = \Delta (\ninv x)$.\footnote{Recall that the $\Delta$ connective is definable in any algebra $\bf \textbf{\L} V_n$.}
\end{proof}

\begin{remark} \label{J34}
From the last result, it follows that the logics between $\Gasimleq$ (resp. $\Gbsimleq$) and  CPL coincide with the logics between $\L_{3 }^\leq$ (resp. $\L_{4 }^\leq)$ and  CPL studied in \cite{CoEsGo} and \cite{CoEsGiGo}. Among them, the well-known da Costa and D'Ottaviano's 3-valued logic $\mathsf{J}_3$, that is equivalent (up to language) to the matrix logic $\langle {\bf {\L}V_3}, \{1/2, 1\}\rangle$, is a logic between  $\L_{3 }^\leq$ and  CPL. Analogously, its 4-valued generalization  $\mathsf{J}_4$, defined as the matrix logic $\langle {\bf {\L}V_4}, \{1/3, 2/3, 1\}\rangle$ in \cite{CoEsGiGo}, is a logic between $\L_{4 }^\leq$  and CPL. Therefore, we can consider the logics $\mathsf{J}_3$ and $\mathsf{J}_4$ to be equivalent as well to the intermediate logics $\langle {\bf \bf GV_{3 \sim}}, \{1/2, 1\}\rangle$ and $\langle {\bf \bf GV_{4 \sim}}, \{1/3, 2/3, 1\}\rangle$ respectively.
\end{remark}

Observe, however, that for any $n > 4$, G$_{n \sim}$ is no longer equivalent to $\L_n$. Thus we need to study the intermediate logics for $\Gnsimleq$ for $n > 4$,  and this is the goal of the next subsection, while in Section \ref{n5} we will have a closer look to the case $n = 5$.

\subsection{The intermediate logics of $\mathrm{G}\mystrutsubsec^{\text{\footnotesize $\leq$}}_{n\sim}$  for $n> 4$} \label{Gn}

Throughout this section $n$ will denote a natural number such that $n > 4$.

Following the same arguments as in previous sections, it is easy to check that $\Gnsimleq$ is in fact the logic semantically defined by the class of matrices: 
$$\{\langle {\bf A}, F\rangle  : {\bf A} \mbox{ is a } {\rm G}_{n \sim}\mbox{-algebra and } F \mbox{ is a lattice filter of }{\bf A}\}.$$
Therefore, in order to study the intermediate logics between  $\Gnsimleq$ and CPL we need to characterize the (finite) G$_{n\sim}$-algebras.

\begin{proposition}
Every finite ${\rm G}_{n\sim}$-algebra is a finite direct product of finite ${\rm G}_{n\sim}$-chains.
\end{proposition}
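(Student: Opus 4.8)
The plan is to prove that every finite $\rm{G}_{n\sim}$-algebra decomposes as a direct product of chains by combining the general structure theory of G\"odel algebras with the rigidity of the involution. First I would recall that any finite G\"odel algebra (the $\{\land,\lor,\to,0,1\}$-reduct) is a finite product of G\"odel chains, since finite G\"odel algebras are exactly finite relatively pseudocomplemented distributive lattices and these are well known to be subdirect products of chains, with the finiteness forcing a genuine direct product decomposition. Equivalently, I would argue via the prelinearity condition: in a finite $\rm{G}_{n\sim}$-algebra the congruence lattice is determined by the lattice of filters, and prelinearity guarantees that the algebra is a subdirect product of totally ordered quotients; finiteness then upgrades this subdirect product to an actual finite direct product.

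The key additional step is to show that the involution $\ninv$ respects this decomposition, i.e.\ that the same direct product factorization of the G\"odel reduct is simultaneously a decomposition as $\rm{G}_{n\sim}$-algebras. Here I would use that $\ninv$ is an order-reversing involution: it sends the lattice meet to join and vice versa, hence it permutes the maximal (or join-irreducible) filters that induce the direct-product projections. Concretely, writing the G\"odel reduct as $\alg{A} \cong \prod_{i} \alg{C}_i$ with each $\alg{C}_i$ a finite chain and each $\pi_i$ a projection, I would show that $\ninv$ maps each factor $\alg{C}_i$ onto another factor $\alg{C}_{\sigma(i)}$ for some permutation $\sigma$ with $\sigma^2 = \mathrm{id}$. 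Because $\ninv$ is order-reversing and each chain factor is linearly ordered, the image $\ninv(\alg{C}_i)$ is again a chain component; the involution property $\ninv\ninv = \mathrm{id}$ forces these components to be paired up (or fixed), and in every case the induced map on each factor is an order-reversing involution of that chain, making each factor a $\rm{G}_{n\sim}$-subalgebra or a pair of factors swapped by $\ninv$.

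I expect the main obstacle to be exactly this compatibility argument: one must verify that the involution does not mix the chain factors in an irreducible way that would obstruct the product decomposition as $\rm{G}_{n\sim}$-algebras. The delicate point is that $\ninv$ need not fix each factor pointwise or even setwise a priori; it only needs to be a homomorphism of the full signature. I would handle this by noting that the central idempotents of the G\"odel reduct (the elements $e$ with $e \lor \neg_{\rm{G}} e = 1$, which index the direct factors) are definable from the G\"odel operations alone, and that $\ninv$ maps this Boolean algebra of central idempotents to itself since it is a lattice anti-automorphism; hence $\ninv$ permutes the factors. Once the permutation $\sigma$ is identified, the decomposition follows by grouping fixed factors (on which $\ninv$ restricts to an involution, yielding a $\rm{G}_{n\sim}$-chain) and swapped pairs $\{i,\sigma(i)\}$ (which together with $\ninv$ form a product of two G\"odel chains carrying an involution). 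It remains only to check --- and this is routine given the classification of finite G\"odel chains --- that each such factor is itself a $\rm{G}_{n\sim}$-chain, i.e.\ admits the required order-reversing involution, which is guaranteed because the $n$-element G\"odel chain $\bf GV_{n\sim}$ carries a unique such involution as recalled earlier in the excerpt.
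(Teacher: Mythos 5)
Your reduction to the G\"odel reduct is fine as far as it goes: a finite G\"odel algebra is indeed a direct product of G\"odel chains, and the factors correspond to the complemented (central) elements. The genuine gap is in your treatment of the involution. You only use that $\ninv$ is an order-reversing involution of the lattice, and from that alone you cannot exclude the ``swapped pair'' case; worse, your conclusion silently treats that case as harmless, when in fact it would refute the very statement you are proving. If $\ninv$ exchanged two chain factors $\alg{C}_i$ and $\alg{C}_{\sigma(i)}$, the algebra $\alg{C}_i \times \alg{C}_{\sigma(i)}$ with the coordinate-swapping involution is \emph{not} a direct product of ${\rm G}_{n\sim}$-chains: in such a product the involution acts coordinate-wise, so every factor congruence is $\ninv$-compatible, which is exactly what fails for a swap. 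And the swap case cannot be dismissed on the hypotheses you actually invoke: the four-element Boolean algebra $\mathbf{2}\times\mathbf{2}$ with $\ninv(x,y)=(1-y,1-x)$ is a perfectly good order-reversing involution of the lattice which swaps the two factors. (This structure is of course not a ${\rm G}_{\sim}$-algebra --- but only because it violates axiom $(\ninv 2)$, $\neg\varphi\to\ninv\varphi$, which your argument never uses.)

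The missing ingredient is precisely $(\ninv 2)$. For a complemented element $e$ one has $\neg e \le \ninv e$ and $e \le \neg\neg e \le \ninv\neg e$, whence $e \land \ninv e \le \ninv\neg e \land \ninv e = \ninv(\neg e \lor e) = \ninv 1 = 0$, so $\ninv e \le \neg e$ and therefore $\ninv e = \neg e$ on the whole Boolean centre. This identity is what makes each factor congruence $\theta_e = \{(x,y) : x\land e = y \land e\}$ compatible with $\ninv$ (if $x \land e = y \land e$, apply $\ninv$ and meet with $e$, using distributivity and $\neg e \land e = 0$), so no swapping can occur and the decomposition of the reduct is a decomposition as ${\rm G}_{n\sim}$-algebras; the rest of your argument (each factor is a quotient, hence a ${\rm G}_{n\sim}$-algebra, and it is a chain) is then routine. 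For comparison, the paper avoids this compatibility analysis entirely by a different route: $t(x,y,z)=(\Delta(x\leftrightarrow y)\land z)\lor(\neg\Delta(x\leftrightarrow y)\land x)$ is a discriminator term on ${\rm G}_{\sim}$-chains, so the variety generated by them is a discriminator variety, and a standard universal-algebraic theorem then gives that every finite algebra in it is a direct product of simple algebras, i.e.\ of chains.
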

\begin{proof}Notice that for every ${\rm G}_{n \sim}$-chain the term $t(x,y,z):= (\Delta(x\leftrightarrow y)\land z)\lor (\lnot\Delta(x\leftrightarrow y)\land x)$ is a discriminator term,\footnote{In fact, this is a discriminator term in the whole variety of ${\rm G}_{\sim}$-algebras. {For a definition of discriminator term and discriminator variety see \cite{bursan81}.}} hence every ${\rm G}_{n \sim}$-variety  is a discriminator variety. Then the result is a consequence of a result of universal algebra (see for instance \cite[Theorem 9.4, item (d)]{bursan81}).
\end{proof}

In the following we will need to consider products of logical matrices.

\begin{definition} \label{prodmat}
Let $L_i=\langle {\bf A}_i, D_i \rangle$ (for $i \in I$) be a family of logical matrices, where each $D_i$ is an order filter in ${\bf A_i}$. The product of these matrices
  is the logical matrix $L=\Pi_{i \in I} L_i=\langle \Pi_{i \in I}{\bf A}_i, \Pi_{i \in I}D_i \rangle$, where $\Gamma \vdash_L \varphi$ iff, for every tuple of evaluations $(e_i)_{i \in I}$, each $e_i$ over ${\bf A_i}$, the following condition holds: if $e_i(\psi) \in D_i$ for every $i \in I$ and every $\psi \in \Gamma$, then $e_i(\varphi) \in D_i$ for every $i \in I$.
\end{definition}

\begin{remark}
Obviously, a matrix logic $L$ as above is paraconsistent iff all the components $L_i$ are paraconsistent. For example, if one component is $\langle {\bf GV_{2 \sim}}, F_1\rangle$, then the matrix logic is not paraconsistent.
\end{remark}

Since every G$_{\sim}$-algebra is locally finite, every intermediate logic $L$ between $\Gnsimleq$ and CPL  is induced by a family  of product matrices $\langle {\bf A},F\rangle$ where $\bf A$ is a finite direct product of subalgebras of $\bf GV_{n \sim}$ and $F$ is a lattice filter of $A$ compatible\footnote{A filter $F$ of an algebra $\bf A$ is {\em compatible} with a logic $L$ if, whenever $\Gamma \vdash_L \varphi$, the following holds: for every  $\bf A$-evaluation $e$, if $e(\gamma) \in F$ for every $\gamma \in \Gamma$ then $e(\varphi) \in F$.} with $L$.

In \cite{CoEsGo} and \cite{CoEsGiGo} products of logical  matrices were considered for {\L}ukasiewicz finite-valued logics. For instance, \cite{CoEsGo} contains a full description of the set $Int_\Pi(\L_n)$ of logics defined by (sets of) products of matrices over the standard $\L_n$-algebra. Additionally, it also contains an almost full description of the set $Int(\L_n)$ of logics defined by sets of products of matrices over subalgebras of the standard $\L_n$-algebra which are sublogics of $\L_n$,  when $n-1$ is a prime number.

In the rest of this section, we will consider families of intermediate logics  between  $\Gnsimleq$ and CPL of increasing generality. 

First of all we study the matrix logics $\langle {\bf GV_n},F\rangle$ where $F$ is an order filter of $\bf GV_n$. 
In order to simplify the notation, for every nonempty subset $T \subseteq GV_n$ we denote by $L({\cal M}_T )$ the logic defined by the set of matrices ${\cal M}_T = \{\langle {\bf GV_{n\sim}},F_t \rangle : t \in T\}$, where $F_t$ denotes the order filter in $GV_n$ generated by $t \in GV_n$, namely: if $t=i/(n-1)$ then $F_t = \{i/(n-1), (i+1)/(n-1), \ldots, (n-2)/(n-1), 1\}$.\footnote{Strictly speaking, this notation becomes ambiguous if $n$ is not clear from the context and we  identify rational numbers such as $i/(n-1)$ and $i.k/(n-1).k$:  for instance, $1/2$, $2/4$, $3/6$, and so on. In this case, the notation $F_{\frac{1}{2}}$ is problematic, since it  could denote any of an infinite sequence of different filters in  $GV_3$, $GV_5$, $GV_7$,\ldots\ respectively. The right notation for order filters in $GV_n$ should be $F^n_t$. However, the superscript $n$ will be avoided when there is no risk of confusion.} Note that $F_1=\{1\}$. The set of all the logics $L({\cal M}_T )$, for $\emptyset \neq T \subseteq GV_n \setminus\{0\}$, will be denoted by $L({\bf GV_{n \sim}})$. 

 \begin{proposition} \label{propGVn}
The logics $L({\cal M}_{\{t\}})$, with $t \in GV_n \setminus\{0\}$, are pairwise incomparable. Moreover, $L({\cal M}_T )$  is not comparable to  $L({\cal M}_R )$ if $\emptyset\neq T, R \subseteq GV_n \setminus\{0\}$ such that $T\neq R$ and $T$ and $R$ have  the same cardinality.   In addition, the set of logics $L({\bf GV_{n \sim}})$ is a  meet-semilattice  where the logics $L({\cal M}_{\{t\}})$, for $t \in GV_n \setminus\{0\}$, are its maximal elements.
\end{proposition}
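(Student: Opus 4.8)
The plan is to reduce the three assertions to a single separation statement about the principal-filter matrices $\langle{\bf GV_{n\sim}},F_t\rangle$, and then to produce the required separating inferences by direct evaluation over $GV_n$.

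First the structural core, which is easy. Since $L({\cal M}_T)=\bigcap_{t\in T}L({\cal M}_{\{t\}})$ by definition of the logic of a family of matrices, the intersection of two such consequence relations satisfies $L({\cal M}_T)\cap L({\cal M}_R)=L({\cal M}_{T\cup R})$: an inference lies in the left-hand side iff it is valid in every matrix $\langle{\bf GV_{n\sim}},F_s\rangle$ with $s\in T\cup R$. As $T\cup R$ is again a nonempty subset of $GV_n\setminus\{0\}$, the set $L({\bf GV_{n\sim}})$ is closed under the meet (greatest lower bound) in the lattice of consequence relations, so it is a meet-semilattice; moreover larger index sets give smaller logics, so $L({\cal M}_T)\subseteq L({\cal M}_{\{t\}})$ for every $t\in T$.

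Next I would reduce the maximality claim and both incomparability statements to the following \emph{Key Lemma}: if $r\in GV_n\setminus\{0\}$ and $r\notin T$, then $L({\cal M}_T)\not\subseteq L({\cal M}_{\{r\}})$. Granting it: taking $T=\{t\}$ with $t\neq r$ gives $L({\cal M}_{\{t\}})\not\subseteq L({\cal M}_{\{r\}})$, whence pairwise incomparability of the singletons by symmetry; if $L({\cal M}_{\{t\}})\subseteq L({\cal M}_R)$ then $L({\cal M}_{\{t\}})\subseteq L({\cal M}_{\{r\}})$ for every $r\in R$, forcing $r=t$ and hence $R=\{t\}$, so the singletons are maximal (and no $T$ with $|T|\ge 2$ is maximal, since then $L({\cal M}_T)\subsetneq L({\cal M}_{\{t\}})$ strictly for $t\in T$, equality being excluded by the Key Lemma applied to a second element of $T$); finally, if $T\neq R$ have equal cardinality, choosing $r\in R\setminus T$ and $t\in T\setminus R$ and applying the Key Lemma to $(T,r)$ and to $(R,t)$ yields incomparability of $L({\cal M}_T)$ and $L({\cal M}_R)$. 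Thus everything rests on the Key Lemma.

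To prove the Key Lemma I must exhibit, for $r\notin T$, an inference valid in $\langle{\bf GV_{n\sim}},F_t\rangle$ for every $t\in T$ but failing in $\langle{\bf GV_{n\sim}},F_r\rangle$. Two simple rules already separate across the fixpoint $1/2$ of $\ninv$: the explosion rule $\varphi\wedge\ninv\varphi\vdash\bot$ is valid in $\langle{\bf GV_{n\sim}},F_t\rangle$ exactly when $t$ lies above $1/2$, while the excluded-middle schema $\top\vdash\varphi\vee\ninv\varphi$ is valid exactly when $t$ lies at or below $1/2$. For the finer, same-side separations I would use \emph{folding} formulas such as $\varphi\wedge\ninv\varphi$ and the clamp $\ninv\varphi\to_{\mathrm{G}}\varphi$ together with $\Delta$ and $\neg$; for example $\big(\varphi\wedge\neg\neg(\varphi\wedge\ninv\varphi)\big)\vdash(\varphi\wedge\ninv\varphi)$ is valid in $\langle{\bf GV_{n\sim}},F_t\rangle$ precisely for $t=1/(n-1)$, isolating the smallest filter, and $\ninv$-symmetric variants isolate others. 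The systematic step is to realize, for each value $v_k=k/(n-1)$, schemes whose set of filters of validity is the up-set $\{t:t\ge v_k\}$, respectively the down-set $\{t:t\le v_k\}$, and then to combine them (splitting $T$ according to whether its elements lie below or above $r$) into a rule valid on all of $T$ but failing at $r$.

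The main obstacle is exactly this construction of tunable separators. Unlike the {\L}ukasiewicz case of \cite{CoEsGo,CoEsGiGo}, where all constants and hence all ``order-filter characteristic functions'' are term-definable, in ${\bf GV_{n\sim}}$ (for $n>4$) the sharp threshold and point formulas are \emph{not} definable: ${\bf GV_{n\sim}}$ carries nontrivial congruences onto smaller involutive G\"odel chains ${\bf GV_{m\sim}}$ (and, for $n$ even, onto the two-element Boolean algebra), and more generally invariant relations — subuniverses of powers of ${\bf GV_{n\sim}}$ — that every term operation must preserve; these rule out, e.g., a one-variable formula sending $1/(n-1)$ to $0$ while fixing $2/(n-1)$. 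Consequently the separation cannot be obtained by a single-variable ``threshold'' formula and must be engineered at the level of \emph{validity sets}, using the folding formulas above and, where necessary, several propositional variables so as to sidestep those invariant relations. Checking that the chosen schemes have exactly the claimed filters of validity — a finite but delicate evaluation over $GV_n$ — is the technical heart of the argument.
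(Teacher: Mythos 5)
Your structural scaffolding is fine: the identity $L({\cal M}_T)\cap L({\cal M}_R)=L({\cal M}_{T\cup R})$ gives the meet-semilattice claim, and your reduction of both incomparability statements and maximality to the Key Lemma ($r\notin T$ implies $L({\cal M}_T)\not\subseteq L({\cal M}_{\{r\}})$) is correct. But the Key Lemma is where all the content of the proposition lives, and you do not prove it: you only treat the threshold at $1/2$ (explosion / excluded middle) and one attempted point-isolation, and then \emph{posit} that for every value $v_k$ there exist schemes whose sets of filters of validity are exactly the up-set and the down-set of $v_k$, explicitly deferring their construction and verification as ``the technical heart.'' Moreover, the concrete claims you do make contain errors that show how fragile this plan is: your isolating inference $\varphi\wedge\neg\neg(\varphi\wedge\ninv\varphi)\vdash\varphi\wedge\ninv\varphi$ is also \emph{vacuously} valid in $\langle{\bf GV_{n\sim}},F_1\rangle$ (its premise never takes value $1$), so its validity set is $\{\frac{1}{n-1},1\}$ rather than $\{\frac{1}{n-1}\}$; for $n$ even, the excluded-middle rule is also valid at the smallest filter above $1/2$; and your stated obstruction is false: ${\bf GV_{n\sim}}$ has \emph{no} nontrivial congruences whatsoever, since (as the paper notes) $t(x,y,z)$ is a discriminator term and hence every ${\rm G}_{n\sim}$-chain is simple. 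The genuine obstruction to one-variable threshold formulas is the existence of isomorphisms between distinct proper subalgebras (e.g.\ $\{0,\frac{1}{n-1},\frac{n-2}{n-1},1\}\cong\{0,\frac{2}{n-1},\frac{n-3}{n-1},1\}$), and it only bites for $n>5$: for $n=5$ the paper observes that every value \emph{is} characterizable by a one-variable formula.

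The missing idea, which is the paper's key move, is to use $n$ variables simultaneously: since $\Delta$ is available, one can build $\Phi(p_0,\dots,p_{n-1})$, e.g.\ $\neg p_0\wedge\Delta p_{n-1}\wedge\bigwedge_{i=0}^{n-2}\neg\Delta(p_{i+1}\to p_i)$, which takes value $1$ exactly at the single evaluation with $e(p_i)=i/(n-1)$ for all $i$, and value $0$ at every other evaluation --- a simultaneous naming of all truth constants that sidesteps the non-definability issue you identify. With it, $\Phi\wedge p_i\vdash\bot$ has validity set $\{t: t>\frac{i}{n-1}\}$, and $\Phi\wedge p_j\vdash p_i$ (for $i<j$) has validity set $\{t: t\le\frac{i}{n-1}\}\cup\{t: t>\frac{j}{n-1}\}$; the paper reads off pairwise incomparability directly from these two families, and they also yield your Key Lemma (take $\frac{j}{n-1}=r$ and $\frac{i}{n-1}=\max\{t\in T: t<r\}$ when that set is nonempty, and use the first rule when $T$ lies entirely above $r$). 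So your reduction is compatible with the paper's argument, but without $\Phi$ or an equivalent device the proof does not go through; what you left as a delicate finite verification is precisely the step the paper solves with this construction.
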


\begin{proof}
Let  $\vdash_{\{t\}}$ be the consequence relation of the logic $L({\cal M}_{\{t\}} )$ defined by the matrix $\langle {\bf GV_{n \sim}}, F_t\rangle$, with $ t \in GV_n \setminus\{0\}$. Observe that in any of these logics, since we have the $\Delta$ operator,  it is possible to build a propositional formula on $n$  variables  $\Phi(p_0,p_1,\ldots,p_{n-1})$ such that,  for every evaluation $e$ of formulas on $\bf GV_{n \sim}$,
$$e(\Phi(p_0,p_1,\ldots,p_n)) = \left \{ \begin{tabular}{ll}
$1$, & if $e(p_i)= \frac{i}{n-1}$ for all  $ i= 0,1,\ldots,n-1 $\\
$0$, & otherwise.\\
\end{tabular}\right. $$
Let $i,j\in \{1,2,\ldots,n-1\}$ be such that $i< j$. Then,
\begin{itemize}
\item $\Phi(p_0,p_1,\ldots,p_n) \land p_i \vdash_{\{\frac{j}{n-1}\}} \bot$ and $\Phi(p_0,p_1,\ldots,p_n) \land p_i \nvdash_{\{\frac{i}{n-1}\}} \bot$
\item $\Phi(p_0,p_1,\ldots,p_n) \land p_j \nvdash_{\{\frac{j}{n-1}\}} p_i$ and $\Phi(p_0,p_1,\ldots,p_n) \land p_j\vdash_{\{\frac{i}{n-1}\}} p_i$
\end{itemize}
Therefore $\vdash_{\{t\}}$ and $\vdash_{\{t'\}}$ are not comparable if $0 < t < t' < 1$.
From this, it is easy to prove that for any subsets {$\emptyset\neq T, R \subseteq GV_n \setminus\{0\}$} with the same cardinality and such that $T \neq R$, the logic $L({\cal M}_T )$  is not comparable to  $L({\cal M}_R )$. Finally, if $\emptyset\neq T, R \subseteq GV_n \setminus\{0\}$ then $L({\cal M}_T) \cap L({\cal M}_R) = L({\cal M}_{T \cup R})$. Hence $L({\bf GV_{n \sim}})$ is a  meet-semilattice such that the maximal elements are exactly the logics $L({\cal M}_{\{t\}})$, for $t \in GV_n \setminus\{0\}$.
\end{proof}

On the other hand,  as it was done in \cite{CoEsGo} and \cite{CoEsGiGo} for {\L}ukasiewicz finite-valued logics, product matrices can be considered.

\begin{definition} \label{def_LT} Given a nonempty set $T \subseteq GV_n\setminus\{0\}$, $T=\{t_1,\ldots,t_k\}$ (where $k \geq 1$ and $t_i < t_j$ if $i < j$), we will denote by $\mathbb{L}(T)$ the matrix logic $\langle{\bf (GV_{n\sim})}^k, \Pi_{i =1}^k
F_{t_i}\rangle$  defined on a direct product of {$\bf GV_{n \sim}$} by means of order filters.
\end{definition}

Proceeding as in \cite[Prop. 11]{CoEsGo}, one can show that  $\mathbb{L}(T)$ can be characterized as follows:
\begin{center}
$\Gamma \vdash_{\mathbb{L}(T)} \varphi$ iff \; either $\Gamma \vdash_{\{t_k\}} \bot $ \; or $\Gamma \vdash_{\{t\}} \varphi$ for all $t \in T$.
\end{center}
Note that the first condition, $\Gamma \vdash_{\{t_k\}} \bot$, amounts to a sort of graded inconsistency condition for $\Gamma$ (it reads $e(\psi_i) < \max T$ for any $\bf GV_n$-evaluation $e$ and for any $\psi_i \in \Gamma$). On the other hand, the second condition, $\Gamma \vdash_{\{t\}} \varphi$ for all $t \in T$, amounts to require that $\varphi$ follows from $\Gamma$ in the logic defined by the set matrices ${\cal M}_T = \{\langle {\bf GV_{n\sim}},F_t \rangle : t \in T\}$. Hence, $\varphi$ follows from $\Gamma$ in $\mathbb{L}(T)$ whenever, either $\Gamma$ is inconsistent or contradictory to a certain degree (the maximum of $T$), or $\varphi$ follows from $\Gamma$ in the logic of ${\cal M}_T$. This makes it clear that the latter is a sublogic of the product matrix logic $\mathbb{L}(T)$.

The results become different when studying the matrix logics that involve components over finite subalgebras belonging to the variety generated by $\bf GV_{n \sim}$ because even though all of them are direct products of subalgebras of $\bf GV_{n \sim}$, the number of subalgebras of $\bf GV_{n \sim}$ is significantly larger than in the {\L}ukasiewicz case. Indeed:
 \begin{itemize}
 \item Subalgebras of $\bf GV_{n \sim}$ are those chains that can be obtained from $GV_n$ by removing
 a set of pairs of elements $\{a_i, \ninv a_i\}$ with $a_i \notin \{0,1\}$. In particular, if $n$ is odd one can remove just the fix point.
 \item Therefore the logics between $\Gnsimleq$ and CPL are those logics defined by matrices over direct products of subalgebras of $\bf GV_{n \sim}$ and with products of order filters on the corresponding components of the product algebra. Of course, we have to avoid the repetition of components in these products.
 \end{itemize}

\begin{example}\label{exJ3xJ4}
Recall the logics $\mathsf{J}_3$ and $\mathsf{J}_4$ from Remark \ref{J34}.
Since $\bf GV_{3}$ and $\bf GV_{4}$ are subalgebras of $\bf GV_{5}$, by the characterization of all extensions of $\Gnsimleq$ we have that $\mathsf{J}_3\times \mathsf{J}_4$ coincides (up to language) with $\langle {\bf GV_{3 \sim}} \times {\bf GV_{4 \sim}}, F_{\frac{1}{2}}\times F_{\frac{1}{3}}\rangle$, hence it is a paraconsistent extension of $\Gcsimleq$ that is comparable neither to $\mathsf{J}_3$ nor to $\mathsf{J}_4$. Indeed, it is immediate to see that $\vdash_{\mathsf{J}_3\times \mathsf{J}_4}\varphi$ iff $\vdash_{\mathsf{J}_3}\varphi$ and $\vdash_{\mathsf{J}_4}\varphi$ for every formula $\varphi$. Thus, since the theorems of $\mathsf{J}_3$ and those of $\mathsf{J}_4$ are not comparable,  $\mathsf{J}_3\times \mathsf{J}_4$ is an extension neither of $\mathsf{J}_3$ nor of $\mathsf{J}_4$. 

On the other hand, it is also easy to check that $\varphi \vdash_{\mathsf{J}_3\times \mathsf{J}_4}\bot$ iff $\varphi \vdash_{\mathsf{J}_3}\bot$ or $\varphi \vdash_{\mathsf{J}_4}\bot$. Consider now the formulas:

- $\alpha = \Delta(p \leftrightarrow \ninv p)$

- $\beta = \neg((p_1 \to p_2) \lor (p_2 \to p_3) \lor (p_3 \to p_4))$

\noindent It is clear that $\alpha \not\vdash_{\mathsf{J}_3}\bot$ while  $\alpha \vdash_{\mathsf{J}_4}\bot$, and hence $\alpha \vdash_{\mathsf{J}_3\times \mathsf{J}_4}\bot$ as well. Analogously, we also have that $\beta \not\vdash_{\mathsf{J}_4}\bot$ while  $\beta \vdash_{\mathsf{J}_3}\bot$, and hence $\beta \vdash_{\mathsf{J}_3\times \mathsf{J}_4}\bot$. 
Thus, $\mathsf{J}_3\times \mathsf{J}_4$, $\mathsf{J}_3$ and $\mathsf{J}_4$ are mutually not comparable.
\end{example}

  Finally we can characterize the logics satisfying the explosion rule for $\ninv$:
  $$\frac{\varphi \quad \ninv\varphi}{\bot}$$
 Indeed, we have:  \begin{itemize}
 \item Following the same reasoning as in  \cite{CoEsGo} for the $n$-valued {\L}ukasiewicz case, one can show that  the minimal matrix logic satisfying the explosion rule, i.e.\ the expansion of $\Gnsimleq$ with the above rule, is the logic $L_{exp}$ whose consequence relation is defined as
 \begin{center}
 $\Gamma \vdash_{L_{exp}} \varphi$ iff either $\Gamma \vdash_{\{ \frac{i}{n-1}\}} \bot$,  or $\Gamma \vdash_{\Gnsimleq} \varphi$.
 \end{center}
where $i$ is the first natural such that $\frac{i}{n-1} > 1/2$. By manipulating the right hand-side of the above condition,  $\Gamma \vdash_{L_{exp}} \varphi$ turns out to be equivalent to the  two further conditions:

$\Gamma \vdash_{L_{exp}} \varphi$  iff  either $\Gamma \vdash_{\{ \frac{i}{n-1}\}} \bot$  or ($\Gamma \vdash_{T_1} \varphi$ and $\Gamma \vdash_{T_2} \varphi$) 

\hspace{1.5cm} iff  [$\Gamma \vdash_{\{ \frac{i}{n-1}\}} \bot$  or $\Gamma \vdash_{T_1} \varphi$]  and [$\Gamma \vdash_{\{ \frac{i}{n-1}\}} \bot$  or  $\Gamma \vdash_{T_2} \varphi$)]

\noindent But, according to the paragraph after Def. \ref{def_LT},   the condition [$\Gamma \vdash_{\{ \frac{i}{n-1}\}} \bot$  or $\Gamma \vdash_{T_1} \varphi$] is just saying that $\varphi$ follows from $\Gamma$ in the logic   $\mathbb{L}(T_1)$, while the condition  [$\Gamma \vdash_{\{ \frac{i}{n-1}\}} \bot$  or  $\Gamma \vdash_{T_2} \varphi$)] is clearly equivalent to only 
$\Gamma \vdash_{T_2} \varphi$. Therefore, 
$$ L_{exp} =  \mathbb{L}(T_1) \cap L({\cal M}_{T_2}),$$
or in other words, $L_{exp}$ is the logic defined by the following set of matrices:  

\begin{tabular}{lll}
${\cal M}_n$ & = &$\{ \langle {\bf (GV_{n \sim})}^{i},\Pi_{r = 1}^{i}F_{\frac{r}{n-1}}\rangle \} \; \cup$ \\
& & $ \{ \langle {\bf GV_{n \sim}},F_1\rangle, \langle {\bf GV_{n \sim}}, F_{\frac{n-2}{n-1}}\rangle, \ldots, \langle {\bf GV_{n \sim}}, F_{\frac{i+1}{n-1}}\rangle \}$.
\end{tabular}

 \item Therefore, the explosion rule is valid in all the logics extending the logic $L_{exp}$. Hence, all of them are explosive, while those not extending it are paraconsistent.
 \end{itemize}

  \subsection{Example: the case $n = 5$} \label{n5}
 As an example we study the case of  the set $Int(\Gcsimleq)$ of matrix logics defining intermediate logics between $\Gcsimleq$ and CPL. Recall that $GV_{5}$ denotes the ordered set $\{0,1/4,1/2,3/4,1\}$. We start with some basic facts:

 \begin{itemize}
\item Consider the subset $L({\bf GV_{5\sim}}) \subset Int(\Gcsimleq)$ of logics defined by the set of matrices ${\cal M}_T = \{\langle {\bf GV_{5\sim}}, F_{t}\rangle : t \in T\}$ for $\emptyset \neq T \subseteq GV_5 \setminus \{0\}$, as it was done in Subsection~\ref{Gn}.
According to Proposition~\ref{propGVn}, the logics of the matrices $\langle {\bf GV_{5\sim}}, F_{i/4}\rangle$ for $i \in \{1, 2, 3, 4\}$ are pairwise incomparable, and in fact they are the maximal logics in $L({\bf GV_{5\sim}})$, while $\bigcap_{ i \in \{1, 2, 3, 4\}} \langle {\bf GV_{5\sim}}, F_{i/4}\rangle = \Gcsimleq$ is the minimum logic of $L({\bf GV_{5\sim}})$ (and clearly of $Int(\Gcsimleq)$ as well).

\item Let $L_\Pi({\rm G}_{5 \sim}) \subset Int(\Gcsimleq)$
be the set  of  matrix logics of the form $\mathbb{L}(T)$ defined on direct products of {$\bf GV_{5 \sim}$} by means of products of order filters (recall Definition~\ref{def_LT}). Then, these logics satisfy the following conditions (like in the case of {\L}ukasiewicz logics):

 \begin{itemize}
\item If $\emptyset\neq T, R \subseteq GV_5\setminus\{0\}$ are such that $\max \ T = \max \ R$, then $\mathbb{L}(T) \cap \mathbb{L}(R) = \mathbb{L}(T \cup R)$.

\item The maximal elements of
$L_\Pi({\rm G}_{5 \sim})$ are the matrix logics of the type $\langle {\bf (GV_{5\sim})}^2, F_{i/4} \times F_{j/4} \rangle$ with $i,j \in \{1,2,3,4\}$ and $i < j$.

\item The matrix logic $\langle {\bf (GV_{5 \sim})}^2, F_i \times F_j\rangle$ for $0<i < j$ contains $\langle {\bf GV_{5 \sim}}, F_j\rangle$  and it is not comparable with $\langle {\bf GV_{5 \sim}}, F_k\rangle$ for $0<k \neq j$.

\end{itemize}

\item Finally let us consider the subset $L_{\Pi^*}({\rm G}_{5 \sim})\subset Int(\Gcsimleq)$ of matrix logics defined on direct products of $\bf GV_{5 \sim}$ and their subalgebras together with direct products of order filters.  The subalgebras of $\bf GV_{5 \sim}$ are (isomorphic to)  $\bf GV_{2 \sim}$, $\bf GV_{3 \sim}$ and $\bf GV_{4 \sim}$, and thus the number of matrix logics in $L_{\Pi^*}({\rm G}_{5 \sim})$ proliferate in a large number. Namely, to define matrix logics we have the following components to combine: 4 algebras, $\bf GV_{5 \sim}$, $\bf GV_{4 \sim}$, $\bf GV_{3 \sim}$ and $\bf GV_{2 \sim}$, and 10 order filters:  4 over $\bf GV_{5 \sim}$, 3 over $\bf GV_{4 \sim}$, 2 over $\bf GV_{3 \sim}$ and 1 over $\bf GV_{2 \sim}$. Therefore we have all the possible products (without repetitions) of these 10 components.
 \end{itemize}

  We can also characterize the minimal extension of $\Gcsimleq$ with the explosion rule as the logic $L({\cal M}_5)$ of the set of matrices
 $$ {\cal M}_5 = \{  \langle {\bf (GV_{5 \sim})}^3, F_{3/4} \times  F_{2/4} \times F_{1/4}\rangle, \langle {\bf GV_{5 \sim}}, F_1\rangle \}.$$

Concerning axiomatization, as in case of {\L}ukasiewicz logics, we can give an axiomatic characterization of the logics of $L\Pi({\rm G}_{5 \sim})$.
To see this,
first of all, observe that in G$_{5 \sim}$, for every value $i/4 \in GV_5\setminus \{0\}$ there exists a formula in one variable $\varphi(p)$ characterizing the value $i/4$, i.e.\ such that for any evaluation $e$, $e(\varphi(p)) = 1$ if $e(p) = i/4$, and $0$ otherwise. For example, for the value  $1/2$ the formula can be $\Delta (p \leftrightarrow \ninv p)$. It is also possible to define a formula characterizing the sets of values $\geq i/4, > i/4, \leq i/4$ and $< i/4$.

 Using this observation, it is easy to see that every matrix logic of type  $\langle {\bf GV_{5\sim}}, F_{i/4}\rangle$ or $\mathbb{L}(T) \in L\Pi({\rm G}_{5 \sim})$ can be axiomatized. For instance, here we give the following example:
 \begin{itemize}
 \item The matrix logic $({\bf GV_{5\sim}}, F_{i/4})$ is  axiomatized by adding to the axioms and rules of  $\Gcsimleq$ the following restricted inference rule:
 \begin{description}
\item[] if $\vdash_{{\rm G}_{5\sim}} (\varphi < i/4)\lor ((\varphi \geq i/4) \land (\psi \geq i/4)$,  from $\varphi$ derive $\psi$
\end{description}
 \end{itemize}
Other matrix logics of $L\Pi({\rm G}_{5 \sim})$ can be axiomatized in an analogous way.
 Notice that these axiomatizations are possible since, in G$_{5 \sim}$, for every element $a \in GV_5$ there exists a characterizing formula in one variable. This is not true in G$_{n \sim}$ for $n>5$, and thus the previous axiomatization results are not generalizable to G$_{n \sim}$ for $n > 5$.

\section{Ideal and saturated paraconsistent extensions of $\mathrm{G}\mystrutsec^{\text{\footnotesize $\leq$}}_{n\sim}$} \label{ideal}

 As already noticed,  matrix logics over direct products of subalgebras of $\bf GV_{n \sim}$ with products of order filters are $\sim$-paraconsistent iff all the components are $\sim$-paraconsistent. In this section, using the results of the previous section, we study the status of the logics between $\Gnsimleq$ and CPL in relation to ideal $\sim$-paraconsistency. Namely, we  show that there are only two extensions of $\Gnsimleq$ which are ideal $\sim$-paraconsistent. Moreover we show that there is another $\sim$-paraconsistent extension of $\Gnsimleq$  which, although not being ideal $\sim$-paraconsistent, it has the remarkable property of not having any proper $\sim$-paraconsistent extension.

We have already briefly discussed in the Introduction  the concept of {\em ideal paraconsistent logics},  introduced by Arieli et al.  in   \cite{ArieliAZ11a}.\footnote{The authors, as it was mentioned in Section~\ref{sect1},  have changed the terminology ``ideal paraconsistent logic'' to ``fully maximal and normal paraconsistent logic''. However, it should be noticed that being normal, according to~\cite[Definition~1.32]{AAZbook}, means that the logic $L$ has, besides a deductive implication, a conjunction and a disjunction satisfying the usual properties in terms of consequence relations. Here we decide to keep the original definition of ideal paraconsistency. It is worth noting that all the ideal (and saturated) logics considered in this paper and in~\cite{CoEsGiGo} are normal in the sense of~\cite{AAZbook}.} We recall here this notion.

\begin{definition} [c.f. \cite{ArieliAZ11a}] \label{IdPar}
Let $L$ be a propositional logic defined over a signature $\Theta$ (with consequence relation $\vdash_L$) containing at least a unary connective $\urcorner$ and a binary connective $\to$ such that:
\begin{itemize}
\item[(i)] $L$ is paraconsistent w.r.t.\ $\urcorner$ (or simply $\urcorner$-paraconsistent), that is,  there are formulas $\varphi,\psi \in \mathcal{L}(\Theta)$ such that $ \varphi, \urcorner\varphi, \nvdash_L\psi$;
\item[(ii)] $\to$ is an implication for which the deduction-detachment theorem holds in $L$, that is, $\Gamma \cup \{\varphi\} \vdash_L \psi$ iff  $\Gamma \vdash_L \varphi \to\psi$, for every set for formulas $\Gamma \cup \{\varphi, \psi\} \subseteq  \mathcal{L}(\Theta)$.
\item[(iii)] There is a presentation of CPL as a matrix logic  $L' = \langle \mathbf{A}, \{1\}\rangle$  over the signature $\Theta$ such that the domain of $\mathbf{A}$ is $ \{0,1\}$, and $\urcorner$ and $\to$ are interpreted as the usual 2-valued negation and implication of CPL, respectively.

\item[(iv)] $L$ is a sublogic of CPL in the sense that ${\vdash_L} \subseteq {\vdash_{L'}}$, that is, $\Gamma \vdash_L \varphi$ implies  $\Gamma \vdash_{L'} \varphi$,  for every set for formulas $\Gamma \cup \{\varphi\} \subseteq  \mathcal{L}(\Theta)$.
\end{itemize}
Then, $L$ is said to be an {\em ideal $\urcorner$-paraconsistent logic} if it is maximal w.r.t. CPL, and every proper extension of $L$ over $\Theta$ is not $\urcorner$-paraconsistent.

\end{definition}

An implication connective satisfying the above condition (ii) is usually called {\em deductive implication}.

\begin{remark}\label{3i4IdealParac}
As it has been argued in Remark \ref{J34},
$\mathsf{J}_3$ is equivalent to $\langle {\bf GV}_{3\sim}, F_{\frac{1}{2}}\rangle$ and therefore for every odd number $n\geq3$, $\mathsf{J}_3$ is an extension of any $\Gnsimleq$ (recall Fig. \ref{G_sim}).
Similarly,  $\mathsf{J}_4$ is equivalent to $\langle {\bf GV}_{4\sim}, F_{\frac{1}{3}}\rangle$. Thus $\mathsf{J}_4$ is an extension of $\Gnsimleq$  for every $n\geq 4$.

In \cite[Proposition 6.3]{CoEsGiGo} it is shown that $\mathsf{J}_3$ and $\mathsf{J}_4$ are ideal $\sim$-paraconsistent logics where the deductive implication in the signature of $\rm{G}_{\sim}$
is the term-defined implication $x\Rightarrow y:=  \neg x \lor y$.\footnote{Observe that in~\cite{CoEsGiGo}, $\neg$ denotes the \L ukasiewicz negation, while the G\"odel negation for $\mathsf{J}_3$ and $\mathsf{J}_4$  is respectively denoted by $\sim^1_2$ and $\sim^1_3$.}
\end{remark}

As discussed in Section~\ref{sect1}, requiring a paraconsistent logic to   be maximal w.r.t. CPL in order to be `ideal' (in the sense of being `optimal') is a debatable  issue (see~\cite{WansOdin}). On the other hand, the other requirements of Definition~\ref{IdPar} seem interesting, and a system enjoying such features could be considered as a remarkable paraconsistent logic. This motivates the following definition.

\begin{definition}
 A logic $L$ is {\em saturated} $\urcorner$-paraconsistent  if it satisfies all the conditions $(i)$ to $(iv)$ of the previous definition, and every proper extension of $L$ over $\Theta$ is not $\urcorner$-paraconsistent.\footnote{Using the terminology of~\cite{AAZbook}, a saturated paraconsistent logic is a logic such that: (i)~it has an implication, (ii)~it is {\bf F}-contained in CPL,  and (iii)~it is strongly maximal.}
\end{definition}

{
\begin{remark}
In \cite[p. 273]{RibCon} it was introduced the notion of maximality of a logic $L$ w.r.t. an inference rule $r$. Namely, given a  Tarskian and structural propositional logic $L$ over a signature $\Theta$, and given an inference rule $r$ over $\Theta$, $L$ is {\em $r$-maximal} if $r$ is not derivable in $L$, but any proper extension of $L$ over $\Theta$ derives $r$.\footnote{This was denoted by $L \in {\bf Triv}_\Theta \bot \{r\}$ in \cite{RibCon}, where ${\bf Triv}_\Theta$ denotes the trivial logic over the signature $\Theta$.} Clearly ideal and saturated paraconsistent logics are special cases of $r$-maximal logics, where $r$ is the explosion rule.\footnote{Indeed, by means of the notion of {\em remainder set} $L \bot R$ of a logic $L$ w.r.t. a set of rules $R$ introduced in~\cite[Definition~7]{RibCon}, several concepts relative to  maximality proposed in the literature can be easily represented, see~\cite[p. 273]{RibCon}.}

\end{remark}}

\begin{proposition}\label{J3xJ4}
  $\mathsf{J}_3\times \mathsf{J}_4:=\langle {\bf GV_{3\sim}\times GV_{4\sim}}, F_{\frac{1}{2}}\times F_{\frac{1}{3}}\rangle$ is saturated $\sim$-paracon\-sis\-tent, but not ideal $\sim$-paraconsistent.
\end{proposition}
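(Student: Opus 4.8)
The plan is to check, one at a time, the four defining clauses $(i)$–$(iv)$ of Definition~\ref{IdPar} (these are what ``saturated'' adds to strong maximality), then to prove strong maximality (no proper $\ninv$-paraconsistent extension over $\Theta$), and finally to exhibit the failure of maximality with respect to CPL (which is exactly what separates saturated from ideal). Clauses $(i)$, $(iii)$, $(iv)$ are inherited from the factors: by the remark following Definition~\ref{prodmat} the product matrix is $\ninv$-paraconsistent precisely because both $\mathsf{J}_3$ and $\mathsf{J}_4$ are (Remarks~\ref{J34} and~\ref{3i4IdealParac}); for $(iii)$ I would reuse the presentation of CPL as $\langle\mathbf{GV_{2\sim}},\{1\}\rangle$ with $\ninv$ and $x\Rightarrow y:=\neg x\lor y$ that already works for $\mathsf{J}_3$ and $\mathsf{J}_4$; and $(iv)$ holds because $\mathsf{J}_3\times\mathsf{J}_4$ lies between $\Gcsimleq$ and CPL, so $\vdash_{\mathsf{J}_3\times\mathsf{J}_4}\,\subseteq\,\vdash_{\mathrm{CPL}}$.

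The delicate clause is $(ii)$: that $x\Rightarrow y:=\neg x\lor y$ is a genuine deductive implication for the product. Since the deduction--detachment theorem does \emph{not} transfer automatically from factors to a product matrix, I would prove it by hand, using that $\Rightarrow$ restricts coordinatewise to the deductive implication of $\mathsf{J}_3$ and of $\mathsf{J}_4$, where DDT is known. Detachment (the $\Gamma\vdash\varphi\Rightarrow\psi\Rightarrow\Gamma,\varphi\vdash\psi$ direction) holds coordinatewise because $e_i(\varphi)\in F_i$ forces $\neg e_i(\varphi)=0$, whence $e_i(\varphi\Rightarrow\psi)=e_i(\psi)$. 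The \emph{hard} direction is the deduction half when $\Gamma\cup\{\varphi\}$ trivialises one factor but not the other; this is the point that needs genuine care, and it is where the precise choice of the signature $\Theta$ (in particular which connectives are available to express a factor-trivialising premise) enters. This is the first of the two main obstacles.

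For strong maximality I would work through the classification of Subsection~\ref{Gn}: every logic between $\Gcsimleq$ and CPL is given by a family of product matrices whose components are subalgebras of $\mathbf{GV_{5\sim}}$ with order filters. A $\ninv$-paraconsistent extension $L\supsetneq\mathsf{J}_3\times\mathsf{J}_4$ must be defined by matrices all of whose components are $\ninv$-paraconsistent, i.e.\ of the form $\langle\mathbf{GV_{m\sim}},F_t\rangle$ with $t\le 1/2$; up to isomorphism these are exactly the $\mathsf{J}_3$-component $\langle\mathbf{GV_{3\sim}},F_{1/2}\rangle$, the $\mathsf{J}_4$-component $\langle\mathbf{GV_{4\sim}},F_{1/3}\rangle$, and the two paraconsistent matrices over $\mathbf{GV_{5\sim}}$. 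I would then determine which products of these validate $\vdash_{\mathsf{J}_3\times\mathsf{J}_4}$: the formulas $\alpha$ and $\beta$ of Example~\ref{exJ3xJ4} force any such model to contain an ``even'' component (so that $\alpha\vdash\bot$ holds) and a component of size $\le 3$ (so that $\beta\vdash\bot$ holds), hence both a $\mathsf{J}_4$- and a $\mathsf{J}_3$-factor, since the $\mathbf{GV_{5\sim}}$-components are odd and of size $>4$. Combining this with the maximal paraconsistency of $\mathsf{J}_3$ and $\mathsf{J}_4$ (they are ideal, so admit no proper paraconsistent extension), one shows each admissible model already yields a logic contained in $\mathsf{J}_3\times\mathsf{J}_4$, so $L=\mathsf{J}_3\times\mathsf{J}_4$, a contradiction. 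This finite but intricate case analysis over the admissible product matrices is the principal obstacle.

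Finally, to see $\mathsf{J}_3\times\mathsf{J}_4$ is \emph{not} ideal I would show it is not maximal with respect to CPL by producing a strict intermediate axiomatic extension. By Example~\ref{exJ3xJ4} its theorems are $\mathrm{Thm}(\mathsf{J}_3)\cap\mathrm{Thm}(\mathsf{J}_4)$, and since these sets are incomparable I can choose a CPL-theorem $\theta\in\mathrm{Thm}(\mathsf{J}_3)\setminus\mathrm{Thm}(\mathsf{J}_4)$ that is not a theorem of the product. Adding $\theta$ as an axiom restricts the defining matrices to those validating $\theta$, which excludes the full $\mathbf{GV_{4\sim}}$-factor but not the more classical ones; it then suffices to produce a single non-classical matrix $N$ (a product of subalgebras of $\mathbf{GV_{5\sim}}$ distinct from $\langle\mathbf{GV_{2\sim}},\{1\}\rangle$) with $\vdash_{\mathsf{J}_3\times\mathsf{J}_4}\,\subseteq\,\vdash_N$ and $N\models\theta$. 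Then $\mathsf{J}_3\times\mathsf{J}_4+\theta\subseteq L(N)\subsetneq\mathrm{CPL}$, so $\mathsf{J}_3\times\mathsf{J}_4+\theta\neq\mathrm{CPL}$ while being a proper extension, which defeats maximality with respect to CPL. Locating and verifying this witness $N$ inside the classification is the remaining piece of work; together with strong maximality it yields that $\mathsf{J}_3\times\mathsf{J}_4$ is saturated but not ideal.
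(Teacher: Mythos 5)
Your plan reproduces the paper's proof almost clause for clause, and the two pieces you leave open on the maximality/non-ideality side are exactly the ones the paper fills, in the way you predict. For strong maximality, note one refinement: paraconsistency of a proper extension $L$ only guarantees that \emph{some} defining matrix of $L$ has all components $\ninv$-paraconsistent (not all of them), and the paper runs the argument on that single matrix: its components lie among $\mathsf{J}_3$, $\mathsf{J}_4$, $\langle\mathbf{GV_{5\sim}},F_{\frac{1}{2}}\rangle$, $\langle\mathbf{GV_{5\sim}},F_{\frac{1}{4}}\rangle$; incomparability of $\mathsf{J}_3$ (resp.\ $\mathsf{J}_4$) with $\mathsf{J}_3\times\mathsf{J}_4$ forces a $\mathsf{J}_4$-component and a non-$\mathsf{J}_4$-component (the latter three matrices all contain $\mathsf{J}_3$ as a submatrix), and then $\mathsf{J}_3\times\mathsf{J}_4$ embeds as a submatrix of that product, contradicting properness; this is a repackaging of your $\alpha$/$\beta$ argument. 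For non-ideality, the witness you call $N$ is explicit in the paper: $\mathsf{J}_2\times\mathsf{J}_3:=\langle\mathbf{GV_{2\sim}\times GV_{3\sim}},F_1\times F_{\frac{1}{2}}\rangle$, which up to isomorphism is a submatrix of $\mathsf{J}_3\times\mathsf{J}_4$ (use $\mathbf{GV_{2\sim}}\subseteq\mathbf{GV_{4\sim}}$ and swap coordinates), whose theorems are exactly those of $\mathsf{J}_3$, hence it validates $\theta$ and differs from CPL. These gaps are benign.

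The gap that is not benign is clause (ii), and you located it precisely --- but the situation is worse than ``needs genuine care'': the deduction half of the DDT for $\Rightarrow\,:=\neg x\lor y$ is \emph{false} in the product logic, so your plan to prove it by hand cannot be completed. Take $\Gamma=\emptyset$, $\varphi=\alpha:=\Delta(p\leftrightarrow\ninv p)$, $\psi=\bot$. Since $\ninv$ has no fixpoint in $GV_4$, the $\mathsf{J}_4$-component never designates $\alpha$, so $\{\alpha\}$ is never jointly designated and $\alpha\vdash_{\mathsf{J}_3\times\mathsf{J}_4}\bot$ holds vacuously (the paper itself states this in Example \ref{exJ3xJ4}); yet $\alpha\Rightarrow\bot$ evaluates as $\neg\alpha$, which is not a theorem of the product because it fails in $\mathsf{J}_3$ under $e(p)=1/2$. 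Moreover, no cleverer term repairs this: for any binary term $t$, DDT would make $t(\alpha,\bot)$ a theorem of the product, hence of $\mathsf{J}_3$, and evaluating at $e(p)=1/2$ (arguments $(1,0)$, which stay in the subalgebra $\{0,1\}$) forces $t(1,0)=1$; while $\top\nvdash_{\mathsf{J}_3\times\mathsf{J}_4}\bot$ would make $t(\top,\bot)$ a non-theorem, forcing $t(1,0)=0$ in one of the two components; but $\{0,1\}$ is one and the same two-element $\mathrm{G}_\sim$-subalgebra of $\mathbf{GV_{3\sim}}$ and of $\mathbf{GV_{4\sim}}$, so $t(1,0)$ has a single common value --- a contradiction. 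Be aware that the paper's own proof offers no help here: it simply asserts that $\Rightarrow$ is a term-defined deductive implication for the product (true for the single matrices $\mathsf{J}_3$ and $\mathsf{J}_4$, where $\neg$ acts as the external negation of the filter, but not surviving the passage to joint designation). So condition (ii) of Definition \ref{IdPar}, read literally, cannot be established either by your route or by the paper's; any complete proof of the proposition must weaken or reinterpret that condition (e.g.\ restrict the deduction theorem to premise sets that do not trivialize a factor), which is a defect of the statement rather than of your strategy.
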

\begin{proof}

Since $\bf GV_{3\sim}$ and $V\bf \rm{G}_{4\sim}$ are subalgebras of $\bf GV_{5\sim}$, by the characterization of all extensions of $\Gnsimleq$ given in subsection \ref{Gn}, $\langle {\bf GV_{3\sim}\times GV_{4\sim}}, F_{\frac{1}{2}}\times F_{\frac{1}{3}}\rangle$ is an extension of $\Gcsimleq$ satisfying conditions $(i)$ to $(iv) $ because every component is $\sim$-paraconsistent and $x\Rightarrow y:= \lnot x \lor y$ is a term-defined  deductive implication. We prove by contradiction that $\mathsf{J}_3\times \mathsf{J}_4$ has no proper $\sim$-paraconsistent extensions. Assume there is  a proper $\sim$-paraconsistent extension $L$ of $\mathsf{J}_3\times \mathsf{J}_4$.  In this case there is a matrix $\langle {\bf A_{1}\times\cdots\times A_{k}}, F_{i_{1}}\times\cdots\times F_{i_{k}}\rangle$ which is an extension of $L$ such that each $\langle {\bf A_{j}},F_{i_{j}}\rangle$ is either $\mathsf{J}_3$, $\mathsf{J}_4$, $\langle {\bf GV_{5\sim}}, F_{\frac{1}{2}}\rangle$ or $\langle {\bf GV_{5\sim}}, F_{\frac{1}{4}}\rangle$. Since  $\mathsf{J}_3$ is not comparable with $\mathsf{J}_3\times \mathsf{J}_4$ and $\mathsf{J}_3$ is a submatrix of $\langle {\bf GV_{5}}, F_{\frac{1}{2}}\rangle$ and also a submatrix of $\langle {\bf GV_{5\sim}}, F_{\frac{1}{4}}\rangle$, there is a component $\langle {\bf A_{j0}},F_{j0}\rangle=\mathsf{J}_4$. Similarly, there should be a  different component $\langle {\bf A_{j1}},F_{j1}\rangle\neq \mathsf{J}_4$, otherwise $\mathsf{J}_4$ would be an extension of $\mathsf{J}_3\times \mathsf{J}_4$. Finally, in the case $\langle {\bf A_{1}\times\cdots\times A_{k}}, F_{i_{1}}\times\cdots\times F_{i_{k}}\rangle$ has a component equal to $\mathsf{J}_4$ and another which is different to $\mathsf{J}_4$,  then  $\mathsf{J}_3\times \mathsf{J}_4$ is a submatrix of $\langle {\bf A_{1}\times\cdots\times A_{k}}, F_{i_{1}}\times\cdots\times F_{i_{k}}\rangle$, which contradicts the fact that $L$ is a proper extension of $\mathsf{J}_3\times \mathsf{J}_4$.

Let $\varphi$ be a theorem of $\mathsf{J}_3$ which is not a theorem of $\mathsf{J}_4$. Then, the matrix logic $J_{2}\times \mathsf{J}_3:=\langle {\bf GV_{2\sim}\times GV_{3\sim}}, F_{1}\times F_{\frac{1}{2}}\rangle$ is an extension of $\mathsf{J}_3\times \mathsf{J}_4$ different from CPL such that $\vdash_{J_{2}\times \mathsf{J}_3}\varphi$. Thus $\mathsf{J}_3\times \mathsf{J}_4$ is not maximal w.r.t. CPL.
\end{proof}

\begin{theorem}\label{CarAlIdParGn}
Let $n$ be an integer number such that $n>4$ and let $L$ be an extension of $\Gnsimleq$.
\begin{enumerate}
\item If $n$ is an even number,  the following are equivalent:
\begin{itemize}
\item[-] $L$ is saturated $\sim$-paraconsistent 
\item[-] $L$ is ideal $\sim$-paraconsistent 
\item[-] $L=\mathsf{J}_4$
\end{itemize}
\item If $n$ is an odd number,   the following are equivalent: 
\begin{itemize}
\item[-] $L$ is saturated  $\sim$-paraconsistent
\item[-] $L=\mathsf{J}_3$, $L=\mathsf{J}_4$ or $L=\mathsf{J}_3\times \mathsf{J}_4$.
\end{itemize}
\item If $n$ is an odd number,    the following are equivalent: 
\begin{itemize}
\item[-] $L$ is ideal  $\sim$-paraconsistent 
\item[-] $L=\mathsf{J}_3$ or $L=\mathsf{J}_4$.
\end{itemize}
\end{enumerate}
\end{theorem}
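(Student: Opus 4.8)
The plan is to classify all maximal $\sim$-paraconsistent extensions of $\Gnsimleq$ by reducing, through submatrices, every $\sim$-paraconsistent matrix to a product whose factors are copies of $\mathsf{J}_3$ and $\mathsf{J}_4$. The engine is the following reduction, which I would isolate as a lemma: if $\langle {\bf GV_{m\sim}}, F_t\rangle$ is a single-chain matrix over a subalgebra of ${\bf GV_{n\sim}}$ that is $\sim$-paraconsistent (equivalently $t \leq 1/2$), then it has a submatrix isomorphic to $\mathsf{J}_3$ or to $\mathsf{J}_4$. Indeed, if the fixpoint $1/2$ lies in $F_t$ (which forces $m$ odd), the subalgebra $\{0,1/2,1\}\cong {\bf GV_{3\sim}}$ inherits the filter $\{1/2,1\}$, giving $\mathsf{J}_3$; otherwise $m$ is even and one may pick any $a$ with $t\leq a<1/2$, so that $\{0,a,1-a,1\}\cong {\bf GV_{4\sim}}$ inherits $\{a,1-a,1\}$, giving $\mathsf{J}_4$. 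The decisive parity remark is that when $n$ is even every subalgebra of ${\bf GV_{n\sim}}$ is an even chain without fixpoint (by the description of subalgebras in Subsection~\ref{Gn}), so only the second case can occur and the reduct is always $\mathsf{J}_4$.

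For the forward direction, let $L$ be saturated $\sim$-paraconsistent. By condition $(iv)$ it is a sublogic of CPL, hence intermediate between $\Gnsimleq$ and CPL, and so by the matrix characterization recalled in Subsection~\ref{Gn} it is the logic of a family of product matrices over direct products of subalgebras of ${\bf GV_{n\sim}}$. Since explosion fails in $L$, some matrix $M_0=\prod_j \langle {\bf GV_{m_j\sim}}, F_{t_j}\rangle$ of this family is itself $\sim$-paraconsistent, i.e.\ all its factors are, and $\vdash_L \,\subseteq\, \vdash_{M_0}$. Replacing each factor by the $\mathsf{J}_3$- or $\mathsf{J}_4$-submatrix supplied by the lemma yields a submatrix $M_0^{*}$ of $M_0$, whence $\vdash_{M_0}\,\subseteq\,\vdash_{M_0^{*}}$, and $M_0^{*}$ is still $\sim$-paraconsistent. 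Deleting repeated factors (which leaves the logic of a product matrix unchanged, since repeated factors are redundant by Definition~\ref{prodmat}) leaves a product of pairwise non-isomorphic factors drawn from $\{\mathsf{J}_3,\mathsf{J}_4\}$, so $L(M_0^{*})\in\{\mathsf{J}_3,\mathsf{J}_4,\mathsf{J}_3\times \mathsf{J}_4\}$, and only $\mathsf{J}_4$ when $n$ is even. As $L(M_0^{*})$ is a $\sim$-paraconsistent extension of $L$ and $L$ is saturated, we conclude $L=L(M_0^{*})$, one of the listed logics.

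For the converse and for separating saturated from ideal, I would invoke the facts already established: $\mathsf{J}_3$ and $\mathsf{J}_4$ are ideal (hence saturated) $\sim$-paraconsistent by Remark~\ref{3i4IdealParac} and \cite[Proposition~6.3]{CoEsGiGo}, while $\mathsf{J}_3\times \mathsf{J}_4$ is saturated but not ideal by Proposition~\ref{J3xJ4}. The parity of $n$ then selects which are available as extensions of $\Gnsimleq$: $\mathsf{J}_4$ extends $\Gnsimleq$ for all $n\geq 4$, whereas $\mathsf{J}_3$ (and therefore $\mathsf{J}_3\times \mathsf{J}_4$) needs the fixpoint and so extends $\Gnsimleq$ only for odd $n$ (recall Fig.~\ref{G4} and Remark~\ref{3i4IdealParac}). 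Combining this with the forward direction gives the three equivalences: for even $n$ the unique maximal $\sim$-paraconsistent extension is $\mathsf{J}_4$, which is ideal, yielding item~1; for odd $n$ the saturated logics are exactly $\mathsf{J}_3,\mathsf{J}_4,\mathsf{J}_3\times \mathsf{J}_4$ (item~2), of which precisely $\mathsf{J}_3$ and $\mathsf{J}_4$ are maximal w.r.t.\ CPL and hence ideal, since $\mathsf{J}_3\times \mathsf{J}_4$ fails maximality w.r.t.\ CPL as witnessed in the proof of Proposition~\ref{J3xJ4} (item~3).

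The main obstacle I anticipate is making the submatrix reduction interact cleanly with the product presentation: one must be certain that every $\sim$-paraconsistent extension really sits below a single $\sim$-paraconsistent product matrix, so that the lemma can be applied factor by factor, and that passing to submatrices together with deleting duplicate factors can produce nothing finer than the three candidate logics. The accompanying parity bookkeeping---checking that no odd chain, and hence no $\mathsf{J}_3$-factor, can appear when $n$ is even---is exactly what separates item~1 from items~2 and~3, and it is the point where the structure of the subalgebras of ${\bf GV_{n\sim}}$ (even chains only, versus chains carrying a fixpoint) becomes decisive.
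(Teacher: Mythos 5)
Your proposal is correct and follows essentially the same route as the paper's proof: both use the characterization of extensions of $\Gnsimleq$ as logics of families of product matrices over subalgebras of $\bf GV_{n\sim}$, extract a single $\sim$-paraconsistent product matrix, pass to a $\sim$-paraconsistent submatrix built from copies of $\mathsf{J}_3$ and $\mathsf{J}_4$ (the paper does this via a parity case analysis on the factors, you via a per-factor lemma plus deletion of duplicate factors, which amounts to the same diagonal-submatrix argument), and then invoke saturation together with Remark~\ref{3i4IdealParac} and Proposition~\ref{J3xJ4} for the converse directions. The parity bookkeeping you flag as the delicate point (only even chains occur as subalgebras when $n$ is even) is exactly the observation the paper uses to single out $\mathsf{J}_4$ in item~1.
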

\begin{proof}

\begin{enumerate}
\item {Assume that $n$ is even. After Remark \ref{3i4IdealParac} and Proposition \ref{J3xJ4} we only need to prove that if $L$ is saturated $\sim$-paraconsistent then $L=\mathsf{J}_4$.  Since $n$ is even then, as observed in Subsection~\ref{Gn}, every extension $L'$ of $\Gnsimleq$ is induced by a family of matrices of the form $\langle {\bf A},F\rangle=\langle {\bf GV_{n_{1}\sim}\times \cdots \times GV_{n_{k}\sim}}, F_{\frac{i_{1}}{n_{1}-1}}\times\cdots\times F_{\frac{i_{k}}{n_{k}-1}}\rangle$ where each $n_j$ is also an even number.} If $L'$ is $\sim$-paraconsistent then  there is a member in that family, say $\langle {\bf GV_{n_{1}\sim}\times \cdots \times GV_{n_{k}\sim}}, F_{\frac{i_{1}}{n_{1}-1}}\times\cdots\times F_{\frac{i_{k}}{n_{k}-1}}\rangle$, such that  $2<n_{j}\leq n$ and $0<\frac{i_{j}}{n_{j}-1}\leq \frac{1}{2}$ for every $j$ such that $1\leq j\leq k$. Then,  $\mathsf{J}_4$ is an extension of every $\sim$-paraconsistent extension of $\Gnsimleq$. In particular, $J_4$ extends $L$. Thus $L=\mathsf{J}_4$, since $L$ is maximal paraconsistent.

\item The right to left implication follows from Remark \ref{3i4IdealParac} and Proposition \ref{J3xJ4}. To prove the converse, let $L$ be an saturated $\sim$-paraconsistent extension of $\Gnsimleq$. Since  $L$ is $\sim$--paraconsistent and it has no proper $\sim$--paraconsistent extension, $L$ is induced by a single $\sim$--paraconsistent matrix $\langle {\bf A},F\rangle$ such that $\langle {\bf A},F\rangle=\langle {\bf GV_{n_{1}\sim}\times \cdots \times GV_{n_{k}\sim}}, F_{\frac{i_{1}}{n_{1}-1}}\times\cdots\times F_{\frac{i_{k}}{n_{k}-1}}\rangle$ where $2<n_{j}\leq n$ and $0<\frac{i_{j}}{n_{j}-1}\leq \frac{1}{2}$ for every $j$ such that $1\leq j\leq k$.
    If all $n_{j}$'s are even, as in previous item  $L=\mathsf{J}_4$.
    If all $n_{j}$'s are odd, then $\mathsf{J}_3$ is a $\sim$--paraconsistent extension of $L$, thus $L=\mathsf{J}_3$.
    Assume $n$ is odd and some $n_{j}$'s are even and some are odd, all of them bigger than $2$. Then in that case $\mathsf{J}_3\times \mathsf{J}_4:=\langle {\bf GV_{3\sim}\times GV_{4\sim}}, F_{\frac{1}{2}}\times F_{\frac{1}{3}}\rangle$ is a $\sim$--paraconsistent extension of $L$, thus $L=\mathsf{J}_3\times \mathsf{J}_4$.
\item Immediate after Proposition \ref{J3xJ4} and item 2.
\end{enumerate}
\end{proof}

\section{Saturated paraconsistency and finite-valued\\ {\L}ukasiewicz logics} \label{sect-luka}

In \cite{CoEsGiGo} we
study maximality conditions for intermediate logics between CPL and the degree-preserving finite-valued {\L}ukasiewicz logics. In particular we have characterized the ideal paraconsistent logics in this family. Since in the last section we have introduced the weaker notion of saturated paraconsistency in the setting of degree-preserving G\"odel logics with involution, we deem interesting to also explore this notion for the above mentioned setting of finite-valued {\L}ukasiewicz logics. This is done in this section, after briefly recalling some basic notions about (degree-preserving) finite-valued {\L}ukasiewicz logics.

The $(n+1$)-valued {\L}ukasiewicz logic can be semantically defined as the matrix logic
$$\L_{n+1}=\langle \algLfin_{n+1}, \{1\}  \rangle, $$
where {$\algLfin_{n+1} = (\domLfin_{n+1}, \neg, \to)$} is the $n+1$-elements MV-chain   with $\domLfin_{n+1} = \big\{0,\frac{1}{n},\dots,\frac{n-1}{n},1\big\}$, and operations  defined as follows: for every $x,y \in \domLfin_{n+1}$,
\begin{itemize}
\item[] $\neg x =1-x$ ({\L}ukasiewicz negation)
\item[] $x \to y = \min \{1, 1-x+y\}$ ({\L}ukasiewicz implication)
\end{itemize}
In fact $\L_{n+1}$ is algebraizable and its generated quasi-variety $MV_{n+1}:=\mathcal{Q}(\algLfin_{n+1})$ is its equivalent algebraic semantics.

The (finitary) degree preserving $(n+1$)-valued {\L}ukasiewicz logic, denoted $\L_{n+1}^{\leq}$, can be semantically defined the following way: For every finite set of formulas $\Gamma\cup\{\varphi\}$\\

\begin{tabular}{lll}
$\Gamma\models_{\textrm{\L}_{n+1}^{\leq}} \varphi$ & if & for every evaluation $e$ over $\algLfin_{n+1}$ and every $a\in \domLfin_{n+1}$,\\
&&  if $e(\gamma)\geq a$ for every $\gamma\in \Gamma$, then $e(\gamma)\geq a$.\\
\end{tabular}
\mbox{} \\

Following \cite{CoEsGiGo} we denote by $\mathsf{L}_{n}^{i}$  the logic obtained by the matrix $\langle \algLfin_{n+1}, F_{\frac{i}{n}}  \rangle$, where $F_{\frac{i}{n}}$ is the order filter $\{ x \in \domLfin_{n+1} : x \geq i/n \}$.  Notice that with this notation the $n+1$-valued {\L}ukasiewicz logic $ \L_{n+1}$ is also denoted by $\mathsf{L}_{n}^{n}$.

As proved in \cite[Theorem 5.2]{CoEsGiGo}, for every $1\leq i \leq n$, $\mathsf{L}_{n}^{i}$  is equivalent, as a deductive system, to $ \L_{n+1}$ (see \cite{blok:pig:01} and also \cite{Blok-Pigozzi:DeductionTheorems}). Since algebraizability is preserved by equivalence, $\mathsf{L}_{n}^{i}$ is algebraizable and $MV_{n+1}$ is also its equivalent algebraic semantics. Thus, the lattice of finitary extensions of  $\mathsf{L}_{n}^{i}$ is isomorphic to the lattice of subquasivarieties of $MV_{n+1}=\mathcal{Q}(\algLfin_{n+1})$.

$MV_{n+1}$ is a locally finite variety and, as proved in \cite{GT14}, every subquasivariety is also locally finite and it is generated by a finite family of critical \footnote{An algebra is said to be {\em critical} if it is a finite algebra not belonging to the quasivariety generated by all its proper subalgebras.} MV-algebras.   Using the correspondence among subquasivarieties of $MV_{n+1}$ and finitary extensions  of   $\mathsf{L}_{n}^{i}$, in \cite{CoEsGiGo} we obtain that every extension $L$ of $\mathsf{L}_{n}^{i}$ is induced by a finite family  of matrices of type $\langle {\bf A},F\rangle$ where $\bf A$ is a critical $MV_{n+1}$-algebra  and $F$ is a lattice filter of $\bf A$ compatible with $L$. To be more precise, in \cite{GT14,CoEsGiGo} it is proved that $\bf A$ is isomorphic to a direct product of $MV_{n+1}$-chains $\algLfin_{n_{0}+1}\times\cdots\times\algLfin_{n_{l-1}+1}$,  where
\begin{enumerate}
 \item for every $j<l$, $n_{j}|n$
 \item For every $j, k<l$, $ k\neq j$ implies $ n_{k}\neq n_{j}$.
 \item If there exists $n_{j}$, $j<l$ such that $n_{k}|n_{j}$ for some
  $k\neq j$, then $n_{j}$ is unique.
\end{enumerate}
and $F=(F_{\frac{i}{n}})^{l}\cap (\domLfin_{n_{0}+1}\times\cdots\times \domLfin_{n_{l-1}+1})$.

Thus, in analogy to \cite[Theorem 3]{CoEsGo}, every extension of $\mathsf{L}_{n}^{i}$ is induced by a finite family of matrices where each matrix is a product of submatrices of $\langle \algLfin_{n+1}, F_{\frac{i}{n}}  \rangle$.

As observed in Proposition \ref{3i4},  
$\algLfin_{3}$ is {termwise} equivalent
to ${\bf GV}_{3 \sim}$ and $\algLfin_{4}$ is {termwise} equivalent to ${\bf GV}_{4 \sim}$, where the involutive negation $\sim$ in ${\bf GV}_{3\sim}$ and ${\bf GV}_{4\sim}$ is in fact the MV-negation $\neg$. Then, as indicated in   Remark~\ref{3i4IdealParac}, the matrix logics $\mathsf{J}_{3}= \langle \algLfin_{3}, F_{\frac{1}{2}} \rangle$ and $\mathsf{J}_{4}=\langle\algLfin_{4}, F_{\frac{1}{3}}\rangle$, expressed in the signature of {\L}ukasiewicz logic, are ideal $\neg$-paraconsistent. We recall here that this can be generalized in the following way.

\begin{proposition}\label{SufIdParLuk} \emph{\cite[Proposition 6.2 ]{CoEsGiGo}}
Let  $q$ be a prime number, and let $1 \leq i < q$ such that $i/q \leq 1/2$. Then,  $\mathsf{L}^i_q$ is a $(q+1)$-valued ideal $\neg$-paraconsistent logic.
\end{proposition}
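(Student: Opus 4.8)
The plan is to verify, one by one, the ingredients of ideal $\neg$-paraconsistency in the sense of Definition~\ref{IdPar}: the four background conditions (i)--(iv), maximality with respect to CPL, and the absence of proper $\neg$-paraconsistent extensions. Everything rests on the structural description recalled just before the statement: every extension of $\mathsf{L}^i_q$ is induced by a finite family of matrices, each a product of submatrices of $\langle\algLfin_{q+1},F_{\frac{i}{q}}\rangle$ subject to the divisibility conditions of \cite{GT14,CoEsGiGo}. The decisive simplification is that $q$ is \emph{prime}: its only divisors are $1$ and $q$, so the only subalgebras of $\algLfin_{q+1}$ are $\algLfin_2$ and $\algLfin_{q+1}$ itself, and the distinctness clause forbids repeating a factor. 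Hence every extension of $\mathsf{L}^i_q$ is the logic of a subfamily of exactly three matrices: $M_0=\langle\algLfin_2,\{1\}\rangle$ (which is CPL), $M_1=\langle\algLfin_{q+1},F_{\frac{i}{q}}\rangle=\mathsf{L}^i_q$, and $M_2=\langle\algLfin_2\times\algLfin_{q+1},\{1\}\times F_{\frac{i}{q}}\rangle$.

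First I would settle the background conditions and classify these matrices by paraconsistency. Since $i/q\le 1/2$, the value $x=i/q$ satisfies both $x\ge i/q$ and $\neg x=1-x\ge i/q$, so in $M_1$ an evaluation sending one variable to $i/q$ and another to $0$ witnesses $\varphi,\neg\varphi\nvdash\psi$; thus $M_1$ is $\neg$-paraconsistent, giving~(i). By contrast $M_0$ and $M_2$ are explosive, since in each the $\langle\algLfin_2,\{1\}\rangle$-factor prevents $\varphi$ and $\neg\varphi$ from ever being simultaneously designated, so explosion holds vacuously. For (iii)--(iv), note that $M_0$ is a submatrix of $M_1$ (via $\{0,1\}\hookrightarrow\domLfin_{q+1}$), which yields $\vdash_{\mathsf{L}^i_q}\subseteq\vdash_{\mathrm{CPL}}$ and presents CPL over the same signature with $\neg$ read as classical negation; condition~(ii) is met by the term-defined deductive implication already used for $\mathsf{J}_3$ and $\mathsf{J}_4$ in Remark~\ref{3i4IdealParac} (the ``G\"odel-style'' material implication relative to the filter $F_{\frac{i}{q}}$, definable through $\Delta$ in the finite MV-chain), as established in \cite{CoEsGiGo}.

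The heart of the proof is the two maximality conditions, which follow cleanly from the three-matrix picture. For strong maximality, suppose $L$ is a proper extension of $\mathsf{L}^i_q$ induced by a subfamily $\mathcal{F}\subseteq\{M_0,M_1,M_2\}$. If $M_1\in\mathcal{F}$ then $\vdash_L\subseteq\vdash_{M_1}=\vdash_{\mathsf{L}^i_q}$, which together with $\vdash_{\mathsf{L}^i_q}\subseteq\vdash_L$ forces $L=\mathsf{L}^i_q$, contradicting properness; hence $\mathcal{F}\subseteq\{M_0,M_2\}$, and since both matrices are explosive, so is $L$. Thus $\mathsf{L}^i_q$ has no proper $\neg$-paraconsistent extension. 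For maximality with respect to CPL, take any $\varphi$ with $\vdash_{\mathrm{CPL}}\varphi$ but $\nvdash_{\mathsf{L}^i_q}\varphi$. A theorem of $M_2$ must in particular be valid in its $\algLfin_{q+1}$-factor, i.e.\ be a theorem of $M_1$; since $\varphi$ fails in $M_1$ it fails in $M_2$ as well, so among the three matrices $\varphi$ is valid only in $M_0$. Therefore the least extension of $\mathsf{L}^i_q$ having $\varphi$ as a theorem is $L(\{M_0\})=\mathrm{CPL}$, i.e.\ $\mathsf{L}^i_q+\varphi=\mathrm{CPL}$, which is exactly maximality w.r.t.\ CPL.

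The main obstacle I anticipate is not any single computation but the legitimacy of the reduction ``adding a theorem $=$ pruning the matrix family'': identifying $\mathsf{L}^i_q+\varphi$ with the logic of the subfamily of models validating $\varphi$ requires the completeness of each extension with respect to its matrix models, which is precisely what the subquasivariety correspondence for the algebraizable logic $\mathsf{L}^i_q$ (equivalent as a deductive system to $\L_{q+1}$, \cite{CoEsGiGo,GT14}) supplies. A secondary delicate point is condition~(ii): exhibiting one term that is genuinely a deductive implication for the degree-preserving consequence with designated set $F_{\frac{i}{q}}$ when $i>1$ (the naive choice $\neg x\lor y$ fails detachment at nonzero undesignated values), which I would handle using the $\Delta$-definable threshold characteristic of $F_{\frac{i}{q}}$, as in \cite{CoEsGiGo}.
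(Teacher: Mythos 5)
Your proposal is correct and follows essentially the same route as the source: the paper itself does not reprove this proposition (it imports it from \cite[Proposition 6.2]{CoEsGiGo}), and your reconstruction — classifying all extensions of $\mathsf{L}^i_q$ via critical algebras, which for prime $q$ collapses to the three matrices $\langle\algLfin_2,\{1\}\rangle$, $\langle\algLfin_{q+1},F_{\frac{i}{q}}\rangle$ and their product, together with the threshold implication $\Rightarrow^i_q$ for condition (ii) — is precisely the machinery the paper deploys in its own proofs of Theorem~\ref{teo-ideal} and Theorem~\ref{teorX}. Both of your maximality arguments (pruning the matrix family for strong maximality, and $\mathsf{L}^i_q+\varphi=\mathrm{CPL}$ for maximality w.r.t.\ CPL) are sound, with only the harmless elision that the family inducing $\mathsf{L}^i_q+\varphi$ cannot be empty, which follows since $\mathsf{L}^i_q+\varphi\subseteq\mathrm{CPL}$ is nontrivial.
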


In fact, we can also prove that the converse implication also holds under some circumstances. This result  is not present in \cite{CoEsGiGo}.

\begin{theorem}\label{teo-ideal}
Let $0<i<n$ such that $\frac{i}{n}\leq\frac{1}{2}$. If $L$ is an extension of  $\mathsf{L}_{n}^{i}$, then,
$L$ is ideal $\neg$-paraconsistent iff $L=\mathsf{L}_{q}^{j}$ for some prime number $q$ such that $q|n$ and some $1\leq j$ such that $j/q \leq 1/2$
\end{theorem}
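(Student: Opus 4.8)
The plan is to combine the structure theory recalled just above the statement with the sufficiency result of Proposition~\ref{SufIdParLuk}. Recall that every extension $L$ of $\mathsf{L}_n^i$ is induced by a finite family of matrices $\langle \mathbf{A}_r, F_r\rangle$, where each $\mathbf{A}_r$ is a critical $MV_{n+1}$-algebra, i.e.\ a direct product $\algLfin_{n_0+1}\times\cdots\times\algLfin_{n_{l-1}+1}$ of MV-chains with $n_j\mid n$ and pairwise distinct sizes, and $F_r$ is the restriction of the order filter $F_{i/n}$ to $\mathbf{A}_r$; recall also that such a product matrix is $\neg$-paraconsistent iff every chain component is. For the (\emph{if}) direction, given $q$ prime with $q\mid n$ and $j$ with $j/q\le 1/2$ such that $L=\mathsf{L}_q^j$, one first observes that $\mathsf{L}_q^j$ occurs as the single-chain submatrix $\langle\algLfin_{q+1},\,F_{i/n}\cap\domLfin_{q+1}\rangle$ of $\langle\algLfin_{n+1},F_{i/n}\rangle$ (which pins $j=\lceil iq/n\rceil$), so it is indeed an extension of $\mathsf{L}_n^i$; that it is ideal $\neg$-paraconsistent is then exactly Proposition~\ref{SufIdParLuk}.

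The substance is the (\emph{only if}) direction, which I would carry out by reducing a general ideal $L$ to a single chain of prime size in two steps. \textbf{Step 1 (maximal paraconsistency gives a single paraconsistent matrix).} Writing $L=\bigcap_r L(\mathbf{A}_r,F_r)$, and using that a consequence relation defined by a family of matrices is $\neg$-paraconsistent iff at least one of the matrices is, some $L(\mathbf{A}_{r_0},F_{r_0})$ must be $\neg$-paraconsistent. As an intersection, $L$ is contained in each $L(\mathbf{A}_r,F_r)$, so $L(\mathbf{A}_{r_0},F_{r_0})$ is an extension of $L$; since $L$ has no proper $\neg$-paraconsistent extension, $L=L(\mathbf{A}_{r_0},F_{r_0})$. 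Hence $L$ is given by a single matrix $\langle\mathbf{A},F\rangle$ with $\mathbf{A}=\prod_{j<l}\algLfin_{n_j+1}$, all of whose components are paraconsistent (so each component threshold is $\le 1/2$, forcing every $n_j\ge 2$, i.e.\ each factor at least three-valued).

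\textbf{Step 2 (maximality w.r.t.\ CPL forces $l=1$ and $n_0$ prime).} The examples $\mathsf{J}_3\times\mathsf{J}_4$ (Proposition~\ref{J3xJ4}) and $\mathsf{L}_9^4$ show that multi-factor products and single composite chains can be maximally paraconsistent yet fail maximality w.r.t.\ CPL, so this is precisely the requirement that must now be exploited. The theorems of $L$ equal $\bigcap_{j<l}\mathrm{Thm}\big(\langle\algLfin_{n_j+1},F_j\rangle\big)$. If $l\ge 2$, I would take the single-factor logic $L'=\langle\algLfin_{n_0+1},F_0\rangle$; if $l=1$ but $n_0$ is composite, I would take $L'=\langle\algLfin_{q+1},\,F_{i/n}\cap\domLfin_{q+1}\rangle$ for a prime $q\mid n_0$ with $q<n_0$. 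In either case $L'$ is a submatrix logic, hence an extension of $L$ that is a sublogic of CPL, is distinct from CPL (its defining chain is at least three-valued), and satisfies $\mathrm{Thm}(L)\subsetneq\mathrm{Thm}(L')$. Any CPL-tautology lying in $\mathrm{Thm}(L')\setminus\mathrm{Thm}(L)$ then witnesses that $L$ is not maximal w.r.t.\ CPL, contradicting idealness. Therefore $l=1$ and $n_0=:q$ is prime, giving $L=\mathsf{L}_q^{k_0}$ with $k_0/q\le 1/2$, as required.

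The main obstacle is the bookkeeping in Step~2: establishing the \emph{strict} inclusion $\mathrm{Thm}(L)\subsetneq\mathrm{Thm}(L')$ together with $L'\neq\mathrm{CPL}$, and exhibiting an explicit separating CPL-tautology. This amounts to controlling how the threshold filter $F_{i/n}\cap\domLfin_{m+1}$ behaves when passing to a subchain $\algLfin_{q+1}\subsetneq\algLfin_{n_0+1}$ (as the $\mathsf{L}_9^4$ computation shows, this restriction may well become explosive, which is harmless for CPL-maximality but would break paraconsistency, so primality genuinely has to come from the CPL-maximality clause rather than from strong maximal paraconsistency). I would isolate this as a separating lemma comparing the theorem sets of the order-filter matrices over $\algLfin_{q+1}$ and $\algLfin_{n_0+1}$, after which both the $l\ge 2$ and the composite-$n_0$ cases close uniformly.
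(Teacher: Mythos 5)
Your ``if'' direction, your Step 1, and the single-chain case of Step 2 ($l=1$ with $n_0$ composite) are correct and essentially coincide with the paper's own argument. The genuine gap is in the case $l\ge 2$: your claim that the single-factor logic $L'=\langle\algLfin_{n_0+1},F_0\rangle$ ``is a submatrix logic, hence an extension of $L$'' is false. An embedding of the chain $\algLfin_{n_0+1}$ into $\algLfin_{n_0+1}\times\cdots\times\algLfin_{n_{l-1}+1}$ must compose with every projection to a homomorphism of MV-chains $\algLfin_{n_0+1}\to\algLfin_{n_j+1}$, and (these chains being simple) such a homomorphism exists only when $n_0\mid n_j$; the criticality conditions expressly allow divisibility-incomparable factors, where no embedding exists. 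The paper itself is built around the resulting phenomenon: $\mathsf{J}_3$ is \emph{not} an extension of $\mathsf{J}_3\times\mathsf{J}_4$ (Example~\ref{exJ3xJ4}, Proposition~\ref{J3xJ4}), and Corollary~\ref{coro} shows $\langle\algLfin_{3}\times\algLfin_{4},F_{\frac{1}{2}}\times F_{\frac{1}{3}}\rangle$ has \emph{no} proper $\neg$-paraconsistent extension at all; were your claim true, $\mathsf{J}_3$ would be such an extension, contradicting Theorem~\ref{teorX}. The error is fatal for your intended contradiction: to refute maximality w.r.t.\ CPL you need $L+\varphi\subseteq L'\subsetneq \mathrm{CPL}$, which requires ${\vdash_L}\subseteq{\vdash_{L'}}$; the inclusion $\mathrm{Thm}(L)\subsetneq\mathrm{Thm}(L')$ by itself gives no upper bound on the consequence relation $L+\varphi$.

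The paper closes the $l\ge 2$ (its $k>1$) case with submatrices that genuinely exist, split into two sub-cases. If some $n_i$ divides another factor $n_k$, it deletes the $k$-th factor: the shorter product embeds into the longer one by repeating the $i$-th coordinate in the deleted slot, so its logic is a proper \emph{paraconsistent} extension of $L$, contradicting the strong-maximality half of idealness -- CPL-maximality plays no role here, and indeed it cannot, since when every other factor divides $n_k$ one has $\mathrm{Thm}(L)=\mathrm{Thm}(\langle\algLfin_{n_k+1},F_k\rangle)$ and no separating theorem exists. If no divisibility relations hold among the factors, the paper keeps one factor and replaces the others by $\algLfin_2$, which embeds into every chain: $\langle\algLfin_{2}\times\algLfin_{n_{k}+1},(F_{\frac{i}{n}})^{2}\cap(\domLfin_{2}\times\domLfin_{n_{k}+1})\rangle$ is then a genuine submatrix logic, hence an extension of $L$; it is explosive, but it is different from CPL, has the same theorems as the retained factor, and thus validates a formula not valid in $L$, contradicting maximality w.r.t.\ CPL. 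This ``pad with $\algLfin_2$'' move is exactly how Proposition~\ref{J3xJ4} shows that $\mathsf{J}_3\times\mathsf{J}_4$ is not maximal w.r.t.\ CPL (it uses $\langle{\bf GV_{2\sim}\times GV_{3\sim}},F_1\times F_{\frac{1}{2}}\rangle$, not $\mathsf{J}_3$ itself), and it is how your Step~2 can be repaired, provided the retained factor is chosen so that some other factor does not divide it. Your separately flagged ``main obstacle'' (producing explicit separating tautologies and verifying $L'\neq\mathrm{CPL}$) is real but secondary, and is shared with the paper, which defers those facts to \cite{CoEsGiGo}.
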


\begin{proof}
Let $L$ be an ideal  $\neg$-paraconsistent extension of $\mathsf{L}_{n}^{i}$. Since $L$ is maximal, it is induced by a single matrix $\langle{\bf A}, F\rangle$, where $\bf A$ is critical and $F$ is compatible with $L$. In fact, as mentioned above, $\langle{\bf A}, F\rangle$ is of type $\langle\algLfin_{n_{1}+1}\times\cdots\times\algLfin_{n_{k}+1}, (F_{\frac{i}{n}})^{k}\cap (\domLfin_{n_{1}+1}\times\cdots\times \domLfin_{n_{k}+1})\rangle$ where
\begin{enumerate}
 \item for every $1\leq i\leq k$, $n_{i}|n$
 \item For every $1\leq i, j\leq k$, $ i\neq j$ implies $ n_{i}\neq n_{j}$.
 \item If there exists $n_{j}$, $1\leq j\leq k$ such that $n_{i}|n_{j}$ for some
  $1\leq i\neq j$, then $n_{j}$ is unique.
\end{enumerate}
Since $L$ is $\neg$-paraconsistent, none of the components $\algLfin_{n_{i}+1}$ can be $\algLfin_{2}$ (otherwise $L$ would be explosive), and hence $1<n_{i}$ for all $1\leq i\leq k$.
If $k>1$, then

\begin{itemize}
 \item If there is $n_{j}$, with $1\leq j\leq k$, such that $n_{i}|n_{j}$ for some
  $1\leq i\neq j$, then without loss of generality assume that $j=k$. In that case $\langle \algLfin_{n_{1}+1}\times\cdots\times\algLfin_{n_{k-1}+1}, (F_{\frac{i}{n}})^{k-1}\cap (\domLfin_{n_{1}+1}\times\cdots\times \domLfin_{n_{k-1}+1})\rangle$ is a $\neg$-paraconsistent extension of $L$ which contradicts the assumption of $L$ being ideal $\neg$-paraconsistent.
 \item If there is no $n_{j}$, with $1\leq j\leq k$, such that $n_{i}|n_{j}$ for some
  $1\leq i\neq j$, then $n_{k}\neq n$ and $L$ is not maximal because $\langle \algLfin_{2}\times\algLfin_{n_{k}+1}, (F_{\frac{i}{n}})^{2}\cap (\domLfin_{2}\times \domLfin_{n_{k}+1})\rangle$ is an extension of $L$ different from CPL and there is a formula $\varphi$ valid in $\langle \algLfin_{2}\times\algLfin_{n_{k}+1}, (F_{\frac{i}{n}})^{2}\cap (\domLfin_{2}\times \domLfin_{n_{k}+1})\rangle$ and not valid in $L$. A contradiction again.
  \end{itemize}
  Thus $k=1$. In that case $n$ should be prime because otherwise for any prime number $p$ such that $p|n$, $\langle\algLfin_{p+1}, F_{\frac{i}{n}}\cap \domLfin_{p+1}\rangle$ would be an extension of $L$ different from CPL and there is a formula $\varphi$ valid in $\langle\algLfin_{p+1}, F_{\frac{i}{n}}\cap \domLfin_{p+1}\rangle$ and not valid in $L$.
 \end{proof}

As regards saturated paraconsistency we have the following results:

\begin{theorem} \label{teorX}
Let $0<i<n$ such that $\frac{i}{n}\leq\frac{1}{2}$. Let
$$X=\big\{p \ : \  p \mbox{ prime, } p|n,\   F_{\frac{i}{n}}\cap \domLfin_{p+1}= \big\{\frac{m}{p} \ : \ m\geq k\big\} \mbox{ and } \frac{k}{p}\leq\frac{1}{2}\big\}.$$
For every finite subset $\{p_{1},\ldots, p_{j}\}\subseteq X$, the logic defined by the matrix
$$\langle \algLfin_{p_{1}+1}\times\cdots\times\algLfin_{p_{j}+1}, (F_{\frac{i}{n}})^{j}\cap (\domLfin_{p_{1}+1}\times\cdots\times \domLfin_{p_{j}+1}) \rangle$$
is saturated $\neg$-paraconsistent.
\end{theorem}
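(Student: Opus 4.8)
The plan is to verify the four defining clauses (i)--(iv) of saturated paraconsistency, which are inherited from the prime factors, and then to prove strong maximality, which is the real content and the only place where the hypothesis $\{p_1,\dots,p_j\}\subseteq X$ (primality together with $\tfrac{k}{p}\le\tfrac12$) is used. First I would note that, by the definition of $X$, each factor $\langle\algLfin_{p_r+1},(F_{\frac{i}{n}})\cap\domLfin_{p_r+1}\rangle$ equals $\mathsf{L}_{p_r}^{k_r}$ for the index $k_r$ with $F_{\frac{i}{n}}\cap\domLfin_{p_r+1}=F_{\frac{k_r}{p_r}}$ and $\tfrac{k_r}{p_r}\le\tfrac12$, so that by Proposition~\ref{SufIdParLuk} each factor is ideal $\neg$-paraconsistent, hence maximally paraconsistent. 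Clauses (i)--(iv) for the product matrix $L$ then follow as in Proposition~\ref{J3xJ4}: $L$ is $\neg$-paraconsistent because a product matrix is paraconsistent iff every factor is, and on the value $\tfrac{k_r}{p_r}$ both $p$ and $\neg p$ are designated in the $r$-th factor; the product carries the term-defined deductive implication exhibited in Proposition~\ref{J3xJ4} and Remark~\ref{3i4IdealParac}, giving (ii); and since $\langle\algLfin_2,\{1\}\rangle$ (with $F_{\frac{i}{n}}\cap\domLfin_2=\{1\}$) embeds diagonally as a submatrix of $L$, $L$ is a sublogic of this $\mathrm{CPL}$-presentation, giving (iii)--(iv).

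For strong maximality I would argue by contradiction in the style of Proposition~\ref{J3xJ4} and item~2 of Theorem~\ref{CarAlIdParGn}. Suppose $L'$ is a proper $\neg$-paraconsistent extension of $L$. By the structural description recalled before Proposition~\ref{SufIdParLuk} (from \cite{GT14,CoEsGiGo}), $L'$ is induced by a finite family $\{M_\lambda\}$ of matrices, each a product of submatrices of $\langle\algLfin_{n+1},F_{\frac{i}{n}}\rangle$, and each satisfying $L\subseteq L(M_\lambda)$. Since $L'$ is $\neg$-paraconsistent, at least one member $M$ of this family is $\neg$-paraconsistent (an intersection of explosive logics is explosive); as a paraconsistent product matrix it has no $\algLfin_2$ factor, so every factor of $M$ is of the form $\langle\algLfin_{d+1},F_{\frac{i}{n}}\cap\domLfin_{d+1}\rangle$ with $d\mid n$, $d>1$, and induced threshold $\le\tfrac12$.

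The crux is to show that this forces $M=L$. Writing $S\subseteq\{1,\dots,j\}$ for the indices of the prime factors of $M$, I would use two observations. First, $\mathrm{Thm}(L)\subseteq\mathrm{Thm}(M)$, and since each $\mathsf{L}_{p_r}^{k_r}$ has the same theorems as the $(p_r{+}1)$-valued {\L}ukasiewicz logic, every factor algebra $\algLfin_{d+1}$ of $M$ lies in $\mathbb{V}(\algLfin_{p_1+1},\dots,\algLfin_{p_j+1})$; because $MV$ is congruence-distributive, J\'onsson's lemma gives that the subdirectly irreducible chain $\algLfin_{d+1}$ already lies in some $\mathbb{V}(\algLfin_{p_r+1})$, i.e.\ $d\mid p_r$, and primality of $p_r$ with $d>1$ forces $d=p_r$. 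Thus $S\subseteq\{1,\dots,j\}$. Second, if some index $r_0$ were missing from $S$, I would exhibit a separating ``degree of inconsistency'' formula as in Example~\ref{exJ3xJ4}: a $\gamma$ with $\gamma\vdash_{\mathsf{L}_{p_{r_0}}^{k_{r_0}}}\bot$ but $\gamma\not\vdash_{\mathsf{L}_{p_r}^{k_r}}\bot$ for every $r\in S$; since $\Gamma\vdash_L\bot$ holds iff $\Gamma$ is inconsistent in some factor, this yields $\gamma\vdash_L\bot$ while $\gamma\not\vdash_M\bot$, contradicting $L\subseteq L(M)$. Hence $S=\{1,\dots,j\}$, and (factors being distinct by the no-repetition part of the structure theorem) $M=L$ up to isomorphism. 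Therefore $L'\subseteq L(M)=L\subseteq L'$, so $L'=L$, contradicting properness.

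The hardest point is the second observation of the last paragraph: producing, for arbitrary distinct primes, a formula inconsistent in one prime factor but consistent in all the others. For $p=2$ versus an odd prime this is witnessed by $\Delta(p\leftrightarrow\neg p)$ as in Example~\ref{exJ3xJ4}; in general it requires the definability of threshold and order-detecting terms in finite {\L}ukasiewicz logic together with the incomparability of the logics $\mathsf{L}_{p}^{k}$ for distinct primes, as developed in \cite{CoEsGiGo}. Controlling the full ordering of product-matrix logics (not merely their theorems) is what makes this delicate, and it is precisely primality---via Theorem~\ref{teo-ideal}, which makes each factor ideal and forbids any paraconsistent refinement---that rules out the factor-absorption and factor-refinement phenomena that would otherwise destroy strong maximality.
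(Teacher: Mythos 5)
Your proof has the same skeleton as the paper's: conditions (i)--(iv) are verified factorwise (the paper uses the term-defined implication $\varphi\Rightarrow^{i}_{n}\psi := \ninv^{i}_{n}\varphi\lor\psi$ built from the McNaughton term $\ninv^{i}_{n}$, which works uniformly for every threshold, rather than the G\"odel-signature implication of Proposition~\ref{J3xJ4}, but that is cosmetic), and strong maximality is proved by reducing a putative $\neg$-paraconsistent extension to a single $\neg$-paraconsistent product matrix $M$ whose factors $\algLfin_{d+1}$ satisfy $d\mid n$, $d>1$, and then forcing $M$ to coincide with the given matrix. At that last step the paper simply invokes prior machinery: the extension relation makes $M$ a submatrix of the given matrix, and \cite[Lemma 5.6]{CoEsGiGo} yields the divisibility conditions in both directions (every $d$ divides, hence equals, some $p_k$, and every $p_k$ is divided by, hence equals, some factor), whence the isomorphism. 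You instead try to establish the two directions by hand, and both halves of that reconstruction have gaps.

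First, your claim that each $\mathsf{L}_{p_r}^{k_r}$ ``has the same theorems as the $(p_r{+}1)$-valued {\L}ukasiewicz logic'' is false: $p\lor\neg p$ is a theorem of $\mathsf{J}_3=\mathsf{L}_{2}^{1}$ (its value on $\domLfin_{3}$ is always $\geq 1/2$), but it is not a theorem of the truth-preserving logic \L$_{3}$. What \cite[Theorem 5.2]{CoEsGiGo} gives is equivalence \emph{as deductive systems} in the Blok--Pigozzi sense, via translations, and such equivalence does not preserve theoremhood; so your route to the variety containment $\algLfin_{d+1}\in\mathbb{V}(\algLfin_{p_1+1},\dots,\algLfin_{p_j+1})$ breaks at its first step. (It is repairable: an equation $\tau\approx 1$ holds in one of these algebras iff the $n$-fold strong conjunction $\tau\odot\cdots\odot\tau$ is a theorem of the corresponding order-filter matrix logic, since that power only takes the values $0$ and $1$ there; theorem containment then does give the inclusion of varieties, after which your J\'onsson/Komori step is fine.) Second, and this is the real crux, the separating formula $\gamma$ --- inconsistent in the missing factor $\mathsf{L}_{p_{r_0}}^{k_{r_0}}$ yet consistent in every factor that is present --- is precisely the content the paper imports from \cite[Lemma 5.6]{CoEsGiGo}, and you leave its existence unproved, deferring to the literature while conceding it is the hardest point. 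Such a $\gamma$ does exist: take a one-variable term for the sawtooth McNaughton function $f(x)=p_{r_0}\cdot\mathrm{dist}\bigl(x,\tfrac{1}{p_{r_0}}\mathbb{Z}\bigr)$; it vanishes identically on $\domLfin_{p_{r_0}+1}$, while for any other prime $q$ with induced threshold $k'/q\leq 1/2$, invertibility of $p_{r_0}$ modulo $q$ gives an $m$ with $f(m/q)=\lfloor q/2\rfloor/q\geq k'/q$. Until these two points are filled in, your proof of strong maximality is incomplete; once they are, it becomes a legitimate self-contained alternative to the paper's citation-based argument.
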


\begin{proof}
By the previous result,$\langle \algLfin_{p_{1}+1}\times\cdots\times\algLfin_{p_{j}+1}, (F_{\frac{i}{n}})^{j}\cap (\domLfin_{p_{1}+1}\times\cdots\times \domLfin_{p_{j}+1}) \rangle$ is an extension of $\mathsf{L}_{n}^{i}$. Moreover, it is $\neg$-paraconsistent, because every component is $\neg$-paraconsistent. Let $\Rightarrow^{i}_{n}$ defined as $\varphi\Rightarrow^{i}_{n}\psi:=\ninv^{i}_{n}\varphi \lor \psi$ where $\ninv^{i}_{n}(x)$ is the single variable McNaughton term such that for every $a\in \domLfin_{n+1}$,
$$
\ninv^{i}_{n}(a)  = \left \{
\begin{array}{ll}
0, & \mbox{if } a\geq \frac{i}{n} \\
1, & \mbox{otherwise}.
\end{array}
\right .
$$  Similarly to the proof of \cite[Proposition 6.2]{CoEsGiGo}, the logic $\langle \algLfin_{p_{1}+1}\times\cdots\times\algLfin_{p_{j}+1},$ \\ $ (F_{\frac{i}{n}})^{j}\cap (\domLfin_{p_{1}+1}\times\cdots\times \domLfin_{p_{j}+1})\rangle$ satisfies  conditions (i) to (iv) in  Definition~\ref{IdPar}, the definition of ideal $\urcorner$-paraconsistency. Let $L$ be a $\neg$-paraconsistent extension of $\langle \algLfin_{p_{1}+1}\times\cdots\times\algLfin_{p_{j}+1}, (F_{\frac{i}{n}})^{j}\cap (\domLfin_{p_{1}+1}\times\cdots\times \domLfin_{p_{j}+1}) \rangle$, then $L$ is induced by a finite family of matrices $\langle {\bf A}, F\rangle$, where $\bf A$ is critical and $F$ is compatible with $L$. Since $L$ is $\neg$-paraconsistent, there is at least one matrix $\langle \algLfin_{n_{0}+1}\times\cdots\times\algLfin_{n_{l-1}+1}, (F_{\frac{i}{n}})^{l}\cap (\domLfin_{n_{0}+1}\times\cdots\times \domLfin_{n_{l-1}+1})\rangle$ where
\begin{enumerate}
 \item for every $m<l$, $n_{m}|n$
 \item for every $m, k<l$, $ k\neq m$ implies $ n_{k}\neq n_{m}$
 \item if there exists $n_{m}$ with $m<l$ such that $n_{m}|n_{k}$ for some
  $k\neq m$, then $n_{k}$ is unique,
\end{enumerate}
 which is $\neg$-paraconsistent. Thus for every $m<l$, it is the case that $2\leq n_{m}$. Since $\langle\algLfin_{n_{0}+1}\times\cdots\times\algLfin_{n_{l-1}+1}, (F_{\frac{i}{n}})^{l}\cap (\domLfin_{n_{0}+1}\times\cdots\times \domLfin_{n_{l-1}+1})\rangle$ is an extension of $\langle\algLfin_{p_{1}+1}\times\cdots\times\algLfin_{p_{j}+1}, (F_{\frac{i}{n}})^{j}\cap (\domLfin_{p_{1}+1}\times\cdots\times \domLfin_{p_{j}+1})\rangle$, then  $\langle \algLfin_{n_{0}+1}\times\cdots\times\algLfin_{n_{l-1}+1}, (F_{\frac{i}{n}})^{l}\cap (\domLfin_{n_{0}+1}\times\cdots\times \domLfin_{n_{l-1}+1})\rangle$ is a submatrix of $\langle \algLfin_{p_{1}+1}\times\cdots\times\algLfin_{p_{j}+1}, (F_{\frac{i}{n}})^{j}\cap (\domLfin_{p_{1}+1}\times\cdots\times \domLfin_{p_{j}+1})\rangle$. Therefore, by \cite[Lemma 5.6]{CoEsGiGo},
 for every $m<l$ there is a $0<k\leq j$ such that $n_{m}|p_{k}$, since $2\leq n_{m}$ and $p_{k}$ is prime, then $n_{m}=p_{k}$. Moreover for every $0<k\leq j$, there is $m<l$ such that $n_{m}|p_{k}$. Thus $\algLfin_{p_{1}+1}\times\cdots\times\algLfin_{p_{j}+1}\cong \algLfin_{n_{0}+1}\times\cdots\times\algLfin_{n_{l-1}+1}$ and $L= \langle \algLfin_{p_{1}+1}\times\cdots\times\algLfin_{p_{j}+1}, (F_{\frac{i}{n}})^{j}\cap (\domLfin_{p_{1}+1}\times\cdots\times \domLfin_{p_{j}+1})\rangle$, proving that any proper extension of $\langle \algLfin_{p_{1}+1}\times\cdots\times\algLfin_{p_{j}+1}, (F_{\frac{i}{n}})^{j}\cap (\domLfin_{p_{1}+1}\times\cdots\times \domLfin_{p_{j}+1})\rangle$ is not $\neg$-paraconsistent.
 \end{proof}

\begin{remark} \label{not-ideal}
One may wonder whether the saturated $\neg$-paraconsistent logics identified in the above theorem are in fact ideal paraconsistent. However, it is easy to see that this is not the case unless they are of the type described in Theorem \ref{teo-ideal}. Indeed, this is a consequence of the fact that the logics considered in Theorem \ref{teorX} (and in Corollary \ref{coro} below) are extensions of logics of the type $\mathsf{L}_{n}^{i}$, and in  Theorem \ref{teo-ideal} we have exactly characterized  which of these extensions are ideal paraconsistent.
\end{remark}

\begin{corollary}\label{coro}
Let $\{p_{1},\ldots, p_{j}\}$ be any finite set of prime numbers, then $\langle \algLfin_{p_{1}+1}\times\cdots\times\algLfin_{p_{j}+1}, F_{\frac{1}{p_{1}}}\times\cdots\times F_{\frac{1}{p_{j}}}\rangle$ is saturated $\neg$-paraconsistent.
\end{corollary}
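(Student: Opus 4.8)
The plan is to obtain this corollary as a direct specialization of Theorem~\ref{teorX}, so the whole task reduces to manufacturing the parameters $n$ and $i$ that the theorem requires and then checking that the data of the corollary matches what the theorem produces. First I would set $n := p_1 p_2 \cdots p_j$ (the product of the given primes, which are distinct since they form a set) and $i := 1$. With this choice we have $0 < i < n$ and $\frac{i}{n} = \frac{1}{n} \leq \frac{1}{2}$, so the hypotheses $0<i<n$ and $\frac{i}{n}\le\frac12$ of Theorem~\ref{teorX} are satisfied, and each $p_k$ divides $n$ by construction.

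Next I would verify that $\{p_1,\dots,p_j\}\subseteq X$, where $X$ is the set associated with this particular $n$ and $i$. Fix a prime $p$ among $p_1,\dots,p_j$. The elements of $\domLfin_{p+1}$ are $0,\frac1p,\dots,\frac{p-1}{p},1$, and since $p\le n$ we have $\frac1p\ge\frac1n$. Hence every nonzero element of $\domLfin_{p+1}$ belongs to $F_{\frac1n}$ while $0$ does not, which gives $F_{\frac{i}{n}}\cap\domLfin_{p+1}=\{\frac{m}{p}:m\ge 1\}=F_{\frac1p}$. This matches the form $\{\frac{m}{p}:m\ge k\}$ required in the definition of $X$ with threshold $k=1$, and the side condition $\frac{k}{p}=\frac1p\le\frac12$ holds for every prime $p\ge 2$. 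Therefore $p\in X$, and since $p$ was arbitrary, $\{p_1,\dots,p_j\}$ is an admissible finite subset of $X$.

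The same componentwise computation shows that the filter of the product matrix in the theorem coincides with the one in the corollary: intersecting coordinate by coordinate, $(F_{\frac{i}{n}})^{j}\cap(\domLfin_{p_1+1}\times\cdots\times\domLfin_{p_j+1})$ equals $F_{\frac{1}{p_1}}\times\cdots\times F_{\frac{1}{p_j}}$. Consequently the matrix in the statement of the corollary is literally the matrix of Theorem~\ref{teorX} instantiated at this $n$, this $i$, and the subset $\{p_1,\dots,p_j\}$, so saturated $\neg$-paraconsistency follows at once. There is no genuine obstacle here beyond the bookkeeping of choosing $n$ and $i$; the only point that must be checked with some care is the filter identity $F_{\frac{i}{n}}\cap\domLfin_{p+1}=F_{\frac1p}$, which hinges on the inequality $\frac1n\le\frac1p$ together with the exclusion of $0$. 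Because the resulting threshold is always $1$, and $\frac1p\le\frac12$ for every prime $p$, every prime automatically satisfies the membership condition for $X$, which is exactly why the corollary can be stated for an arbitrary finite set of primes rather than only for primes dividing a prescribed modulus.
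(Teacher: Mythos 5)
Your proposal is correct and follows exactly the route the paper intends: the corollary is stated as an immediate consequence of Theorem~\ref{teorX}, and the natural instantiation is precisely yours, namely $n = p_1\cdots p_j$ and $i=1$, for which each $p_k$ divides $n$, the filter computation $F_{\frac{1}{n}}\cap \domLfin_{p_k+1} = F_{\frac{1}{p_k}}$ (threshold $k=1$, $\frac{1}{p_k}\le\frac12$) shows $\{p_1,\ldots,p_j\}\subseteq X$, and $(F_{\frac{1}{n}})^{j}\cap(\domLfin_{p_1+1}\times\cdots\times\domLfin_{p_j+1})=F_{\frac{1}{p_1}}\times\cdots\times F_{\frac{1}{p_j}}$ identifies the two matrices. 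Your care over the filter identity and the distinctness of the primes fills in the bookkeeping the paper leaves implicit.
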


Contrary to the case of $\rm{G}_{n \sim}^\leq$ in Theorem \ref{CarAlIdParGn}, not every saturated $\neg$-paracon\-sis\-tent extension of $\mathsf{L}_{n}^{i}$ is of the type of the above corollary. For instance $\mathsf{L}_{15}^{7}$ is saturated $\neg$-paraconsistent. Indeed, it is a $\neg$-paraconsistent logic where $\Rightarrow^{7}_{15}$ is a deductive implication and $\mathsf{L}_{1}^{1}=$ CPL is a submatrix logic of $\mathsf{L}_{15}^{7}$. Moreover, every proper extension $L$ of $\mathsf{L}_{15}^{7}$ is induced by a family of proper submatrices of $\mathsf{L}_{15}^{7}$, of type $\langle \algLfin_{n_{0}+1}\times\cdots\times\algLfin_{n_{l-1}+1}, (F_{\frac{7}{15}})^{l}\cap (\domLfin_{n_{0}+1}\times\cdots\times \domLfin_{n_{l-1}+1}) \rangle$ where at least there is some $j<l$ such that $n_{j}|15$ and $n_{j}\neq 15$. Hence, $n_{j}$ is either $1$, $3$ or $5$, in which case the $j$-th component $\langle \algLfin_{n_{j}+1},F_{\frac{7}{15}}\cap \domLfin_{n_{j}+1} \rangle$ is not $\neg$-paraconsistent. Thus $L$ is not $\neg$-paraconsistent and,  therefore, $\mathsf{L}_{15}^{7}$ is saturated $\neg$-paraconsistent.

\section{A final remark: relationship to logics of formal inconsistency}

{To conclude this section, we provide an additional analysis --from the point of view of paraconsistency-- of the logics discussed in this paper.
Recall from  Section~\ref{sect1} the class of paraconsistent logics known as {\em logics of formal inconsistency} ({\bf LFI}s). It is easy to see that all the paraconsistent logics considered in the present paper are in fact {\bf LFI}s.

Indeed, in~\cite{EEFGN} it is shown that, if $L_{\ninv}$ is the expansion of a core fuzzy logic $L$ with an involutive negation $\ninv$ where  $\Delta$ is definable in $L_{\ninv}$,\footnote{This is the case of any pseudo-complemented logic $L$ where $\Delta$ is definable as $\Delta \varphi := \neg \ninv \varphi$, in particular the case of G\"odel fuzzy logic $\rm{G}$.}
then $L^{\leq}_{\ninv}$ is an {\bf LFI} w.r.t. $\ninv$, and the consistency operator is given by ${\circ}\varphi = \Delta(\neg \varphi \vee \varphi)$. This shows the following.

\begin{proposition}
All the paraconsistent logics based on G\"odel fuzzy logic with involution $\rm{G}_{\ninv}$ and its finite-valued extensions $\rm{G}_{n\ninv}$ considered in this paper are {\bf LFI}s w.r.t. $\ninv$.
\end{proposition}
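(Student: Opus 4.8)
The plan is to invoke the general result from~\cite{EEFGN} recalled just above the statement, and then to argue that the {\bf LFI} property is inherited by all the paraconsistent logics studied in the paper, since these are all intermediate logics between the base logics $\Gsimleq$ or $\Gnsimleq$ and CPL.

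First I would recall that G\"odel logic $\rm{G}$ is a pseudo-complemented core fuzzy logic in which, once expanded with the involution $\ninv$, the projection connective is definable as $\Delta\varphi := \neg\ninv\varphi$. Hence the hypothesis of the cited result from~\cite{EEFGN} is met, and both $\Gsimleq$ and $\Gnsimleq$ are {\bf LFI}s with respect to $\ninv$, the consistency operator being ${\circ}\varphi := \Delta(\neg\varphi \vee \varphi)$. Concretely, this means that the base logics satisfy the \emph{gentle explosion} schema $\varphi, \ninv\varphi, {\circ}\varphi \vdash \psi$ for every $\psi$, while remaining $\ninv$-paraconsistent.

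Next I would observe that every paraconsistent logic considered in this paper is an extension of $\Gsimleq$ or of $\Gnsimleq$ (and a sublogic of CPL): this holds for the matrix logics ${\rm G}^{[a}_\sim$ and ${\rm G}^{(a}_\sim$ of Section~\ref{sect-G-tilde}, and for the product-matrix logics over subalgebras of ${\bf GV_{n\sim}}$ of Section~\ref{sect-Gn-tilde}. The crucial point is that the gentle explosion schema above is a \emph{positive} derivation valid in the base logic, and consequently it is preserved under every extension (extensions can only enlarge the consequence relation). Therefore ${\circ}$ remains a consistency operator for $\ninv$ in each of these logics. Since being $\ninv$-paraconsistent is precisely the property singling out the logics in question, each of them satisfies both defining conditions of an {\bf LFI}, which proves the claim.

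The main thing to make sure of is that $\circ$ still does its job at the level of the matrices, i.e.\ that it genuinely trivializes $\{\varphi, \ninv\varphi, {\circ}\varphi\}$. This can be checked directly and uniformly: on any component, a $\rm{G}_\sim$-chain, one has ${\circ}\varphi = 1$ iff $\varphi$ takes a value in $\{0,1\}$ and ${\circ}\varphi = 0$ otherwise; since all the order filters $F$ used in the paper exclude the bottom element $0$, requiring $e({\circ}\varphi)$, $e(\varphi)$ and $e(\ninv\varphi)$ all in $F$ forces $\varphi$ to be classical and simultaneously $e(\varphi) = 1$ and $e(\ninv\varphi) = 1-e(\varphi) = 0 \in F$, a contradiction. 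Hence no evaluation designates the three formulas at once, confirming triviality and completing the argument. I expect the only potentially delicate step to be this verification that the filters of all the matrices considered exclude $0$, which is exactly what guarantees that gentle explosion is not vacuously lost in any of the extensions.
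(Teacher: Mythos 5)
Your proposal is correct and follows essentially the same route as the paper: the paper's entire argument is the appeal to the result of~\cite{EEFGN} that $\Gsimleq$ and $\Gnsimleq$ are {\bf LFI}s with consistency operator ${\circ}\varphi = \Delta(\neg\varphi \vee \varphi)$, with the inheritance by the intermediate (paraconsistent) logics left implicit. You simply make that inheritance explicit --- gentle explosion is preserved upward under extensions, and your matrix-level check that ${\circ}$ takes only the values $0,1$ and equals $1$ exactly on $\{0,1\}$ is a correct (and welcome) verification of what the paper takes for granted.
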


As for the paraconsistent logics based on finite-valued {\L}ukasiewicz logics analyzed in this section, they are also {\bf LFI}s, as the following result states. }

\begin{proposition}
Let $L$ be one of the matrix logics in Proposition~\ref{SufIdParLuk}, or one of the products of matrix logics in Theorem~\ref{teorX}. Then, $L$ is an {\bf LFI} w.r.t. $\neg$.

\end{proposition}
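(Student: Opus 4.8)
The plan is to exhibit, for each such logic $L$, a term-definable consistency operator ${\circ}$ witnessing that $L$ is an {\bf LFI} w.r.t.\ $\neg$, in the sense recalled in Section~\ref{sect1}: $L$ is $\neg$-paraconsistent (which is already in hand, since the logics of Proposition~\ref{SufIdParLuk} are ideal and those of Theorem~\ref{teorX} are saturated $\neg$-paraconsistent, and both notions include paraconsistency), and for every formula $\varphi$ the theory $\{\varphi, \neg\varphi, {\circ}\varphi\}$ is trivial while $\{\varphi, \neg\varphi\}$ need not be. Following the recipe of~\cite{EEFGN} already used for the G\"odel case, I would set ${\circ}\varphi := \Delta(\neg\varphi \lor \varphi)$, where $\neg$ is the \L ukasiewicz negation and $\Delta$ is the term definable in every finite MV-algebra $\algLfin_{m+1}$ (and hence in any finite product of such algebras).

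First I would compute the truth function of ${\circ}$. In any chain $\algLfin_{m+1}$ one has $e(\neg\varphi \lor \varphi) = \max(1-x, x)$ for $x = e(\varphi)$, and this equals $1$ exactly when $x \in \{0,1\}$; applying $\Delta$ gives $e({\circ}\varphi) = 1$ if $x \in \{0,1\}$ and $e({\circ}\varphi) = 0$ otherwise. Since ${\circ}$ is a term, in the product matrices of Theorem~\ref{teorX} it is interpreted componentwise, so the $l$-th coordinate of $e({\circ}\varphi)$ is $1$ iff the $l$-th coordinate of $e(\varphi)$ lies in $\{0,1\}$.

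The key step is then the triviality (gentle explosion) check, which I would carry out uniformly for both the single matrices $\langle \algLfin_{q+1}, F_{\frac{i}{q}}\rangle$ of Proposition~\ref{SufIdParLuk} and the product matrices of Theorem~\ref{teorX}. Recall that each designation filter is an order filter whose threshold in every component is a value $t_l$ strictly above $0$ and at most $1/2$. Fix a formula $\varphi$ and an evaluation $e$ with $e(\varphi), e(\neg\varphi), e({\circ}\varphi)$ all designated. In each coordinate $l$ this forces simultaneously $x_l \geq t_l$, $1 - x_l \geq t_l$, and (since the ${\circ}$-coordinate is $0$ or $1$ and $t_l > 0$) the ${\circ}$-coordinate $= 1$, i.e.\ $x_l \in \{0,1\}$; but $x_l \geq t_l > 0$ rules out $x_l = 0$ and $1 - x_l \geq t_l > 0$ rules out $x_l = 1$, a contradiction. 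Hence no such $e$ exists. Because each of these logics is determined by a single (possibly product) matrix with a proper filter, a theory is trivial precisely when it is unsatisfiable: if some $e$ satisfied it, sending a fresh variable to the undesignated bottom element would refute explosion. Thus $\{\varphi, \neg\varphi, {\circ}\varphi\}$ is trivial for every $\varphi$.

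Finally I would note that $\{\varphi, \neg\varphi\}$ is not trivial: taking $\varphi$ a variable and evaluating it coordinatewise at the threshold value $t_l \leq 1/2$ makes both $\varphi$ and $\neg\varphi$ designated, so again a fresh variable sent to the bottom refutes explosion. The main obstacle I foresee is not any single computation but keeping the componentwise bookkeeping correct in the product case---in particular confirming that $\Delta$ acts coordinatewise and that the mixed thresholds $t_l$ arising from intersecting $F_{\frac{i}{n}}$ with the smaller chains $\domLfin_{p_l+1}$ remain strictly positive and $\leq 1/2$, which is exactly what the hypotheses of Theorem~\ref{teorX} (the definition of the set $X$) secure.
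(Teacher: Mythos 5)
Your proof is correct, but it takes a genuinely different route from the paper's. You exhibit the filter-independent ``crispness'' operator ${\circ}\varphi := \Delta(\neg\varphi \lor \varphi)$ (adapting the recipe of~\cite{EEFGN} used for the G\"odel-based logics) and verify everything by a direct truth-functional computation: the theory $\{\varphi,\neg\varphi,{\circ}\varphi\}$ is unsatisfiable in every component because the designation thresholds satisfy $0 < t_l \leq 1/2$, hence it is vacuously trivial, while $\{\varphi,\neg\varphi\}$ is satisfiable at the threshold values. The paper instead reuses the threshold-dependent operator $\circ^i_n\alpha := {\ninv}^i_n(\alpha \wedge \neg\alpha)$, imports from \cite[Proposition~6.3]{CoEsGiGo} the fact that each single matrix $\mathsf{L}^{k}_{p}$ is an {\bf LFI} with that operator, and handles the product matrices of Theorem~\ref{teorX} by checking that ${\ninv}^i_n$ restricted to each subchain $\domLfin_{p_s+1}$ coincides with ${\ninv}^{k_s}_{p_s}$, so that the product decomposes as $\mathsf{L}^{k_1}_{p_1}\times\cdots\times\mathsf{L}^{k_j}_{p_j}$ with a common consistency operator. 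Your approach buys self-containment and avoids exactly that restriction bookkeeping, since $\Delta$, $\lor$ and $\neg$ are interpreted coordinatewise by one and the same term on all components (the one point you rightly flag as needing care); the paper's approach buys brevity by outsourcing the core verification and keeps its operator uniform with the one already used in the proof of Theorem~\ref{teorX}. One small difference in scope: the paper additionally checks the clauses $\varphi, \circ\varphi \nvdash_L \psi$ and $\neg\varphi, \circ\varphi \nvdash_L \psi$ from the fuller Carnielli--Marcos definition of {\bf LFI}; your operator satisfies these as well (witnessed by evaluations sending $\varphi$ to $1$, respectively to $0$, and a fresh variable to $0$), so this is an omission of a routine check relative to the stronger definition, not a gap relative to the definition recalled in Section~\ref{sect1}.
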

\begin{proof}
Concerning the logics of Proposition~\ref{SufIdParLuk},
by~\cite[Proposition~6.3]{CoEsGiGo} we know that each logic $\mathsf{L}^i_n$ for $i/n \leq 1/2$ is an {\bf LFI} w.r.t. $\neg$, where the consistency operator is given by $\circ^i_n \alpha := {\sim}^i_n (\alpha \wedge \neg \alpha)$. Here, ${\sim}^i_n$ is the  unary connective defined as in the proof of Theorem~\ref{teorX}. Now, let
$$L=\langle \algLfin_{p_{1}+1}\times\cdots\times\algLfin_{p_{j}+1}, (F_{\frac{i}{n}})^{j}\cap (\domLfin_{p_{1}+1}\times\cdots\times \domLfin_{p_{j}+1}) \rangle$$
be one of the logics in Theorem~\ref{teorX} given by a product of matrix logics, for some  $\{p_{1},\ldots, p_{j}\}\subseteq X$. By definition of $X$, for every $1 \leq s \leq j$ there exists  $1 \leq k_s < p_s$ such that $k_s/p_s \leq 1/2$ and $\langle \algLfin_{p_{s}+1}, F_{\frac{i}{n}}\cap \domLfin_{p_{s}+1}\rangle = \mathsf{L}^{k_s}_{p_s}$. This means that
$L=\mathsf{L}^{k_1}_{p_1} \times \cdots \times \mathsf{L}^{k_j}_{p_j}$.
Using again~\cite[Proposition~6.3]{CoEsGiGo} it follows that each $\mathsf{L}^{k_s}_{p_s}$ is an  {\bf LFI} w.r.t. $\neg$, with consistency operator $\circ^{k_s}_{p_s}$ defined as above. It is immediate to see that ${\sim}^i_n$ restricted to $\domLfin_{p_{s}+1}$ coincides with ${\sim}^{k_s}_{p_s}$, and so $\circ^i_n$ restricted to $\domLfin_{p_{s}+1}$ coincides with $\circ^{k_s}_{p_s}$, for every $1 \leq s \leq j$. Therefore $L$ is an  {\bf LFI} w.r.t. $\neg$, with consistency operator given by $\circ\alpha :=\circ^i_n\alpha$.

Indeed, it is clear that $\varphi, \neg\varphi, \circ\varphi \vdash_L
\psi$ for every formulas $\varphi,\psi$. Let $q$ and $r$ be two
different propositional variables, and  let $e$ be an evaluation over
$L$ such that $e(q)=1$ and $e(r)=0$. This ensures that $q, \circ q
\nvdash_L r$. On the other hand, any evaluation $e'$ over $L$ such
that $e'(q)=e'(r)=0$ guarantees that $\neg q, \circ q \nvdash_L r$.
Hence, $L$ is an {\bf LFI} w.r.t. $\neg$ and $\circ$
(recall the definition of {\bf LFI}s in~\cite{{car:01, CCM, CC16}}).
\end{proof}

\section{Conclusions}

In this paper the G\"odel fuzzy logic G expanded with an involutive negation $\sim$, denoted G$_\sim$, together with  its finite-valued extensions G$_{n\sim}$, were studied from the point of view of paraconsistency. In order to do this,  the respective degree-preserving companions $\Gsimleq$  and $\Gnsimleq$ were analyzed given that, in contrast to  G$_\sim$ and G$_{n\sim}$, these logics are $\sim$-paracon\-sis\-tent. Observe that G coincides with G$^\leq$, since $\rm{G}$ satisfies the deduction-detachment theorem; hence, the addition of an involutive negation $\sim$ to $\rm{G}$ allows to develop such kind of study. The question of determining the lattice of intermediate logics between $\Gsimleq$ and CPL, as well as the logics between $\Gnsimleq$ and CPL, was addressed. After introducing the  concept of saturated paraconsistent logic, which is weaker than the notion of ideal paraconsistency, it was shown that  there are only three saturated paraconsistent logics {between $\Gnsimleq$ and CPL}, two of them ($\mathsf{J}_3$ and $\mathsf{J}_4$) being in fact ideal paraconsistent and the other (namely, $\mathsf{J}_3 \times \mathsf{J}_4$) being saturated but not ideal.
Finally, the study of finite valued {\L}ukasiewicz logic we started in~\cite{CoEsGiGo}  has been taken up again, in order to find additional  interesting examples of saturated but not ideal  paraconsistent logics.

As for future work we aim at performing similar studies for other locally finite fuzzy logics, in particular for the Nilpotent Minimum logic (NM) \cite{Esteva-Godo:Monoidal}, that combines and shares many features of both G\"odel and {\L}ukasiewicz logics. It is indeed logically equivalent to G\"odel logic with involution when NM is expanded with the Baaz-Monteiro operator $\Delta$.

\subsection*{Acknowledgments} The authors are indebted to two anonymous reviewers for their careful and insightful suggestions and remarks that have definitively helped to improve the paper. Gispert acknowledges partial support by the Spanish FEDER/MINECO projects (MTM2016-74892 and MDM-2014-044) and grant 2017-SGR-95 of
Generalitat de Catalunya. Esteva and Godo acknowlwdge partial support by the Spanish FEDER/MINECO project TIN2015-71799-C2-1-P.

\end{document}